\documentclass[a4paper,UKenglish,cleveref, autoref]{lipics-v2019}
%This is a template for producing LIPIcs articles. 
%See lipics-manual.pdf for further information.
%for A4 paper format use option "a4paper", for US-letter use option "letterpaper"
%for british hyphenation rules use option "UKenglish", for american hyphenation rules use option "USenglish"
%for section-numbered lemmas etc., use "numberwithinsect"
%for enabling cleveref support, use "cleveref"
%for enabling cleveref support, use "autoref"

%\graphicspath{{./graphics/}}%helpful if your graphic files are in another directory

\usepackage{wrapfig}
\usepackage{xspace}
\newcommand{\containment}{NP~$\subseteq$~coNP/poly\xspace}
\newcommand{\notcontainment}{NP~$\not \subseteq$~coNP/poly\xspace}
\newcommand{\vc}{\mathop{\textsc{vc}}}

\theoremstyle{plain}
\newtheorem*{proposition*}{Proposition}

\newtheorem*{theorem*}{Theorem}
\newtheorem*{lemma*}{Lemma}

\bibliographystyle{plainurl}% the mandatory bibstyle

\title{Preprocessing Vertex-Deletion Problems: Characterizing Graph Properties by Low-Rank Adjacencies} %TODO Please add

\titlerunning{Preprocessing Vertex-Deletion Problems}%optional, please use if title is longer than one line

\author{Bart M.P. Jansen}{Eindhoven University of Technology, The Netherlands}{b.m.p.jansen@tue.nl}{https://orcid.org/0000-0001-8204-1268}{}

\author{Jari J.H. de Kroon}{Eindhoven University of Technology, The Netherlands }{j.j.h.d.kroon@tue.nl}{https://orcid.org/0000-0003-3328-9712}{}%TODO mandatory, please use full name; only 1 author per \author macro; first two parameters are mandatory, other parameters can be empty. Please provide at least the name of the affiliation and the country. The full address is optional

\authorrunning{B.\,M.\,P. Jansen and J.\,J.\,H. de Kroon}%TODO mandatory. First: Use abbreviated first/middle names. Second (only in severe cases): Use first author plus 'et al.'

\Copyright{Bart M.P. Jansen and Jari J.H. de Kroon}%TODO mandatory, please use full first names. LIPIcs license is "CC-BY";  http://creativecommons.org/licenses/by/3.0/

\ccsdesc[100]{Theory of computation~Parameterized complexity and exact algorithms}
\ccsdesc[100]{Theory of computation~Graph algorithms analysis}
\ccsdesc[100]{Mathematics of computing~Graph theory}%TODO mandatory: Please choose ACM 2012 classifications from https://dl.acm.org/ccs/ccs_flat.cfm 

\keywords{kernelization, vertex-deletion, graph modification, structural parameterization}%TODO mandatory; please add comma-separated list of keywords

\category{}%optional, e.g. invited paper

\relatedversion{}%optional, e.g. full version hosted on arXiv, HAL, or other respository/website
%\relatedversion{A full version of the paper is available at \url{...}.}

\supplement{}%optional, e.g. related research data, source code, ... hosted on a repository like zenodo, figshare, GitHub, ...

% TODO: Put back before final version!
\funding{This project has received funding from the European Research Council (ERC) under the European Union's Horizon 2020 research and innovation programme (grant agreement No 803421, ReduceSearch).
\\
\includegraphics[width=2cm]{}
} %optional, to capture a funding statement, which applies to all authors. Please enter author specific funding statements as fifth argument of the \author macro.

\acknowledgements{We would like to thank Fedor V. Fomin for hosting Jari in Bergen (Norway). } % I removed 'for your master thesis' since it may give some negative bias for reviewers.

\nolinenumbers %uncomment to disable line numbering

\hideLIPIcs  %uncomment to remove references to LIPIcs series (logo, DOI, ...), e.g. when preparing a pre-final version to be uploaded to arXiv or another public repository

%Editor-only macros:: begin (do not touch as author)%%%%%%%%%%%%%%%%%%%%%%%%%%%%%%%%%%
\EventEditors{John Q. Open and Joan R. Access}
\EventNoEds{2}
\EventLongTitle{42nd Conference on Very Important Topics (CVIT 2016)}
\EventShortTitle{CVIT 2016}
\EventAcronym{CVIT}
\EventYear{2016}
\EventDate{December 24--27, 2016}
\EventLocation{Little Whinging, United Kingdom}
\EventLogo{}
\SeriesVolume{42}
\ArticleNo{23}
%%%%%%%%%%%%%%%%%%%%%%%%%%%%%%%%%%%%%%%%%%%%%%%%%%%%%%

%%%%% added by author %%%%%%%%%%%%%%%%%%%%%%%%%%%%%%
\usepackage{algorithm}
\usepackage{algorithmic}

\usepackage{todonotes}

%incfull, all sub/superscripts
\newcommand{\incf}[5]{\mathrm{inc}^{#1,(#4,#5)}_{(#2,#3)}}
%incproject, only projection
\newcommand{\incp}[2]{\mathrm{inc}^{(#1,#2)}}
%incsimple, no projection
\newcommand{\incs}[3]{\mathrm{inc}^{#1}_{(#2,#3)}}
%incdropped, no sub/superscripts
\newcommand{\incd}{\mathrm{inc}}

\newcommand{\defparproblem}[4]{
	\vspace{1mm}
	\noindent\fbox{
		\begin{minipage}{0.96\textwidth}
			\begin{tabular*}{\textwidth}{@{\extracolsep{\fill}}lr} \textsc{#1} & {\bf{Parameter:}} #3 \\ \end{tabular*}
			{\bf{Input:}} #2 \\
			{\bf{Question:}} #4
		\end{minipage}
	}
	\vspace{1mm}
}

\renewcommand{\vec}[1]{\mathbf{#1}}
%%%%%%%%%%%%%%%%%%%%%%%%%%%%%%%%%%%%%%%%%%%%%%%%%%%

\begin{document}

\maketitle

%TODO mandatory: add short abstract of the document
\begin{abstract}
We consider the \textsc{$\Pi$-free Deletion} problem parameterized by the size of a vertex cover, for a range of graph properties~$\Pi$. Given an input graph~$G$, this problem asks whether there is a subset of at most~$k$ vertices whose removal ensures the resulting graph does not contain a graph from $\Pi$ as induced subgraph. Many vertex-deletion problems such as \textsc{Perfect Deletion}, \textsc{Wheel-free Deletion}, and \textsc{Interval Deletion} fit into this framework. We introduce the concept of \emph{characterizing a graph property~$\Pi$ by low-rank adjacencies}, and use it as the cornerstone of a general kernelization theorem for \textsc{$\Pi$-Free Deletion} parameterized by the size of a vertex cover. The resulting framework captures problems such as \textsc{AT-Free Deletion},  \textsc{Wheel-free Deletion}, and \textsc{Interval Deletion}. Moreover, our new framework shows that the vertex-deletion problem to perfect graphs has a polynomial kernel when parameterized by vertex cover, thereby resolving an open question by Fomin et al.~[JCSS 2014]. Our main technical contribution shows how linear-algebraic dependence of suitably defined vectors over~$\mathbb{F}_2$ implies graph-theoretic statements about the presence of forbidden induced subgraphs.
\end{abstract}

\clearpage

\section{Introduction}\label{sec:introduction}
\subparagraph{Background}
This paper continues a long line of investigation~\cite{BodlaenderFLPST16,BodlaenderCrossComposition,FOMIN2014468,GiannopoulouForbiddenMinors,JansenP19,KernelizationBook2018}, aimed at answering the following question: how and when can an efficient preprocessing algorithm reduce the size of inputs to NP-hard problems, without changing their answers? This question can be framed and answered using the notion of kernelization, which originated in parameterized complexity theory. 

In parameterized complexity theory, the complexity analysis is done not only in the size of the input, but also in terms of another complexity measure related to the input. This complexity measure is called the \emph{parameter}. For graph problems, typical parameters are the size of a solution, the treewidth of the graph, or the size of a minimum vertex cover (the \emph{vertex cover number}). The latter two are often called structural parameterizations. A kernelization is a polynomial-time preprocessing algorithm with a performance guarantee. It reduces an instance $(x,k)$ of a parameterized problem to an instance $(x',k')$ that has an equivalent \textsc{yes}/\textsc{no} answer, such that $|x'|$ and $k'$ are bounded by $f(k)$ for some computable function $f$, called the size of the kernel. If $f$ is a polynomial function, the parameterized problem is said to admit a polynomial kernel. Polynomial kernels are highly sought after, as they allow problem instances to be reduced to a relatively small size.

We investigate polynomial kernels for the class of \emph{graph modification} problems, in an attempt to develop a widely applicable and generic kernelization framework. In graph modification problems, the goal is to make a small number of changes to an input graph to make it satisfy a certain property. Possible modifications are vertex deletions, edge deletions, and edge additions. In this work, we consider the problem of deleting a  bounded-size set of vertices such that the resulting graph does not contain certain graphs as an induced subgraph. 

The study of kernelization for graph modification problems parameterized by solution size has an interesting and rich history~\cite{AGRAWAL2019,BurrageEFLMR06,CaoRSY18,CyganPPLW17,FominLMS12,Iwata17,JansenP18,KratschW14}. However, some graph modification problems such as \textsc{Perfect Vertex Deletion}~\cite{Heggernes:2013:PCV:2560981.2561653} and \textsc{Wheel-free Vertex Deletion}~\cite{Lokshtanov08} are W[2]-hard parameterized by the solution size and therefore do not admit any kernels unless FPT = W[2]. Together with the intrinsic interest in obtaining generic kernelization theorems that apply to a large class of problems with a single parameter, this has triggered research into polynomial kernelization for graph problems under structural parameterizations~\cite{BougeretS19,FOMIN2014468,GiannopoulouForbiddenMinors,JansenP19,UhlmannW13} such as the vertex cover number. The latter parameter is often used for its mathematical elegance, and due to the fact that slightly less restrictive parameters such as the feedback vertex number already cause simple problems such as \textsc{3-Coloring} not to admit polynomial kernels~\cite{JansenK13}, under the standard assumption \notcontainment. This work therefore focuses on the following class of NP-hard~\cite{LEWIS1980219} parameterized problems, where~$\Pi$ is a fixed (possibly infinite) set of graphs:
\par
\defparproblem{$\Pi$\textsc{-free Deletion}}{A graph $G$, a vertex cover $X$ of $G$, and an integer $k$.}{$|X|$}{Does there exist a set $S \subseteq V(G)$ of size at most $k$ such that $G - S$ does not contain any graph from $\Pi$ as induced subgraph?}
\newline
The assumption that a vertex cover~$X$ is given in the input is for technical reasons. If the problem would be parameterized by an upper-bound on the vertex cover number of the graph, without giving such a vertex cover, then the kernelization algorithm would have to verify that this is indeed a correct upper bound; an NP-hard problem. Instead, in this setting we just want to allow the kernelization algorithm to exploit the structural restriction guaranteed by having a small vertex cover in the graph. We refer to the discussion by Fellows et al.~\cite[\S 2.2]{FellowsJR13} for more background. To apply the kernelization algorithms for problems defined in this way, one may simply use a $2$-approximate vertex cover as~$X$.

Fomin et al.~\cite{FOMIN2014468} have investigated characteristics of $\Pi$-\textsc{free Deletion} problems that admit a polynomial kernel parameterized by the size of a vertex cover. They introduced a generic framework that poses three conditions on the graph property $\Pi$, which are sufficient to reach a polynomial kernel for $\Pi$-\textsc{free Deletion} parameterized by vertex cover. Examples of graph properties that fit in their framework are for instance `having a chordless cycle of length at least 4' or `having an odd cycle'. This results in polynomial kernels for \textsc{Chordal Deletion} and \textsc{Odd Cycle Transversal} respectively.
\textsc{Interval Deletion} does not fit in this framework, even though interval graphs are hereditary. Agrawal et al.~\cite{AGRAWAL2019} show that it admits a polynomial kernel parameterized by solution size, and therefore also by vertex cover size. They introduced a linear-algebraic technique, which assigns a vector over~$\mathbb{F}_2$ to each vertex, to find an induced subgraph that preserves the size of an optimal solution by combining several disjoint bases of systems of such vectors. This formed the inspiration for our work, in which we improve the generic kernelization framework of Fomin et al.~\cite{FOMIN2014468} using linear-algebraic techniques inspired by the kernel~\cite{AGRAWAL2019} for \textsc{Interval Deletion}.
%This discrepancy shows that the framework does not draw the line for problems that do and do not admit polynomial kernels in the correct place. % Write that they introduce vector coding technique over GF2 to achieve the kernel, which inspired us to investigate whether that technique can be used to generalize the framework of Fomin et al.

%Results
\subparagraph{Results}

\begin{table}
\centering
\begin{tabular}{l c}
\hline
Problem & Vertices in kernel \\
\hline 
\textsc{Perfect Deletion} & $\mathcal{O}(|X|^5)$\\
\textsc{Even-hole-free Deletion} ($\bigstar$) & $\mathcal{O}(|X|^4)$\\
\textsc{AT-free Deletion} & $\mathcal{O}(|X|^9)$\\
\textsc{Interval Deletion} & $\mathcal{O}(|X|^9)$ \\
\textsc{Wheel-free Deletion} & $\mathcal{O}(|X|^5)$\\
\hline 
\end{tabular}
\caption{Kernels obtained by our framework for problems parameterized by a vertex cover $X$.}
\label{table:results}
\end{table}

We introduce the notion of \emph{characterizing a graph property~$\Pi$ by low-rank adjacencies}, and use it to generalize the kernelization framework by Fomin et al.~\cite{FOMIN2014468} significantly. The resulting kernelization algorithms consist of a single, conceptually simple reduction rule for \textsc{$\Pi$-free Deletion}, whose property-specific correctness proofs show how the linear dependence of suitably defined vectors implies certain graph-theoretic properties. This results in a simpler kernelization for \textsc{Interval Deletion} parameterized by vertex cover compared to the one by Agrawal et al.~\cite{AGRAWAL2019}. More importantly, several vertex-deletion problems whose kernelization complexity was previously open can be covered by the framework. These include \textsc{AT-free Deletion} (eliminate all asteroidal triples~\cite{KOHLER_ATFREE} from the graph), \textsc{Wheel-free Deletion}, and also \textsc{Perfect Deletion} which was an explicit open question of Fomin et al.~\cite[\S 5]{FOMIN2014468}. An overview is given in \cref{table:results}. Moreover, we give evidence that the distinguishing property of our framework (being able to characterize~$\Pi$ by low-rank adjacencies) is the right one to capture kernelization complexity. While the \textsc{Wheel-free Deletion} problem fits into our framework and therefore has a polynomial kernel, the situation is very different for the related problem \textsc{Almost Wheel-free Deletion} (ensure the resulting graph does not contain any wheel, except possibly~$W_4$). We prove the latter problem does not fit into our framework, and that it does not admit a polynomial kernel parameterized by vertex cover, unless \containment.

\subparagraph{Related work}
Even though the vertex cover is generally not small compared to the size of the input graph, it is not always the case that a polynomial kernel parameterized by vertex cover number exists. This was shown by Bodlaender et al.~\cite{BodlaenderCrossComposition}. They showed that for instance the \textsc{Clique} problem that asks whether a graph contains a clique of $k$ vertices, does not admit a polynomial kernel parameterized by the vertex cover size, unless coNP $\subseteq$ NP/poly.
\par
A graph is perfect if for every induced subgraph $H$, the chromatic
number of $H$ is equal to the size of the largest clique of $H$. Conjectured by Berge in 1961 and proven in the beginning of this century by Chudnovsky et al.~\cite{zbMATH05081753}, the strong perfect graph theorem states that a graph is perfect if and only if it is Berge. The forbidden induced subgraphs of Berge graphs (and hence of perfect graphs) are $C_{2k+1}$ and $\overline{C}_{2k+1}$ for $k \geq 2$, that is, induced cycles and their edge complements of odd length at least 5. 
%Since Berge graphs can be recognized in polynomial time~\cite{Chudnovsky*2005}, the \textsc{Perfect Deletion} problem is NP-complete. 
A survey of forbidden subgraph characterizations of some other hereditary graph classes is given in~\cite[Chapter 7]{GraphClassesSurvey}. 

%%leave this for full version
%\par
%For finite graph properties~$\Pi$, the $\Pi$-\textsc{free Deletion} problem parameterized by the solution size always admits a polynomial kernel. It can be modeled as an instance of $d$-\textsc{Hitting Set} and a kernel can be achieved via the so called sunflower lemma~\cite{Abu-Khzam:2010:KAD:1809455.1809683}.
%When edge addition and deletion are allowed modifications, the graph modification problem becomes more difficult. An edge deletion or addition, also called an edit, might create a new forbidden induced subgraph. Cai~\cite{CAI1996171} shows that this problem parameterized by the number of modifications is FPT for hereditary graph classes that are characterized by a finite number of forbidden induced subgraphs. Kratsch and Wahlstr\"om~\cite{KRATSCH2013193} give a single graph $H$ for which the \textsc{$H$-free Edge Deletion} and \textsc{$H$-free Edge Editing} problem do not admit a polynomial kernel parameterized by solution size. This again shows that edge modifications are more difficult to handle (cf.~\cite{GuillemotHPP13}).

%Organization
\subparagraph{Organization}
%The remainder of this work is organized as follows. 
In \cref{sec:preliminaries} we give preliminaries and definitions used throughout this work. In \cref{sec:framework} we introduce the framework. In \cref{sec:using} we show that several problems such as \textsc{Perfect Deletion} and \textsc{Interval Deletion} fit in this framework. Finally we conclude in \cref{sec:conclusion}. For statements marked $\bigstar$, the proof can be found in the appendix.

\section{Preliminaries}\label{sec:preliminaries}
%\subsection{Graphs and parameterized complexity}
\subparagraph{Notation}
For $i \in \mathbb{N}$, we denote the set $\{1,...,i\}$ by $[i]$. For a set $S$, we denote the set of subsets of size at most $k$ by $\binom{S}{\leq k} = \{S' \subseteq S \mid |S'| \leq k\}$. Similarly,~$\binom{S}{k}$ denotes the set of subsets of size exactly~$k$. 
We consider simple graphs that are unweighted and undirected without self-loops. A graph $G$ has vertex and edge sets $V(G)$ and $E(G)$ respectively. An edge between vertices $u,v \in V(G)$ is an unordered pair $\{u,v\}$. 
For a set of vertices $S \subseteq V(G)$, by $G[S]$ we denote the graph induced by $S$. For $v \in V(G)$ and $S \subseteq V(G)$, by $G - v$ and $G - S$ we mean the graphs $G[V(G) \setminus \{v\}]$ and $G[V(G) \setminus S]$ respectively. 
We denote the open neighborhood of $v \in V(G)$ by $N_G(v) = \{u \mid \{u,v\} \in E(G)\}$. When clear from context, we sometimes omit the subscript $G$.
For a graph $G$, let $\overline{G}$ be the edge complement graph of $G$ on the same vertex set, such that for distinct $u,v \in V(G)$ we have $\{u,v\} \in E(\overline{G})$ if and only if $\{u,v\} \notin E(G)$. 
The path graph on~$n$ vertices~$(v_1,...,v_n)$ is denoted by $P_n$. Similarly, the $n$-vertex cycle for~$n \geq 3$ is denoted by $C_n$. When $n \geq 4$, the graph $C_n$ is often called a \emph{hole}. 
For $n \geq 3$, the wheel $W_n$ of size $n$ is the graph on vertices $\{c,v_1,...,v_n\}$ such that $(v_1,...,v_n)$ is a cycle and $c$ is adjacent to $v_i$ for all $i \in [n]$. 
An asteroidal triple (AT) in a graph $G$ consists of three vertices such that every pair is connected by a path that avoids the neighborhood of the third. A vertex cover in a graph~$G$ is a set of vertices that contains at least one endpoint of every edge. The minimum size of a vertex cover in a graph $G$ is denoted by $\vc(G)$.

\subparagraph{Parameterized complexity}
A parameterized problem~\cite{CYGANPARALG2015,DowneyF13} is a language $Q \subseteq \Sigma^* \times \mathbb{N}$, where~$\Sigma$ is a finite alphabet. The notion of kernelization is formalized as follows.

\begin{definition}
Let $Q \subseteq \Sigma^* \times \mathbb{N}$ be a parameterized problem and let~$f \colon \mathbb{N} \to \mathbb{N}$ be a computable function. A \emph{kernelization} for~$Q$ of size~$f$ is an algorithm that, given an instance $(x,k) \in \Sigma^* \times \mathbb{N}$, outputs in time polynomial in $|x| + k$ an instance $(x',k')$ (known as the kernel) such that $(x,k) \in Q$ if and only if $(x',k') \in Q$ and such that $|x'|,k' \leq f(k)$. If $f$ is a polynomial function, then the algorithm is a \emph{polynomial kernelization}.
\end{definition}

% TODO: Consider making this a subsection for final version.
\subparagraph{Previous kernelization framework}
We state some of the results from the kernelization framework by Fomin et al.~\cite{FOMIN2014468} that forms the basis of this work. A graph property $\Pi$ is a (possibly infinite) set of graphs.

\begin{definition}[Definition 3, \cite{FOMIN2014468}]\label{def:characterization}
A graph property $\Pi$ is characterized by $c_{\Pi} \in \mathbb{N}$ adjacencies if for all graphs $G \in \Pi$, for every vertex $v \in V(G)$, there is a set $D \subseteq V(G) \setminus \{v\}$ of size at most $c_{\Pi}$ such that all graphs $G'$ which are obtained from $G$ by adding or removing edges between $v$ and vertices in $V(G)\setminus D$, are also contained in $\Pi$. 
\end{definition}
As an example, the graph property `having a chordless cycle of length at least 4' is characterized by 3 adjacencies. The graph property `not being an interval graph' is not characterized by a finite number of adjacencies. Other examples are given by Fomin et al.~\cite{FOMIN2014468}. 
\par
Any finite graph property $\Pi$ is trivially characterized by $\max_{G\in \Pi} |V(G)|-1$ adjacencies. We state the following easily verified fact without proof.
\begin{proposition} \label{prop:infiniteset_finitegraphs}
Let $\Pi'$ be the set of all graphs that contain a graph from a finite set $\Pi$ as induced subgraph. Then $\Pi'$ is characterized by $\max_{G\in \Pi} |V(G)|-1$ adjacencies.
\end{proposition}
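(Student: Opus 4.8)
The plan is to unwind \cref{def:characterization} directly for the property $\Pi'$ with the claimed value $c_{\Pi'} := \max_{G\in\Pi}|V(G)|-1$, exploiting the fact that membership of a graph in $\Pi'$ is certified by a \emph{small} induced subgraph. Concretely, I would fix an arbitrary $H \in \Pi'$ and an arbitrary vertex $v \in V(H)$; by definition of $\Pi'$ there is some $G \in \Pi$ and a vertex set $W \subseteq V(H)$ with $H[W] \cong G$, and since $G \in \Pi$ we have $|W| = |V(G)| \le c_{\Pi'}+1$. The set $D$ required by the definition will be chosen from this witness $W$, with a small case distinction on whether $v$ belongs to $W$.

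In the case $v \notin W$, I would take $D := \emptyset$, whose size $0$ is at most $c_{\Pi'}$. Any graph $H'$ obtained from $H$ by adding or removing edges between $v$ and vertices of $V(H)\setminus D$ differs from $H$ only in edges incident to $v$; since $v \notin W$, no edge with both endpoints in $W$ is changed, so $H'[W] = H[W] \cong G$ and hence $H' \in \Pi'$. In the case $v \in W$, I would take $D := W \setminus \{v\}$, which has size $|W|-1 \le c_{\Pi'}$. The edges that \cref{def:characterization} allows to be modified all run between $v$ and $V(H)\setminus D$; because $W\setminus\{v\} = D$, none of those edges has its other endpoint in $W$, so once more $H'[W] = H[W] \cong G$ for every admissible $H'$, and therefore $H' \in \Pi'$. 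Since these two cases are exhaustive, this establishes that $\Pi'$ is characterized by $c_{\Pi'}$ adjacencies.

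There is no genuine obstacle here: the argument is entirely elementary bookkeeping, and the proposition is the reason it is stated without proof in the paper. The only point that requires a little care is the precise reading of \cref{def:characterization} — namely that only edges \emph{incident to} $v$ may change, and only those going to $V(H)\setminus D$ — so that putting the remaining vertices of the witness subgraph into $D$ effectively "freezes" that subgraph under all allowed modifications. The size bound then falls out immediately from the fact that a witness for membership in $\Pi'$ has at most $\max_{G\in\Pi}|V(G)|$ vertices, so deleting $v$ (when it is one of them) leaves at most $\max_{G\in\Pi}|V(G)|-1$.
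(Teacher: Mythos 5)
Your proof is correct, and since the paper states this proposition as an easily verified fact without giving a proof, your argument is exactly the intended straightforward verification: put the witness vertices other than $v$ into $D$ (or take $D=\emptyset$ when $v$ is outside the witness), so that no allowed modification touches an edge of the witness subgraph. Nothing further is needed.
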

\par
A graph $G$ is vertex-minimal with respect to $\Pi$ if $G \in \Pi$ and for all $S \subsetneq V(G)$ the graph $G[S]$ is not contained in $\Pi$. The following framework can be used to get polynomial kernels for the \textsc{$\Pi$-free Deletion} problem parameterized by vertex cover.

\begin{theorem}[Theorem 2, \cite{FOMIN2014468}]\label{thm:oldframework}
If $\Pi$ is a graph property such that:
\begin{romanenumerate}
    \item $\Pi$ is characterized by $c_{\Pi}$ adjacencies,
    \item every graph in $\Pi$ contains at least one edge, and
    \item there is a non-decreasing polynomial $p \colon \mathbb{N} \rightarrow \mathbb{N}$ such that all graphs $G$ that are vertex-minimal with respect to $\Pi$ satisfy $|V(G)| \leq p(\vc(G))$,
\end{romanenumerate}
then $\Pi$-\textsc{free Deletion} parameterized by the vertex cover size $x$ admits a polynomial kernel with $\mathcal{O}((x+p(x))x^{c_{\Pi}})$ vertices.
\end{theorem}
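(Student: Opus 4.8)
The plan is to use a single reduction rule that deletes one vertex from the independent set $I := V(G)\setminus X$, applied exhaustively, where $x := |X|$. Before that, I would dispose of large~$k$ using condition~(ii): since every graph in~$\Pi$ has an edge while $G[I]$ is edgeless, $G - X$ is $\Pi$-free, so $X$ is always a solution. Hence if $k \ge x$ we may output a trivial constant-size yes-instance; otherwise we proceed assuming $k < x$, which is exactly what turns the $k$-dependent marking bound below into the claimed polynomial in $x$ and $p(x)$.

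For the rule itself, I would mark a bounded subset $M \subseteq I$ as follows: for every $Z \subseteq X$ with $|Z| \le c_\Pi$ and every $W \subseteq Z$, put into $M$ up to $k + p(x) + 1$ vertices $v \in I$ with $N_G(v)\cap Z = W$ (all of them if there are fewer). Since $c_\Pi$ is constant, $\binom{x}{\le c_\Pi} = \mathcal{O}(x^{c_\Pi})$, so $|M| = \mathcal{O}\bigl(x^{c_\Pi}(k+p(x))\bigr) = \mathcal{O}\bigl(x^{c_\Pi}(x+p(x))\bigr)$. The rule is: while $I\setminus M \ne \emptyset$, pick any $v \in I\setminus M$ and delete it. When it no longer applies, $I \subseteq M$, so the instance has $|X| + |M| = \mathcal{O}\bigl((x+p(x))x^{c_\Pi}\bigr)$ vertices and at most quadratically many edges, matching the bound; each application is polynomial-time and there are at most $|V(G)|$ of them.

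It then remains to prove that deleting an unmarked $v\in I\setminus M$ to form $G' := G-v$ preserves the answer. The forward direction is routine: a solution $S$ for $G$ yields the solution $S\setminus\{v\}$ for $G'$, as $G' - (S\setminus\{v\})$ is an induced subgraph of the $\Pi$-free graph $G-S$ and $\Pi$-freeness is hereditary. For the reverse direction, let $S'$ solve $G'$ with $|S'|\le k$; I claim $G - S'$ is $\Pi$-free. If not, choose $U \subseteq V(G)\setminus S'$ inclusion-minimal with $H := G[U] \in \Pi$, so $H$ is vertex-minimal with respect to $\Pi$; since $(G-v)-S' = G-(S'\cup\{v\})$ is $\Pi$-free, necessarily $v \in U$. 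As $U\cap I$ is independent, $U\cap X$ is a vertex cover of $H$, whence $\vc(H) \le x$ and condition~(iii) gives $|U| \le p(\vc(H)) \le p(x)$. Applying the characterization by $c_\Pi$ adjacencies to $H$ and $v$ yields a set $D \subseteq U\setminus\{v\}$, $|D|\le c_\Pi$, such that altering $v$'s adjacencies to $U\setminus(D\cup\{v\})$ keeps the graph in $\Pi$. Because $v\in I$ we have $N_G(v) \subseteq X$; set $Z := D\cap X$ and $W := N_G(v)\cap Z$, so $|Z|\le c_\Pi$ and $N_G(v)\cap D = W$. Since $v$ is itself an \emph{unmarked} vertex with $N_G(v)\cap Z = W$, the marking must have placed $k+p(x)+1$ vertices with this profile into $M$, all distinct from $v$; as $|S'\cup U| \le k + p(x)$, some such $v^* \in M$ avoids $S'\cup U$. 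Replacing $v$ by $v^*$ and setting $v^*$'s adjacency to $U\setminus(D\cup\{v\})$ to match $G$ produces a graph isomorphic to $H^* := G[(U\setminus\{v\})\cup\{v^*\}]$ — only adjacencies outside $D$ are changed, since $N_G(v^*)\cap D = N_G(v^*)\cap Z = W = N_G(v)\cap D$ — so $H^* \in \Pi$ by the characterization. But $v\notin V(H^*)$ and $V(H^*)\cap S' = \emptyset$, so $H^*$ is an induced subgraph from $\Pi$ in $(G-v)-S'$, contradicting that $S'$ solves $G'$.

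The main obstacle is precisely this substitution step: one needs a replacement vertex $v^*$ that simultaneously (a) induces a graph still in $\Pi$ — which is where the characterization by $c_\Pi$ adjacencies is used, with $Z = D\cap X$ acting as the set of "important" neighbors in $X$ — and (b) lies outside both the solution $S'$ and the forbidden subgraph's vertex set $U$, which is what dictates the marking threshold $|S'| + |U| + 1 \le k + p(x) + 1$ and why condition~(iii)'s bound $|U|\le p(x)$ is needed. The handling of large $k$, the size accounting, and the hereditary direction are all straightforward.
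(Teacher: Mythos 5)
Your proof is correct, and it is a genuinely different route from the one this paper takes. The paper never re-proves this statement directly: it is imported from Fomin et al.\ and, within the paper's own machinery, it follows as a special case of the new framework by combining \cref{lemma:c_to_rankc} (characterization by $c_\Pi$ adjacencies implies characterization by rank-$c_\Pi$ adjacencies with singleton replacements) with \cref{thm:framework}, whose kernel is produced by the linear-algebraic procedure of \cref{alg:reduce}, i.e.\ by retaining $\ell = k+1+p(|X|)$ disjoint bases of $c$-incidence vectors and arguing via a forbidden set $P$ minimizing $|P\setminus V(G')|$. You instead give the classical direct marking argument (essentially the original Fomin et al.\ proof): keep $k+p(x)+1$ vertices of the independent side for each adjacency trace $(Z,W)$ with $|Z|\le c_\Pi$, and in the correctness proof replace the deleted vertex $v$ inside a vertex-minimal forbidden subgraph $H$ by a marked vertex $v^*$ sharing its trace on $Z=D\cap X$, using that both $v$ and $v^*$ have all neighbors in $X$ so that $N_G(v)\cap D = N_G(v^*)\cap D = W$, and that $v^*$ can be chosen outside $S'\cup U$ because $|S'\cup U|\le k+p(x)$. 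Both yield the same $\mathcal{O}\bigl((x+p(x))x^{c_\Pi}\bigr)$ vertex bound; your counting ($\mathcal{O}(x^{c_\Pi})$ profiles times $k+p(x)+1$ representatives) is the transpose of the paper's ($k+1+p(x)$ bases of rank $\mathcal{O}(x^{c_\Pi})$). What your approach buys is an elementary, self-contained proof with a single-vertex replacement; what the paper's approach buys is generality, since the rank-based replacement also covers properties (e.g.\ $\Pi_P$, $\Pi_{AT}$, $\Pi_{W_{\geq 3}}$) that are not characterized by any finite number of adjacencies, where a per-trace marking scheme would not suffice. One minor point to state explicitly in your write-up: compute $M$ once in the original graph and note that deleted vertices are never marked, so the single-deletion correctness argument applies unchanged at every step of the exhaustive application.
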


\section{Framework based on low-rank adjacencies}\label{sec:framework}
\subsection{Incidence vectors and characterizations}
As a first step towards our kernelization framework for \textsc{$\Pi$-free Deletion}, we introduce an incidence vector definition ($\incd$) that characterizes the neighborhood of a given vertex. Compared to the vector encoding used by Agrawal et al.~\cite{AGRAWAL2019} for \textsc{Interval Deletion}, our vector definition differs because it supports arbitrarily large subsets (they consider subsets of size at most two), and because an entry of a vector simultaneously prescribes which neighbors should be present, and which neighbors should \emph{not} be present.

\begin{definition}[$c$-incidence vector]\label{def:inc}
Let $G$ be a graph with vertex cover $X$ and let $c \in \mathbb{N}$. Let $Q',R' \subseteq X$ such that $|Q'| + |R'| \leq c$. We define the \emph{$c$-incidence vector} $\incf{c}{G}{X}{Q'}{R'}(u)$ for a vertex $u \in V(G) \setminus X$ as a vector over $\mathbb{F}_2$ that has an entry for each $(Q,R) \in X \times X$ with $Q \cap R = \emptyset$ such that $|Q|+|R| \leq c$, $Q' \subseteq Q$ and $R' \subseteq R$. It is defined as follows: 

\begin{equation*}
\incf{c}{G}{X}{Q'}{R'}(u)[Q,R] =
    \begin{cases}
    1 & \text{if $N_G(u) \cap Q = \emptyset$ and $R \subseteq N_G(u)$,}\\
    0 & \text{otherwise.}
    \end{cases}
\end{equation*}

\end{definition}

We drop superscript $(Q',R')$ if both $Q'$ and $R'$ are empty sets. The intuition behind the superscript $(Q',R')$ is that it projects the entries of the full incidence vector~$\incs{c}{G}{X}$ to those for supersets of~$Q', R'$. The $c$-incidence vectors can be naturally summed coordinate-wise. For ease of presentation we do not define an explicit order on the coordinates of the vector, as any arbitrary but fixed ordering suffices. 

If the sum of some vectors equals some other vector with respect to a certain graph $G$, then this equality is preserved when decreasing $c$ or taking induced subgraphs of $G$.

\begin{proposition}\label{note:incsubgraph}\label{note:largerrank}
Let $G$ be a graph with vertex cover $X$, let $c \in \mathbb{N}$, and let $D \subseteq V(G)$ be disjoint from $X$. If $v \in V(G) \setminus(D \cup X)$ and $\incs{c}{G}{X} (v) = \sum_{u \in D} \incs{c}{G}{X} (u)$, then
\begin{itemize}
    \item $\incs{c'}{G}{X} (v) = \sum_{u \in D} \incs{c'}{G}{X} (u)$ for any $c' \leq c$, and
    \item $\incs{c}{H}{X \cap V(H)} (v) = \sum_{u \in D} \incs{c}{H}{X \cap V(H)} (u)$ for any induced subgraph $H$ of $G$ that contains $D$ and $v$.
\end{itemize}
\end{proposition}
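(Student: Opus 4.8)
The claim has two parts; both follow by a direct inspection of how the coordinates of the incidence vectors are defined, so the plan is to unfold \cref{def:inc} in each case and match coordinates.

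\textbf{Part 1 (decreasing $c$).} The plan is to observe that the coordinate set of $\incs{c'}{G}{X}(\cdot)$ is precisely the set of pairs $(Q,R) \in X \times X$ with $Q \cap R = \emptyset$ and $|Q|+|R| \leq c'$, which — since $c' \leq c$ — is a subset of the coordinate set of $\incs{c}{G}{X}(\cdot)$. For any such pair $(Q,R)$, the defining condition ``$N_G(u) \cap Q = \emptyset$ and $R \subseteq N_G(u)$'' does not depend on $c$ at all, so the value of the coordinate indexed by $(Q,R)$ is the same in $\incs{c}{G}{X}(u)$ and in $\incs{c'}{G}{X}(u)$ for every $u \in V(G) \setminus X$. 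Hence $\incs{c'}{G}{X}(\cdot)$ is literally the restriction (projection) of $\incs{c}{G}{X}(\cdot)$ to the coordinates with $|Q|+|R| \leq c'$. Since coordinate-wise summation over $\mathbb{F}_2$ commutes with restricting to a subset of coordinates, the hypothesis $\incs{c}{G}{X}(v) = \sum_{u \in D} \incs{c}{G}{X}(u)$ restricts to $\incs{c'}{G}{X}(v) = \sum_{u \in D} \incs{c'}{G}{X}(u)$.

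\textbf{Part 2 (passing to an induced subgraph $H$).} Here the main point to check is that taking an induced subgraph affects the coordinate \emph{indices} and the \emph{neighborhoods} in a compatible way. Let $X_H := X \cap V(H)$; this is a vertex cover of $H$ because $H$ is induced and every edge of $H$ is an edge of $G$. The coordinate set of $\incs{c}{H}{X_H}(\cdot)$ consists of pairs $(Q,R)$ with $Q, R \subseteq X_H$, $Q \cap R = \emptyset$, $|Q|+|R| \leq c$ — so it is the subset of the coordinate set of $\incs{c}{G}{X}(\cdot)$ consisting of those pairs that happen to live inside $X_H$. Now fix such a pair $(Q,R)$ with $Q \cup R \subseteq X_H$, and fix $u \in V(H) \setminus X_H \subseteq V(G) \setminus X$. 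Since $H$ is an induced subgraph, $N_H(u) = N_G(u) \cap V(H)$; and because $Q, R \subseteq V(H)$, we get $N_H(u) \cap Q = N_G(u) \cap Q$ and ``$R \subseteq N_H(u)$'' iff ``$R \subseteq N_G(u)$''. Therefore the coordinate indexed by $(Q,R)$ has the same value in $\incs{c}{H}{X_H}(u)$ as in $\incs{c}{G}{X}(u)$. So again $\incs{c}{H}{X_H}(\cdot)$ is the restriction of $\incs{c}{G}{X}(\cdot)$ to a subset of coordinates (those with both parts in $X_H$), and since $D \cup \{v\} \subseteq V(H)$ and $D \cup \{v\}$ is disjoint from $X$, all vectors in the hypothesis are defined over $H$ as well; restricting the $\mathbb{F}_2$-equality $\incs{c}{G}{X}(v) = \sum_{u \in D} \incs{c}{G}{X}(u)$ to these coordinates yields the desired identity over $H$.

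\textbf{Expected obstacle.} There is no real obstacle — the statement is a bookkeeping lemma — but the one place that needs care is making the ``restriction of coordinates'' argument precise. In Part 2 one must be slightly careful that the coordinate set of $\incs{c}{H}{X_H}$ is genuinely a \emph{sub}set of that of $\incs{c}{G}{X}$ (it is, because $X_H \subseteq X$), and that the values of the shared coordinates agree (they do, by the induced-subgraph identity $N_H(u) = N_G(u) \cap V(H)$ together with $Q,R \subseteq V(H)$); one should not mistakenly try to compare coordinates $(Q,R)$ with $Q$ or $R$ meeting $X \setminus V(H)$, as those simply do not exist on the $H$ side. Once this is spelled out, both bullets follow immediately because projecting a vector equation over $\mathbb{F}_2$ onto any fixed subset of coordinates preserves the equation.
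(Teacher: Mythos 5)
Your proposal is correct and follows essentially the same route as the paper's own proof: both parts are handled by observing that the smaller-rank vector (respectively, the vector with respect to the induced subgraph $H$ and cover $X \cap V(H)$) is a coordinate projection of the original vector with unchanged entries, using $N_H(u)=N_G(u)\cap V(H)$ together with $Q,R\subseteq X\cap V(H)$ for the second part, so the $\mathbb{F}_2$-identity is preserved under restriction. No gaps; nothing further to add.
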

\begin{proof}
For the first point, observe that for any vertex~$v \notin (D \cup X)$, the vector $\incs{c'}{G}{X}(v)$ is simply a projection of $\incs{c}{G}{X}(v)$ to a subset of its coordinates. Hence if the complete vector of~$v$ is equal to the sum of the complete vectors of~$u \in D$, then projecting the vector of both~$v$ and of the sum to the same set of coordinates, yields identical vectors.

For the second point, observe that since~$X$ is a vertex cover of~$G$, we have~$N_G(v) \subseteq X$ for all~$v \in V(G) \setminus X$. Moreover, if~$H$ is an induced subgraph of~$G$ containing~$D$ and~$v$, then~$X_H := X \cap V(H)$ is a vertex cover of~$H$. Hence for any~$u \in V(H) \setminus X_H$ the $c$-incidence vector $\incs{c}{H}{X \cap V(H)}(u)$ is well-defined. If~$Q,R$ are disjoint sets for which $\incs{c}{H}{X \cap V(H)}(u)[Q,R]$ is defined, then~$Q,R \subseteq X_H$, so the adjacencies between~$u$ and~$Q \cup R$ in the induced subgraph~$H$ are identical to those in~$G$, which implies~$\incs{c}{G}{X}(u)[Q,R] = \incs{c}{H}{X \cap V(H)}(u)[Q,R]$. Hence when we replace a $c$-incidence vector with subscript~$(G,X)$ by a vector with subscript $(H,X \cap V(H))$, we essentially project the vector to a subset of its coordinates without changing any values. For the same reason as above, this preserves the fact that the vectors of~$D$ sum to that of~$v$.
\end{proof}

We are ready to introduce the main definition, namely characterization of a graph property $\Pi$ by rank-$c$ adjacencies for some $c \in \mathbb{N}$. In our framework, this replaces characterization by $c$ adjacencies in the framework of Fomin et al.~\cite{FOMIN2014468} (\cref{thm:oldframework}). 

\begin{definition}[rank-$c$ adjacencies]\label{def:rankc}
Let $c \in \mathbb{N}$ be a natural number. Graph property $\Pi$ is characterized by rank-$c$ adjacencies if the following holds. For each graph $H$, for each vertex cover $X$ of $H$, for each set $D \subseteq V(H) \setminus X$, for each $v \in V(H) \setminus (D \cup X)$, if 
\begin{itemize}
    \item $H - D \in \Pi$, and
    \item $\incs{c}{H}{X} (v) = \sum_{u \in D} \incs{c}{H}{X} (u)$ when evaluated over~$\mathbb{F}_2$,
\end{itemize}
then there exists $D' \subseteq D$ such that $H - v - (D \setminus D') \in \Pi$. If there always exists such set $D'$ of size 1, then we say $\Pi$ is characterized by rank-$c$ adjacencies with \emph{singleton replacements}.

\end{definition}
Intuitively, the definition demands that if we have a set $D$ such that $H-D \in \Pi$, and the $c$-incidence vectors of $D$ sum to the vector of some vertex $v$ over $\mathbb{F}_2$, then there exists $D' \subseteq D$ such that removing $v$ from $H-D$ and adding back $D'$ results in a graph that is still contained in $\Pi$. For example, in \cref{sec:perfect} we show that the graph property `containing an odd hole or odd-anti-hole' is characterized by rank-4 adjacencies. Using our framework, this leads to a polynomial kernel for \textsc{Perfect Deletion} parameterized by vertex cover. Other examples of graph properties which are characterized by a rank-$c$ adjacencies for some~$c \in \mathcal{O}(1)$ include `containing a cycle' and `being wheel-free'. On the other hand, we will show in Theorem~\ref{thm:awfnorankc} that the property `containing an induced wheel whose size is 3 or at least 5' cannot be characterized by rank-$c$ adjacencies for any finite~$c$.

\subsection{A generic kernelization}
Our kernelization framework for \textsc{$\Pi$-free Deletion} relies on a single reduction rule presented in \cref{alg:reduce}. It assigns an incidence vector to every vertex outside the vertex cover and uses linear algebra to select vertices to store in the kernel. Let us therefore recall the relevant algebraic background. A \emph{basis} of a set~$S$ of $d$-dimensional vectors over a field~$\mathbb{F}$ is a minimum-size subset~$B \subseteq S$ such that all~$\vec{v} \in S$ can be expressed as linear combinations of elements of~$B$, i.e.,~$\vec{v} = \sum _{\vec{u} \in B} \alpha_{\vec{u}} \cdot \vec{u}$ for a suitable choice of coefficients~$\alpha_{\vec{u}} \in \mathbb{F}$. When working over the field~$\mathbb{F}_2$, the only possible coefficients are~$0$ and~$1$, which gives a basis~$B$ of~$S$ the stronger property that any vector~$\vec{v} \in S$ can be written as~$\sum_{\vec{u} \in B'} \vec{u}$, where~$B' \subseteq B$ consists of those vectors which get a coefficient of~$1$ in the linear combination.

Our reduction algorithm repeatedly computes a basis of the incidence vectors of the remaining set of vertices, and stores the vertices corresponding to the basis in the kernel.

\begin{algorithm}
\caption{Reduce (Graph $G$, vertex cover $X$ of $G$, $\ell \in \mathbb{N}$, $c \in \mathbb{N}$)}
\label{alg:reduce}
\begin{algorithmic}[1]
\STATE Let $Y_1 := V(G) \setminus X$.
\FOR{$i \leftarrow 1$ to $\ell$}
\STATE Let $V_i = \{\incs{c}{G}{X}(y) \mid y \in Y_i\}$ and compute a basis $B_i$ of $V_i$ over $\mathbb{F}_2$. \label{alg:reduce:basisline}
\STATE For each $\vec{v} \in B_i$, choose a unique vertex $y_{\vec{v}} \in Y_i$ such that $\vec{v} = \incs{c}{G}{X}(y_{\vec{v}})$.
\STATE Let $A_i := \{y_{\vec{v}} \mid \vec{v} \in B_i\}$ and $Y_{i+1} = Y_i \setminus A_i$.
\ENDFOR
\RETURN $G[X \cup \bigcup_{i=1}^\ell A_i]$
\end{algorithmic}
\end{algorithm}

\begin{proposition}\label{prop:reducetimegraphsize}
For a fixed $c \in \mathbb{N}$, \cref{alg:reduce} runs in polynomial time in terms of $\ell$ and the size of the graph, and returns a graph on $\mathcal{O}(|X| + \ell \cdot |X|^c)$ vertices.
\end{proposition}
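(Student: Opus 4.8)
The plan is to bound, for a single iteration of the for-loop, both the running time and the number of vertices placed in the kernel, and then multiply by the number~$\ell$ of iterations. The crux is a bound on the dimension~$d$ of the incidence vectors. By \cref{def:inc}, the vector $\incs{c}{G}{X}(u)$ has one coordinate for every ordered pair $(Q,R)$ of disjoint subsets of~$X$ with $|Q|+|R| \leq c$. The number of such pairs equals $\sum_{j=0}^{c} \binom{|X|}{j} 2^j$, since one may pick the $j$-element set $Q \cup R$ and then decide for each of its elements whether it lies in~$Q$ or in~$R$; for fixed~$c$ this sum is~$\mathcal{O}(|X|^c)$, dominated by the term~$j = c$. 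Hence every~$V_i$ is a subset of~$\mathbb{F}_2^d$ with~$d = \mathcal{O}(|X|^c)$, so any basis~$B_i$ of~$V_i$ satisfies~$|B_i| \leq d$. Since distinct basis vectors are assigned distinct representative vertices, $|A_i| = |B_i| \leq d$ as well.

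For the vertex count, the algorithm returns~$G[X \cup \bigcup_{i=1}^{\ell} A_i]$, which has at most~$|X| + \sum_{i=1}^{\ell}|A_i| \leq |X| + \ell \cdot d = \mathcal{O}(|X| + \ell \cdot |X|^c)$ vertices, as claimed. For the running time I would argue that each iteration is polynomial in the size of~$G$ (for fixed~$c$): computing~$V_i$ amounts to evaluating, for each of the at most~$|V(G)|$ vertices~$y \in Y_i$, the~$d$ coordinates of~$\incs{c}{G}{X}(y)$, and each coordinate is decided by inspecting the adjacencies between~$y$ and a set of at most~$c$ vertices of~$X$; computing a basis~$B_i$ of at most~$|V(G)|$ vectors of length~$d$ over~$\mathbb{F}_2$ is Gaussian elimination, which is polynomial in~$|V(G)|$ and~$d$; and selecting the representatives~$y_{\vec{v}}$ and forming~$Y_{i+1}$ is clearly polynomial. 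As~$d$ is polynomial in~$|X| \leq |V(G)|$ for fixed~$c$, and there are~$\ell$ iterations, the total running time is polynomial in~$\ell$ and the size of~$G$.

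The argument is essentially routine, so I do not expect a genuine obstacle; the only step requiring a little care is the coordinate count establishing~$d = \mathcal{O}(|X|^c)$, where one must remember that the coordinates are indexed by \emph{ordered} pairs of disjoint subsets rather than by single subsets of~$X$, which is why the bound carries the extra constant factor~$2^c$ that is absorbed into the~$\mathcal{O}$-notation for fixed~$c$.
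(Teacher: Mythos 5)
Your proof is correct and follows essentially the same route as the paper: bound the number of coordinates of the $c$-incidence vectors by $\mathcal{O}(|X|^c)$ for fixed $c$, conclude that each basis (and hence each $A_i$) has $\mathcal{O}(|X|^c)$ elements, sum over the $\ell$ iterations for the vertex bound, and invoke Gaussian elimination plus straightforward bookkeeping for the polynomial running time. The only cosmetic difference is that you count the coordinates exactly as $\sum_{j=0}^{c}\binom{|X|}{j}2^j$, whereas the paper uses the cruder bound $2^c\cdot\bigl|\binom{X}{\leq c}\bigr|$; both give the same $\mathcal{O}(|X|^c)$ estimate.
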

\begin{proof}
Observe that for each $i$, the vectors in $V_i$ have at most $2^c \cdot |\binom{X}{\leq c}| = \mathcal{O} (|X|^c)$ entries and therefore the rank of the vector space is $\mathcal{O}(|X|^c)$. Hence each computed basis contains $\mathcal{O}(|X|^c)$ vectors. For constant $c$, this means that each basis can be computed in polynomial time using Gaussian elimination. The remaining operations can be done in polynomial time in terms of $\ell$ and the size of the graph. Since~$|A_i| \in \mathcal{O}(|X|^c)$ for each~$i \in [\ell]$, the resulting graph has~$\mathcal{O}(|X| + \ell \cdot |X|^c)$ vertices.
\end{proof}

\begin{theorem}\label{thm:framework}
If $\Pi$ is a graph property such that:
\begin{romanenumerate}
\item $\Pi$ is characterized by rank-$c$ adjacencies,
\item \label{enum:framework2} every graph in $\Pi$ contains at least one edge, and
\item \label{enum:framework3}there is a non-decreasing polynomial $p: \mathbb{N} \rightarrow \mathbb{N}$ such that all graphs $G$ that are vertex-minimal with respect to $\Pi$ satisfy $|V(G)| \leq p(\vc(G))$,
\end{romanenumerate}
then \textsc{$\Pi$-free Deletion} parameterized by the vertex cover size $x$ admits a polynomial kernel on $\mathcal{O}((x+p(x))\cdot x^c)$ vertices.
\end{theorem}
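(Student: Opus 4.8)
The plan is to run \cref{alg:reduce} on $(G, X)$ with the given constant $c$ and with $\ell := x + p(x) + 1$, after first disposing of the trivial case $k \ge x$: there $S := X$ is a solution, since $X$ is a vertex cover so $G - X$ is edgeless, and an edgeless graph is $\Pi$-free because every graph in $\Pi$ contains an edge; so the instance is a \textsc{yes}-instance and we output a constant-size \textsc{yes}-instance. Otherwise we output $(G', X, k)$ with $G' := G[X \cup \bigcup_{i=1}^{\ell} A_i]$ the graph returned by \cref{alg:reduce}. By \cref{prop:reducetimegraphsize} this runs in polynomial time and $G'$ has $\mathcal{O}(x + \ell \cdot x^c) = \mathcal{O}((x + p(x)) \cdot x^c)$ vertices, so the remaining task is the equivalence $(G,X,k) \in Q \Leftrightarrow (G',X,k) \in Q$. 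One direction is easy: $G'$ is an induced subgraph of $G$ that contains $X$, so $X$ remains a vertex cover of $G'$, and if $S$ is a solution for $G$ then $S \cap V(G')$ is a solution for $G'$, because $G' - (S\cap V(G'))$ is an induced subgraph of the $\Pi$-free graph $G - S$ and $\Pi$-freeness is inherited by induced subgraphs.

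The substance is the reverse direction. Let $S' \subseteq V(G')$ with $|S'| \le k$ be a solution for $G'$; I claim $S'$ is also a solution for $G$. Suppose not, so $G - S'$ contains some induced subgraph belonging to $\Pi$. I will transform such a subgraph step by step into an induced subgraph of $G' - S'$ that still lies in $\Pi$, contradicting the choice of $S'$. The invariant I maintain is a graph $F_j$ that is a \emph{vertex-minimal} member of $\Pi$ and an induced subgraph of $G - S'$; since $X \cap V(F_j)$ is a vertex cover of $F_j$, the hypothesis on vertex-minimal graphs forces $|V(F_j)| \le p(\vc(F_j)) \le p(x)$. Call a vertex of $F_j$ \emph{bad} if it lies outside $V(G')$ (such vertices are necessarily outside $X$ too), and let $\beta_j$ denote the number of bad vertices. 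If $\beta_j = 0$, then $V(F_j) \subseteq V(G')$ and $V(F_j) \cap S' = \emptyset$, so $F_j$ witnesses that $G' - S'$ is not $\Pi$-free --- the desired contradiction. Otherwise I fix one bad vertex $v$ and, as described next, replace it to obtain $F_{j+1}$ with $\beta_{j+1} \le \beta_j - 1$; since $\beta_0 \le p(x)$, after at most $p(x)$ such steps $\beta$ reaches $0$.

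For the replacement step, observe that $v \notin V(G') = X \cup \bigcup_i A_i$, so $v$ is never selected by \cref{alg:reduce} and thus $v \in Y_i$ for every $i \in [\ell]$. Hence $\incs{c}{G}{X}(v) \in V_i$ for every $i$, and as $B_i$ is a basis of $V_i$ over $\mathbb{F}_2$, we get $\incs{c}{G}{X}(v) = \sum_{u \in D_i}\incs{c}{G}{X}(u)$ for some $D_i \subseteq A_i$. The set $Z := S' \cup V(F_j)$ has size at most $k + p(x) < \ell$; since $A_1, \dots, A_\ell$ are pairwise disjoint (as $A_i \subseteq Y_i$ and $Y_{i+1} = Y_i \setminus A_i$), $Z$ meets at most $|Z|$ of them, so some index $i^*$ satisfies $A_{i^*} \cap Z = \emptyset$. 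Put $D := D_{i^*} \subseteq A_{i^*}$ and $H := G[V(F_j) \cup D]$. Because $D$ is disjoint from $X$ and from $V(F_j)$, while $v \in V(F_j) \setminus X$, the tuple $(H,\, X \cap V(H),\, D,\, v)$ meets the hypotheses of \cref{def:rankc}: indeed $H - D = G[V(F_j)] = F_j \in \Pi$, and \cref{note:incsubgraph} transports the identity above to $\incs{c}{H}{X\cap V(H)}(v) = \sum_{u\in D}\incs{c}{H}{X\cap V(H)}(u)$. Therefore there is $D' \subseteq D$ with $H - v - (D \setminus D') = G[(V(F_j) \setminus \{v\}) \cup D'] \in \Pi$; this graph still avoids $S'$ (as $D' \subseteq A_{i^*}$ is disjoint from $Z \supseteq S'$) and has $\beta_j - 1$ bad vertices, since we removed the bad vertex $v$ and added only vertices of $A_{i^*} \subseteq V(G')$. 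Passing to any vertex-minimal induced subgraph of it that lies in $\Pi$ yields $F_{j+1}$, which re-establishes all invariants.

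I expect the heart of the difficulty to be this reverse direction, and specifically the combination of two points. First, that a forbidden set of size $m$ can ``spoil'' at most $m$ of the pairwise disjoint layers $A_1, \dots, A_\ell$, which is precisely what lets a single run of \cref{alg:reduce} with a polynomially bounded number of layers suffice. Second, that re-minimizing $F_j$ after each replacement keeps $|V(F_j)|$---and hence the number $\ell$ of layers one needs---at $\mathcal{O}(x + p(x))$, rather than letting it grow with the (polynomially large) sizes of the bases $B_i$. Once the disjointness of $D$ from $X$ and from $V(F_j)$ is arranged, invoking \cref{def:rankc} together with \cref{note:incsubgraph} on the auxiliary graph $H$ is routine.
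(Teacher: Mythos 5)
Your proposal is correct and follows essentially the same route as the paper: dispose of the case $k \ge |X|$ via condition (ii), run the reduction with polynomially many disjoint layers $A_i$, use the counting argument to find a layer avoiding the solution and the (vertex-minimal, hence size-$\le p(x)$) forbidden set, express the missing vertex's incidence vector as an $\mathbb{F}_2$-sum over that layer, and invoke the rank-$c$ characterization together with \cref{note:incsubgraph} on the auxiliary graph $H$. The only difference is presentational: the paper chooses a single forbidden set minimizing the number of vertices outside $G'$ and derives a contradiction from one replacement, whereas you iterate replacements while the count of ``bad'' vertices strictly decreases, which is the same argument phrased as an induction.
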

\begin{proof}
Consider an instance $(G,X,k)$ of \textsc{$\Pi$-free Deletion}. Note that if $k \geq |X|$, then we can delete the entire vertex cover to get an edgeless graph, which is $\Pi$-free by \eqref{enum:framework2}, and therefore we may output a constant size \textsc{yes}-instance as the kernel. If $k < |X|$, let $G'$ be the graph obtained by the procedure \textsc{Reduce}($G$,$X$,$\ell := k+1+p(|X|)$,$c$). By \cref{prop:reducetimegraphsize} this can be done in polynomial time and the resulting graph contains $\mathcal{O}((|X|+p(|X|))\cdot |X|^c)$ vertices. All that is left to show is that the instance $(G',X,k)$ is equivalent to the original instance.  Since $G'$ is an induced subgraph of $G$, it follows that if $(G,X,k)$ is a \textsc{yes}-instance, then so is $(G',X,k)$. In the other direction, suppose that $(G',X,k)$ is a \textsc{yes}-instance with solution $S$. We show that $S$ also is a solution for the original instance.

For the sake of contradiction assume that this is not the case. Then the graph $G-S$ contains an induced subgraph that belongs to $\Pi$. Let $P$ be a minimal set of vertices of $G-S$ for which $G[P] \in \Pi$ and that minimizes $|P \setminus V(G')|$. 
Since $S$ is a solution for $(G',X,k)$, it follows that there exists a vertex $v \in P \setminus V(G')$. Moreover we have that $v \notin X$, since the graph~$G'$ returned by \cref{alg:reduce} contains all vertices of~$X$. 
The set $P \cap X$ is a vertex cover for $G[P]$, therefore by property \eqref{enum:framework3} we have that $|P| \leq p(\vc(G[P])) \leq p(|X|)$. Since the vertex sets~$A_1, \ldots, A_\ell$ computed in the \textsc{Reduce} operation are disjoint, and since $|S| \leq k$, it follows that there exists an $i \in [k + 1 + p(|X|)]$ such that the set of vertices $A_i$ corresponding to basis $B_i$ is disjoint from both $S$ and $P$.

As $v \notin V(G')$ implies~$v \notin \bigcup_{i=1}^\ell A_i$, in each iteration of line~\ref{alg:reduce:basisline} the vectors of the computed vertex set~$A_i$ span the vector of~$v$. 
%in every iteration of line~\ref{alg:reduce:basisline} of \cref{alg:reduce} we computed a basis that does not contain the vector of $v$, which therefore spans the vector of $v$. 
Hence, since we work over $\mathbb{F}_2$, there exists $D \subseteq A_i \subseteq V(G')$ such that $\incs{c}{G}{X}(v) = \sum_{u \in D} \incs{c}{G}{X}(u)$.
Consider the graph $H := G[P \cup D]$. Since $H$ is an induced subgraph that includes $D$ and $D$ is disjoint from $X$, by \cref{note:incsubgraph} it follows that $\incs{c}{H}{X \cap V(H)}(v) = \sum_{u \in D} \incs{c}{H}{X \cap V(H)}(u)$. Moreover $H-D \in \Pi$ as $H-D=G[P]$.
By the definition of rank-$c$ adjacencies it follows that there exists $D' \subseteq D$ such that $P' = H - v - (D \setminus D') \in \Pi$. But since $|P'\setminus V(G')| < |P \setminus V(G')|$, this contradicts the minimality of $P$. Therefore $S$ must be a solution for the original instance.
\end{proof}

\subsection{Properties of low-rank adjacencies}
In this section we present several technical lemmata dealing with low-rank adjacencies. These will be useful when applying the framework to various graph properties. The next lemma shows that if $\Pi$ is characterized by low-rank adjacencies with singleton replacements, then the edge-complement graphs are as well.

\begin{lemma}\label{lemma:complementgraphprop}
Let $\Pi$ be a graph property that is characterized by rank-$c$ adjacencies with singleton replacements. Let $\overline{\Pi}$ be the graph property such that $G \in \Pi$ if and only if $\overline{G} \in \overline{\Pi}$. Then $\overline{\Pi}$ is characterized by rank-$c$ adjacencies with singleton replacements.
\end{lemma}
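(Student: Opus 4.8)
The plan is to translate the rank-$c$ adjacency condition for $\overline{\Pi}$ into the corresponding condition for $\Pi$ by passing to edge-complements. Fix a graph $H'$, a vertex cover $X'$ of $H'$, a set $D \subseteq V(H') \setminus X'$, and a vertex $v \in V(H') \setminus (D \cup X')$ such that $H' - D \in \overline{\Pi}$ and $\incs{c}{H'}{X'}(v) = \sum_{u \in D} \incs{c}{H'}{X'}(u)$ over $\mathbb{F}_2$; we must produce a singleton $D' \subseteq D$ with $H' - v - (D \setminus D') \in \overline{\Pi}$. By definition of $\overline{\Pi}$, and because edge-complementation commutes with vertex deletion, this is equivalent to showing $\overline{H'} - v - (D \setminus D') \in \Pi$, while we already know $\overline{H'} - D = \overline{H' - D} \in \Pi$. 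One would like to apply the rank-$c$ adjacency property of $\Pi$ directly to $\overline{H'}$, but the single genuine obstacle is that $X'$ need not be a vertex cover of $\overline{H'}$: since $V(H') \setminus X'$ is independent in $H'$, the set $D \cup \{v\}$ is a clique in $\overline{H'}$ and hence cannot be placed outside any vertex cover of $\overline{H'}$.

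To circumvent this, I would apply the hypothesis on $\Pi$ not to $\overline{H'}$ itself but to the graph $H$ obtained from $\overline{H'}$ by deleting all edges with both endpoints in $D \cup \{v\}$, paired with the vertex cover $X := V(H') \setminus (D \cup \{v\})$. Then $X \supseteq X'$, the set $V(H) \setminus X$ equals $D \cup \{v\}$, which is now independent in $H$, so $X$ is indeed a vertex cover of $H$ with $D$ and $v$ outside it. Since every deleted edge has both endpoints in $D \cup \{v\}$, removing all of $D$, or removing $v$ together with $D \setminus D'$, leaves at most one of these endpoints behind, so $H - D = \overline{H'} - D \in \Pi$ and $H - v - (D \setminus D') = \overline{H'} - v - (D \setminus D') = \overline{H' - v - (D \setminus D')}$ for every $D' \subseteq D$.

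The crux is then verifying $\incs{c}{H}{X}(v) = \sum_{u \in D} \incs{c}{H}{X}(u)$ over $\mathbb{F}_2$, for which I would use two facts about a vertex $u \in D \cup \{v\}$. First, the deleted edges are disjoint from $X'$, so $N_H(u) \cap X' = N_{\overline{H'}}(u) \cap X' = X' \setminus N_{H'}(u)$; it follows that for every coordinate $(Q,R)$ with $Q, R \subseteq X'$ we have $\incs{c}{H}{X}(u)[Q,R] = \incs{c}{H'}{X'}(u)[R,Q]$, so the given identity over $\mathbb{F}_2$, read with the roles of $Q$ and $R$ swapped, already gives the desired identity on these coordinates. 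Second, writing $W := X \setminus X' = V(H') \setminus (X' \cup D \cup \{v\})$, every vertex of $D \cup \{v\}$ is non-adjacent in $H'$ to every vertex of $W$ (both lie outside $X'$) and therefore adjacent to all of $W$ in $H$, so $W \subseteq N_H(u)$; this forces $\incs{c}{H}{X}(u)[Q,R] = 0$ whenever $Q \cap W \neq \emptyset$, and $\incs{c}{H}{X}(u)[Q,R] = \incs{c}{H}{X}(u)[Q, R \setminus W]$ in general. These two facts reduce every coordinate of the identity for $(H,X)$ either to a trivially-zero coordinate or to one with both index sets inside $X'$, which is handled by the first fact.

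Finally, applying the rank-$c$ adjacency property of $\Pi$ with singleton replacements to $(H, X, D, v)$ produces a singleton $D' \subseteq D$ with $H - v - (D \setminus D') \in \Pi$, and by the equality recorded above this reads $\overline{H' - v - (D \setminus D')} \in \Pi$, i.e.\ $H' - v - (D \setminus D') \in \overline{\Pi}$, as required. The only delicate point in the write-up is the bookkeeping caused by passing from the vertex cover $X'$ to the strictly larger vertex cover $X$ and its extra coordinates indexed by $W$; everything else is a routine dictionary between a graph and its complement, and the same argument works without the singleton-replacement assumption.
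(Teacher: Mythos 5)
Your construction is essentially the paper's: you build the ``partial complement'' $H$ (complement $H'$, then delete the clique edges inside $D \cup \{v\}$), pair it with the enlarged vertex cover $X = V(H') \setminus (D \cup \{v\})$, transfer the vector identity by swapping the roles of $Q$ and $R$ on coordinates inside $X'$ and observing that all vertices of $D \cup \{v\}$ are adjacent in $H$ to every vertex of $W = X \setminus X'$, and then invoke the singleton-replacement property of $\Pi$. That is exactly the route the paper takes, and for the stated lemma your argument is sound.

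Two assertions in your write-up are not, however. The identity $H - v - (D \setminus D') = \overline{H' - v - (D \setminus D')}$ does \emph{not} hold ``for every $D' \subseteq D$'': if $|D'| \geq 2$, the surviving vertices of $D'$ are pairwise non-adjacent in $H$ (you deleted those edges) but pairwise adjacent in $\overline{H'}$, so the two graphs differ. The equality is valid precisely when at most one vertex of $D \cup \{v\}$ survives, i.e.\ for $D' = \emptyset$ or $|D'| = 1$, and this is the only case you actually use once $\Pi$'s characterization hands you a singleton. For the same reason your closing remark that ``the same argument works without the singleton-replacement assumption'' is wrong: if $\Pi$ only returns some $D' \subseteq D$ with $|D'| \geq 2$, you learn $H - v - (D \setminus D') \in \Pi$, but this graph is not the complement of $H' - v - (D \setminus D')$, so you cannot conclude $H' - v - (D \setminus D') \in \overline{\Pi}$. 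The singleton hypothesis is exactly what makes the final identification legitimate, which is why the paper states the closure under complementation only for characterizations with singleton replacements.
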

\begin{proof}
Let $H$ be a graph with vertex cover $X$. Let $D \subseteq V(H) \setminus X$ be a set such that $H-D \in \overline{\Pi}$. Consider some vertex $v \in V(H) \setminus (D \cup X)$ such that $\incs{c}{H}{X} (v) = \sum_{u \in D} \incs{c}{H}{X} (u)$. Let $X' = V(H) \setminus (D \cup \{v\})$. 

\begin{claim} \label{claim:complement:sums}
We have $\incs{c}{H}{X'} (v) = \sum_{u \in D} \incs{c}{H}{X'} (u)$.
\end{claim}
\begin{claimproof}
Since vertices outside $X$ are independent, neither $v$ nor any vertex in $D$ is adjacent to any vertex in $X' \setminus X$. So for any disjoint~$Q,R \subseteq X'$ with~$R \cap (X' \setminus X) \neq \emptyset$ we have~$\incs{c}{H}{X'}(v)[Q,R] = \incs{c}{H}{X'}(u)[Q,R] = 0$ for all~$u \in D$ by definition, while for~$R \cap (X' \setminus X) = \emptyset$ we have~$\incs{c}{H}{X'}(u)[Q,R] = \incs{c}{H}{X}(u)[Q \cap X,R]$ for any~$u \in D \cup \{v\}$.
\end{claimproof}
Let $H'$ be obtained from $H$ by (1) taking the edge complement, and then (2) turning $H'[D \cup \{v\}]$ back into an independent set (the complement made it a clique). Note that $X'$ is a vertex cover of $H'$.

\begin{claim}
We have $\incs{c}{H'}{X'} (v) = \sum_{u \in D} \incs{c}{H'}{X'} (u)$.
\end{claim}
\begin{claimproof}
Immediate from \cref{claim:complement:sums} since  $\incs{c}{H'}{X'}(u)[Q,R] = \incs{c}{H}{X'}(u)[R,Q]$ for all~$u \in D \cup \{v\}$.
\end{claimproof}

Observe that~$H' - D$ is the edge-complement of~$H - D$, so~$H'- D \in \Pi$. Together with the previous claim, since $\Pi$ is characterized by rank-$c$ adjacencies with singleton replacements, it follows that there exists $v' \in D$ such that $G' := H'- v - (D \setminus \{v'\}) \in \Pi$. Since~$G'$ contains only a single vertex of~$\{v\} \cup D$, none of its edges were edited during step (2) above, so that~$G := H - v - (D \setminus \{v'\})$ is the edge-complement of~$G'$, implying~$G \in \overline{\Pi}$. This shows that $\overline{\Pi}$ is characterized by rank-$c$ adjacencies with singleton replacements.
\end{proof}

\cref{prop:graphpropertymerge} proves closure under taking the union of two characterized properties. 

\begin{lemma}\label{prop:graphpropertymerge}
Let $\Pi$ and $\Pi'$ be graph properties characterized by rank-$c_{\Pi}$ and rank-$c_{\Pi'}$ adjacencies (with singleton replacements), respectively. Then the property $\Pi \cup \Pi'$ is characterized by rank-$\max(c_{\Pi},c_{\Pi'})$ adjacencies (with singleton replacements). 
\end{lemma}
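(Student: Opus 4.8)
The plan is to verify the definition of rank-$c$ adjacencies directly, with $c := \max(c_{\Pi}, c_{\Pi'})$. So fix a graph $H$, a vertex cover $X$ of $H$, a set $D \subseteq V(H) \setminus X$, and a vertex $v \in V(H) \setminus (D \cup X)$ for which the two preconditions hold: $H - D \in \Pi \cup \Pi'$ and $\incs{c}{H}{X}(v) = \sum_{u \in D} \incs{c}{H}{X}(u)$ over~$\mathbb{F}_2$. The goal is to produce a set $D' \subseteq D$ (of size~$1$ in the singleton-replacement case) with $H - v - (D \setminus D') \in \Pi \cup \Pi'$.

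First I would split on which property witnesses membership: since $H - D \in \Pi \cup \Pi'$, either $H - D \in \Pi$ or $H - D \in \Pi'$, and by symmetry it suffices to treat the case $H - D \in \Pi$. The key step is to transfer the incidence-vector equality down from rank~$c$ to rank~$c_{\Pi}$. Since $c_{\Pi} \leq c$, the first bullet of \cref{note:largerrank}, applied with $c' := c_{\Pi}$, turns the hypothesis $\incs{c}{H}{X}(v) = \sum_{u \in D} \incs{c}{H}{X}(u)$ into $\incs{c_{\Pi}}{H}{X}(v) = \sum_{u \in D} \incs{c_{\Pi}}{H}{X}(u)$. Now both preconditions in the definition of ``characterized by rank-$c_{\Pi}$ adjacencies'' are satisfied for the graph~$H$, vertex cover~$X$, set~$D$, and vertex~$v$, so there is a set $D' \subseteq D$ with $H - v - (D \setminus D') \in \Pi$; and in the singleton-replacement case this $D'$ has size~$1$. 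Since $\Pi \subseteq \Pi \cup \Pi'$ we conclude $H - v - (D \setminus D') \in \Pi \cup \Pi'$, which is exactly what the definition of rank-$c$ adjacencies for $\Pi \cup \Pi'$ demands — and the produced set has size~$1$ whenever \emph{both} input properties are characterized with singleton replacements. The case $H - D \in \Pi'$ is symmetric, using~$c_{\Pi'} \leq c$ in place of $c_{\Pi} \leq c$.

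There is no real obstacle here: the argument is a one-line reduction once the two ingredients are in place, namely (i)~the monotonicity of the incidence-vector equality in the rank parameter, which is precisely the first bullet of \cref{note:largerrank}, and (ii)~the observation that a graph in $\Pi \cup \Pi'$ lies in at least one of $\Pi, \Pi'$, so that applying the appropriate characterization and weakening $\Pi$ (resp.\ $\Pi'$) back to $\Pi \cup \Pi'$ closes the argument. The only point requiring a modicum of care is bookkeeping the ``with singleton replacements'' variant uniformly across the two symmetric cases.
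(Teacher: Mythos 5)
Your proposal is correct and follows essentially the same route as the paper's proof: case-split on whether $H-D \in \Pi$ or $H-D \in \Pi'$, project the vector equality down to the smaller rank via the first bullet of \cref{note:largerrank}, apply the corresponding characterization, and weaken back to $\Pi \cup \Pi'$, with the singleton-replacement bookkeeping handled identically.
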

\begin{proof}
Consider a graph $H$ with vertex cover $X$ and set $D\subseteq V(H) \setminus X$ such that $H-D \in \Pi \cup \Pi'$. Let $v \in V(H) \setminus (D \cup X)$ be some vertex such that $\incs{\max(c_\Pi,c_{\Pi'})}{H}{X} (v) = \sum_{u \in D} \incs{\max(c_\Pi,c_{\Pi'})}{H}{X} (u)$. By \cref{note:largerrank}, we have $\incs{c_\Pi}{H}{X} (v) = \sum_{u \in D} \incs{c_\Pi}{H}{X} (u)$. If $H-D \in \Pi$, then there exists $D' \subseteq D$ such that $H-v-(D\setminus D') \in \Pi$ and hence, $H-v-(D\setminus D') \in \Pi \cup \Pi'$ (in case of singleton replacements, $D'$ is replaced by $\{v'\}$ for some $v' \in D$). The case~$H - D \in \Pi'$ is symmetric.
\end{proof}

While the \emph{intersection} of two graph properties which are characterized by a \emph{finite number} of adjacencies is again characterized by a finite number of adjacencies~\cite[Proposition 4]{FOMIN2014468}, the same does not hold for low-rank adjacencies; there is no analog of \cref{prop:graphpropertymerge} for intersections.

%One might wonder if a similar result can be achieved for $\Pi \cap \Pi'$. However if we let $\Pi$ be the set of graphs that contain a wheel and $\Pi'$ be the set of graphs that contain $C_4$ or any wheel of size not equal to 4, then $\Pi \cap \Pi'$ is the set of graphs that contain a wheel of size not equal to 4. While the former two are characterized by rank-$c$ adjacencies for some $c \in \mathbb{N}$, the latter is not unless coNP $\subseteq$ NP/poly. Refer to \cref{app:wheelfree} and \cref{app:almostwheelfree} for more detail. \todo{refer to (almost) wheel free section? seems like there is some overlap here.}

In a graph $G$, we say that vertices $u$ and $v$ \emph{share adjacencies} to a set $S$, if $N_G(u) \cap S = N_G(v) \cap S$. The following lemma states that when we have a set $D$ whose $c$-incidence vectors sum to the vector of $v$, then for any set~$S$ of size up to~$c$ there exists a nonempty subset $D'\subseteq D$ whose members all share adjacencies with $v$ to~$S$.

\begin{lemma}\label{prop:adjacencyshare}
Let $G$ be a graph with vertex cover $X$, let $D \subseteq V(G)$ be disjoint from $X$, and let~$c \in \mathbb{N}$. Consider a vertex $v \in V(G) \setminus(D \cup X)$. If $\incs{c}{G}{X} (v) = \sum_{u \in D} \incs{c}{G}{X} (u)$, then for any set $S \subseteq V(G)$ with $|S|\leq c$ there exists $D' \subseteq D$, such that:
\begin{itemize}
    \item $|D'| \geq 1$ is odd,
    \item each vertex $u \in D'$ shares adjacencies with $v$ to $S$, and
    \item $\incf{c}{G}{X}{Q'}{R'} (v) = \sum_{u \in D'} \incf{c}{G}{X}{Q'}{R'} (u)$, where $Q' = (S \setminus N_G(v)) \cap X$ and $R' = S \cap N_G(v)$.
\end{itemize}
\end{lemma}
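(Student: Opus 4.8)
The statement is about the $c$-incidence vector equation $\incs{c}{G}{X}(v) = \sum_{u \in D} \incs{c}{G}{X}(u)$ and a fixed set $S$ of size at most $c$. I want to produce $D' \subseteq D$ with the three listed properties. The natural way to "zoom in" on $S$ is to look at the single coordinate $[Q', R']$ where $Q' = (S \setminus N_G(v)) \cap X$ and $R' = S \cap N_G(v)$ — note $Q' \cap R' = \emptyset$, and $|Q'| + |R'| \le |S| \le c$, so this coordinate exists in the $c$-incidence vector (it is the "$(Q,R)$ entry" for this particular disjoint pair). Evaluating the full vector equation at this coordinate and working over $\mathbb{F}_2$ will give a parity condition that I can turn into the set $D'$.

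**Key steps.** First, I compute $\incs{c}{G}{X}(v)[Q', R']$: since $R' = S \cap N_G(v) \subseteq N_G(v)$ we have $R' \subseteq N_G(v)$, and since $Q' \subseteq S \setminus N_G(v)$ we have $N_G(v) \cap Q' = \emptyset$; hence this coordinate of $v$'s vector equals $1$. Evaluating the summed equation over $\mathbb{F}_2$ at the same coordinate, I get that the number of $u \in D$ with $\incs{c}{G}{X}(u)[Q', R'] = 1$ is odd. Let $D' := \{ u \in D \mid \incs{c}{G}{X}(u)[Q',R'] = 1\}$. By construction $|D'| \ge 1$ is odd, giving the first bullet. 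For the second bullet, $u \in D'$ means $N_G(u) \cap Q' = \emptyset$ and $R' \subseteq N_G(u)$; I must upgrade this to $N_G(u) \cap S = N_G(v) \cap S$. The point is that $S \cap X$ is partitioned into the part in $N_G(v)$ (which is exactly $R'$) and the part outside $N_G(v)$ (which is exactly $Q'$), and the conditions $R' \subseteq N_G(u)$, $N_G(u) \cap Q' = \emptyset$ pin down $N_G(u) \cap (S \cap X)$ to equal $N_G(v) \cap (S \cap X)$; for the part $S \setminus X$, vertices outside the vertex cover $X$ (such as $u$ and $v$, both in $V(G) \setminus X$) have neighborhoods contained in $X$, so neither $u$ nor $v$ is adjacent to anything in $S \setminus X$, and the intersections agree trivially there. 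Thus $u$ shares adjacencies with $v$ to $S$.

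**Third bullet.** For the projected-vector equation $\incf{c}{G}{X}{Q'}{R'}(v) = \sum_{u \in D'} \incf{c}{G}{X}{Q'}{R'}(u)$, I start from the full equation over $D$ and project to the coordinates of the superscripted vector, namely those $(Q,R)$ with $Q' \subseteq Q$, $R' \subseteq R$ (and the usual disjointness/size constraints). By \cref{note:incsubgraph}-style reasoning (projection preserves the sum), $\incf{c}{G}{X}{Q'}{R'}(v) = \sum_{u \in D} \incf{c}{G}{X}{Q'}{R'}(u)$. I then argue that every $u \in D \setminus D'$ contributes the zero vector to this projected sum: if $u \notin D'$ then either $N_G(u) \cap Q' \ne \emptyset$ or $R' \not\subseteq N_G(u)$; in the first case, for any coordinate $(Q,R)$ with $Q' \subseteq Q$ we have $N_G(u) \cap Q \ne \emptyset$, so the entry is $0$; in the second case, for any coordinate with $R' \subseteq R$ we have $R \not\subseteq N_G(u)$, so the entry is $0$. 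Hence $\incf{c}{G}{X}{Q'}{R'}(u) = \vec{0}$ for $u \in D \setminus D'$, and over $\mathbb{F}_2$ these drop out of the sum, leaving $\incf{c}{G}{X}{Q'}{R'}(v) = \sum_{u \in D'} \incf{c}{G}{X}{Q'}{R'}(u)$.

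**Main obstacle.** I expect no deep difficulty here; the one place to be careful is the bookkeeping in the second bullet — making sure that the two conditions defining membership in $D'$, combined with the fact that $u, v \notin X$ so their neighborhoods lie inside $X$, really do force equality of $N_G(u) \cap S$ and $N_G(v) \cap S$ on all of $S$ and not just on $S \cap X$. That is really just a case check on whether elements of $S$ lie in $X$ or not, but it must be spelled out cleanly since $S$ is an arbitrary vertex subset, not a subset of the vertex cover.
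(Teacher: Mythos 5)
Your proof is correct and follows essentially the same route as the paper: both define $D'$ (your entry-based definition coincides with the paper's ``shares adjacencies with $v$ to $S$'' definition, by exactly the case analysis on $S \cap X$ versus $S \setminus X$ that you spell out), both observe that vertices outside $D'$ have all-zero projected vectors so the sum restricts to $D'$, and both extract the odd parity of $|D'|$ from the coordinate $[Q',R']$, where $v$'s entry is $1$.
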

\begin{proof}
For any vertex $d \in D$ that does not share adjacencies with $v$ to $S$, the vector $\incf{c}{G}{X}{Q'}{R'} (d)$ is the vector containing only zeros. Let $D' \subseteq D$ be the set of vertices that do share adjacencies with $v$ to $S$. Clearly $\incf{c}{G}{X}{Q'}{R'} (v) = \sum_{u \in D'} \incf{c}{G}{X}{Q'}{R'} (u)$, as removing all-zero vectors does not change the sum. Since $\incf{c}{G}{X}{Q'}{R'} (v)[Q',R'] =1$ and $\incf{c}{G}{X}{Q'}{R'} (u)[Q',R'] =1$ for all $u \in D'$, $|D'| \geq 1$ must be odd.
\end{proof}

Our framework adapts \cref{thm:oldframework} by replacing characterization by $c$ adjacencies by rank-$c$ adjacencies. From the following statement we can conclude that our framework extends \cref{thm:oldframework}.

\begin{lemma}\label{lemma:c_to_rankc}
A graph property $\Pi$ characterized by $c$ adjacencies is also characterized by rank-$c$ adjacencies with singleton replacements.
\end{lemma}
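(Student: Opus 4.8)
The plan is to show directly that the combinatorial hypothesis of Definition~\ref{def:characterization} (characterization by $c$ adjacencies) forces, in the setting of Definition~\ref{def:rankc}, a singleton replacement. So fix a graph $H$ with vertex cover $X$, a set $D \subseteq V(H) \setminus X$, and a vertex $v \in V(H) \setminus (D \cup X)$ satisfying the two bullets of Definition~\ref{def:rankc}: namely $H - D \in \Pi$ and $\incs{c}{H}{X}(v) = \sum_{u \in D} \incs{c}{H}{X}(u)$ over $\mathbb{F}_2$. I need to produce $v' \in D$ with $H - v - (D \setminus \{v'\}) \in \Pi$. The natural candidate for $v'$ is a vertex of $D$ that ``behaves like $v$'' towards the small set $D$ promised by the definition of characterization by $c$ adjacencies, applied to the graph $H - D$ and the vertex $v$.

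The key steps, in order, would be: First, since $H - D \in \Pi$ and $v \in V(H-D)$, invoke Definition~\ref{def:characterization} to obtain a set $D^{\star} \subseteq V(H-D) \setminus \{v\}$ with $|D^{\star}| \leq c$ such that arbitrarily re-wiring the edges between $v$ and $V(H-D) \setminus D^{\star}$ keeps the graph in $\Pi$. Second, apply \cref{prop:adjacencyshare} with the set $S := D^{\star}$ (which has size at most $c$, as required): this yields a nonempty $D' \subseteq D$, each of whose members shares adjacencies with $v$ to $D^{\star}$. Pick any $v' \in D'$. Third, I claim $H - v - (D \setminus \{v'\})\in \Pi$: this graph is obtained from $H - D \in \Pi$ by deleting $v$ and re-inserting the single vertex $v'$. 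But in $H - D$, the vertex $v'$ is absent; the graph $H - v - (D \setminus \{v'\})$ is isomorphic to the graph obtained from $H-D$ by ``renaming'' $v$ to $v'$ and then adjusting the adjacency of this vertex so that it matches $N_H(v') \cap (V(H) \setminus (D \cup \{v\}))$ instead of $N_H(v) \cap (V(H) \setminus (D \cup \{v\}))$. Crucially, since $v'$ shares adjacencies with $v$ to $D^{\star}$, all the adjacency changes happen only between this vertex and vertices in $V(H-D)\setminus D^{\star}$, which is exactly the set Definition~\ref{def:characterization} allows us to edit freely while staying in $\Pi$. Hence the resulting graph lies in $\Pi$, and it equals $H - v - (D \setminus \{v'\})$ up to isomorphism (identifying the vertex playing the role of $v$/$v'$).

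I expect the main obstacle to be bookkeeping the vertex identifications carefully: one has to be precise that $D^{\star} \subseteq V(H) \setminus (D \cup \{v\})$ (it lives in $H-D$ and avoids $v$), that the only ``new'' vertex in $H - v - (D\setminus\{v'\})$ compared to $H-D$ is $v'$, and that all its adjacencies to $D^{\star}$ coincide with those of $v$ by the conclusion of \cref{prop:adjacencyshare}. A minor subtlety is that \cref{prop:adjacencyshare} only gives ``shares adjacencies to $S$'' where $S = D^{\star}$ is viewed as a subset of $V(G)$, not just $X$; but since $N_H(v)$ and $N_H(v')$ are both contained in $X$ (as $v, v' \notin X$ and $X$ is a vertex cover), sharing adjacencies to $D^{\star}$ is equivalent to sharing adjacencies to $D^{\star} \cap X$, which is all that matters. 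Note also that we never needed the third bullet of \cref{prop:adjacencyshare} about the projected vectors, nor the oddness of $|D'|$ — merely nonemptiness of $D'$ suffices to pick $v'$. Since the replacement set $D' = \{v'\}$ has size $1$, this establishes that $\Pi$ is characterized by rank-$c$ adjacencies with singleton replacements.
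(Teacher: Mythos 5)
Your proposal is correct and follows essentially the same route as the paper's proof: apply the $c$-adjacency characterization to $H-D$ at $v$ to obtain the small set, invoke \cref{prop:adjacencyshare} with that set to find a vertex of $D$ agreeing with $v$ on it, and observe that $H - v - (D\setminus\{v'\})$ is isomorphic to $H-D$ with only adjacencies outside that set altered. Your extra bookkeeping (that $D^\star$ avoids $D\cup\{v\}$, that all relevant neighborhoods lie in $X$, and that only nonemptiness of $D'$ is needed) matches what the paper leaves implicit.
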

\begin{proof}
Let $\Pi$ be a graph property characterized by $c$ adjacencies. We show that $\Pi$ is characterized by rank-$c$ adjacencies. Let $G$ be a graph with vertex cover $X$ and $D \subseteq V(G) \setminus X$ be a set such that $G-D \in \Pi$. Let $v \in V(G) \setminus (D \cup X)$ be a vertex such that $\incs{c}{G}{X}(v) = \sum_{u \in D} \incs{c}{G}{X}(u)$.

Since $\Pi$ is characterized by $c$ adjacencies, there exists a set $B$ of size at most $c$ such that all graphs obtained by changing adjacencies between $v$ and $V(G) \setminus B$ are also contained in~$\Pi$. By \cref{prop:adjacencyshare} there exists $w \in D$ that shares adjacencies with $v$ to $B$. Now consider the graph $G-v-(D\setminus \{w\})$. This graph is isomorphic to $G-D$ where $w$ is matched to $v$ and the adjacencies between $v$ and $V(G) \setminus B$ are changed. But then by the definition of characterization by $c$ adjacencies it follows that $G-v-(D\setminus \{w\}) \in \Pi$. 
\end{proof}

\section{Using the framework}\label{sec:using}

In this section we give some results using our framework, which are listed in \cref{table:results}. We give polynomial kernels for \textsc{Perfect Deletion}, \textsc{AT-free Deletion}, \textsc{Interval Deletion}, \textsc{Even-hole-free Deletion}, and \textsc{Wheel-free Deletion} parameterized by vertex cover.  

\subsection{Perfect Deletion}\label{sec:perfect}
Let $\Pi_P$ be the set of graphs that contain an odd hole or an odd anti-hole. The \textsc{$\Pi_P$-free Deletion} problem is known as the \textsc{Perfect Deletion} problem. 
It was mentioned as an open question by Fomin et al.~\cite{FOMIN2014468}, since one can show that $\Pi_P$ is not characterized by a finite number of adjacencies. In this section we show that $\Pi_P$ is characterized by rank-4 adjacencies with singleton replacements. 
Following this result, we show that it admits a polynomial kernel using \cref{thm:framework}. First, we give a lemma that will be helpful in the proof later on. We say that a vertex \emph{sees an edge} if it is adjacent to both of its endpoints.

\begin{lemma}\label{lemma:odd(anti)hole_(odd)evenseen}
Let $G$ be a graph, $P = (v_1,...,v_n)$ where $n \geq 4$ is even be an induced path in $G$, and let $y$ be a vertex not on $P$ that is adjacent to both endpoints of $P$ and sees an even number of edges of $P$. Then $G[V(P) \cup \{y\}]$ contains an odd hole as induced subgraph.
\end{lemma}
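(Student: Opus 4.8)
The plan is to realize an odd hole inside the single long cycle obtained from $P$ and $y$. Consider $C := (y, v_1, v_2, \ldots, v_n)$; since $P$ is induced, the only edges of $G[V(P)\cup\{y\}]$ not lying on $C$ are chords of the form $\{y, v_i\}$ with $2 \le i \le n-1$. I would list the neighbours of $y$ on $P$ in increasing index order as $v_{i_0}, v_{i_1}, \ldots, v_{i_{k+1}}$, where $i_0 = 1$ and $i_{k+1} = n$ because $y$ is adjacent to both endpoints of $P$. For each $j \in \{0,\ldots,k\}$ set $g_j := i_{j+1} - i_j$ and consider $C_j := (y, v_{i_j}, v_{i_j+1}, \ldots, v_{i_{j+1}})$. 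Each $C_j$ is an induced cycle of length $g_j + 2$: the portion $v_{i_j}, \ldots, v_{i_{j+1}}$ is an induced subpath of $P$, and $y$ has no neighbour among $v_{i_j+1}, \ldots, v_{i_{j+1}-1}$ by the choice of consecutive neighbours, so $C_j$ has no chord. Hence $C_j$ is an odd hole as soon as $g_j$ is odd and $g_j \ge 3$.

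It then remains to show that some gap $g_j$ is odd and at least $3$, which is a short parity computation. First, $\sum_{j=0}^{k} g_j = i_{k+1} - i_0 = n - 1$, which is odd since $n$ is even. The crucial observation is that $y$ sees an edge $\{v_i, v_{i+1}\}$ of $P$ exactly when both $v_i$ and $v_{i+1}$ lie in $N_G(y)$, i.e.\ exactly when $i = i_j$ and $i+1 = i_{j+1}$ for some $j$, i.e.\ exactly when $g_j = 1$. Thus the number $m$ of edges of $P$ seen by $y$ equals the number of indices $j$ with $g_j = 1$. Discarding these $m$ unit gaps leaves a collection of gaps all of size at least $2$ whose sum is $n - 1 - m$, still odd because $m$ is even. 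An odd sum forces this collection to be nonempty and to contain at least one odd gap; being at least $2$ and odd, that gap $g_j$ is at least $3$. The associated $C_j$ is then an induced cycle of odd length $g_j + 2 \ge 5$, i.e.\ an odd hole in $G[V(P)\cup\{y\}]$, completing the argument.

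The main point requiring care is the bookkeeping: verifying that each $C_j$ is chordless (which follows from $P$ being induced together with the fact that $v_{i_j}, v_{i_{j+1}}$ are \emph{consecutive} neighbours of $y$), and the identity ``edges of $P$ seen by $y$ $\leftrightarrow$ gaps equal to $1$''. The degenerate case $k = 0$, in which $y$ is adjacent only to $v_1$ and $v_n$ and $C_0 = C$ has length $n+1$, is covered by exactly the same argument, so no separate treatment is needed.
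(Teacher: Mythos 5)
Your proof is correct, but it takes a genuinely different route from the paper. The paper proves the lemma by induction on $n$: in the base case $n=4$ a short case analysis shows $y$ must be adjacent only to $v_1,v_4$, giving a $C_5$; in the inductive step one either strips the two end-edges (when $y$ sees both of them) or cuts the path at the last neighbour $v_j$ of $y$ before $v_n$, obtaining an odd hole directly when $j$ is odd and recursing on the shorter even path otherwise. You instead give a direct, non-inductive argument: list the neighbours of $y$ on $P$ as $v_{i_0},\dots,v_{i_{k+1}}$ with $i_0=1$, $i_{k+1}=n$, observe that each segment between consecutive neighbours closes with $y$ into a chordless cycle of length $g_j+2$, identify the edges of $P$ seen by $y$ with the gaps $g_j=1$, and run a parity count: $\sum_j g_j = n-1$ is odd, subtracting the even number of unit gaps leaves an odd sum of gaps each at least $2$, hence some odd gap $g_j\ge 3$ and an odd hole of length $g_j+2\ge 5$. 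The bookkeeping you flag (chordlessness of each $C_j$ from $P$ being induced plus consecutiveness of the neighbours, the gap-$1$/seen-edge correspondence, and the degenerate case $k=0$) all checks out, including the implicit fact that the leftover collection is nonempty because its sum is odd and nonnegative. Your argument is slightly longer to set up but more transparent about exactly where the parity hypothesis enters, and it yields the sharper conclusion that the odd hole can be taken between two \emph{consecutive} neighbours of $y$ on $P$; the paper's induction is more compact and runs in parallel with the analogous even-hole lemma in the appendix, which uses the same inductive template.
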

\begin{proof}
We prove the claim by induction on $n$. Consider the case that $n=4$. If $y$ would be adjacent to exactly one of $v_2$ or $v_3$, then $y$ would see a single edge $\{v_1,v_2\}$ or $\{v_3,v_4\}$ respectively. If $y$ would be adjacent to both $v_2$ and $v_3$, then $y$ would see all three edges of $P$. Since $y$ sees an even number of edges of $P$, it follows that $y$ is only adjacent to $v_1$ and $v_4$. Then $G[V(P) \cup \{y\}]$ induces an odd hole.
\par
In the remaining case we assume that the claim holds for $n' < n$, where $n \geq 6$ and both $n'$ and $n$ are even. Suppose that $y$ sees both edges $\{v_1,v_2\}$ and $\{v_{n-1},v_n\}$, then $P' = (v_2,...,v_{n-1})$ is an induced path on an even number of vertices such that $y$ is adjacent to both of its endpoints and $y$ sees an even number edges in $P'$. By the induction hypothesis $G[V(P') \cup \{y\}]$ contains an odd hole, therefore $G[V(P) \cup \{y\}]$ contains an odd hole as well. If $y$ does not see both $\{v_1,v_2\}$ and $\{v_{n-1},v_n\}$, then assume without loss of generality that $y$ does not see the last edge $\{v_{n-1},v_n\}$. Let $v_j$ for $1 \leq j < n-1$ be the largest index before~$n$ for which $y$ is adjacent to $v_j$. If $j$ is odd, then $G[\{v_j,...,v_n,y\}]$ induces an odd hole. Otherwise $P' = (v_1,...,v_j)$ is an induced path on an even number of vertices, $y$ is adjacent to both of its endpoints, and $y$ sees an even number of edges in $P'$; hence the induction hypothesis applies. In all cases we get that $G[V(P) \cup \{y\}]$ contains an odd hole.
\end{proof}

Before we show the proof that $\Pi_P$ is characterized by rank-4 adjacencies with singleton replacements, we give some intuition for the replacement argument. Suppose we want to replace a vertex $v$ of some odd hole, and we have a set $D$ where each vertex in $D$ is adjacent to both neighbors of $v$ in the hole. Furthermore, the 4-incidence vectors of $D$ sum to the vector of $v$. Then there must exist some vertex in $D$ that sees an even number of edges of the induced path between the neighbors of $v$. This together with \cref{lemma:odd(anti)hole_(odd)evenseen} would result in a graph that contains an odd hole. \cref{fig:matrix} adds to this intuition.

\begin{figure}
    \centering
    \includegraphics[scale=0.8]{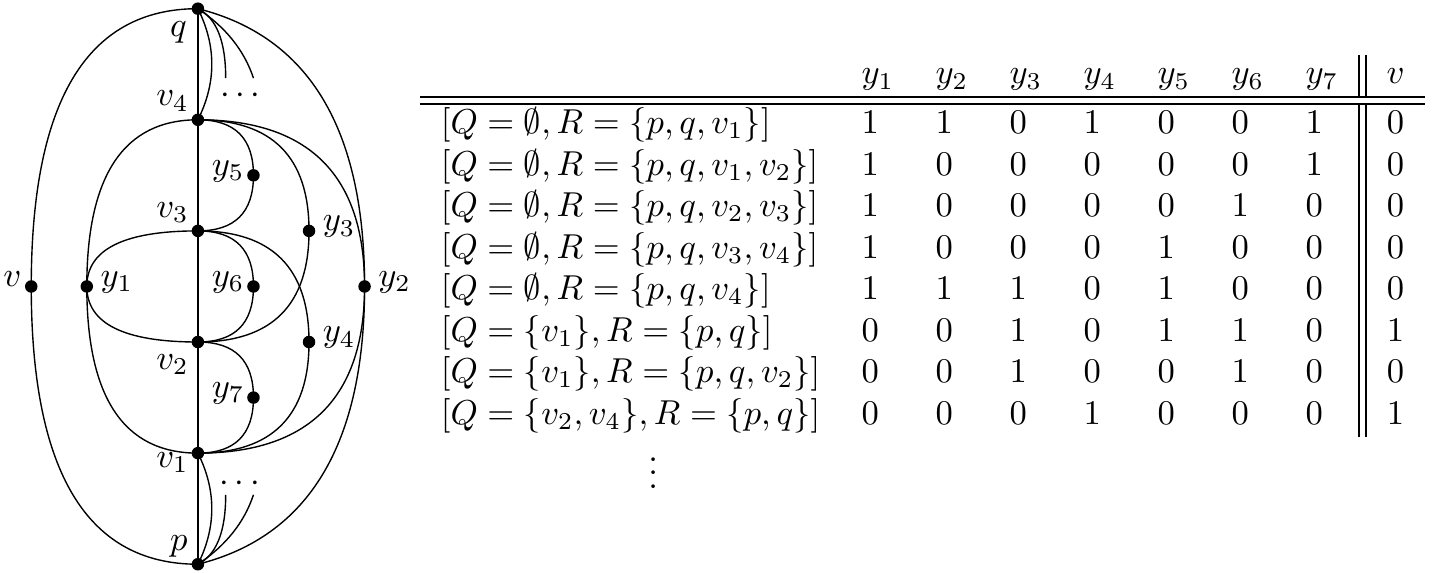}
\caption{Graph~$G$ with vertex cover~$X = \{p,q,v_1,\ldots, v_4\}$, containing an odd hole $H = \{v,p,v_1,...,v_4,q\}$, such that~$\mathrm{inc}^{4,(\emptyset, \{p,q\})}_{(G,X)}(v) = \sum _{i=1}^7 \mathrm{inc}^{4,(\emptyset, \{p,q\})}_{(G,X)}(y_i)$. All edges $\{p,y_i\}$ and $\{q,y_i\}$ for $i \in [7]$ exist, but not all are drawn. The table shows entries of the vectors~$\mathrm{inc}^{4,(\emptyset, \{p,q\})}_{(G,X)}(u \notin X)$. Vertex $y_2$ sees an even number of edges ($\{p,v_1\}$ and $\{v_4,q\}$), and $(v_1,...,v_4,y_2)$ is an odd hole.}
    \label{fig:matrix}
\end{figure}

Let $\Pi_{OH}$ be the set of graphs that contain an odd hole. We show that $\Pi_{OH}$ is characterized by rank-4 adjacencies with singleton replacements.

\begin{theorem}\label{thm:oddholerankc}
$\Pi_{OH}$ is characterized by rank-$4$ adjacencies with singleton replacements.
\end{theorem}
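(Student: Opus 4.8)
The plan is to verify the three conditions in Definition~\ref{def:rankc} directly for~$\Pi_{OH}$, with~$c = 4$, and to produce a replacement set~$D'$ of size exactly~$1$. So fix a graph~$H$ with vertex cover~$X$, a set~$D \subseteq V(H) \setminus X$, and a vertex~$v \in V(H) \setminus (D \cup X)$ such that~$H - D \in \Pi_{OH}$ and~$\incs{4}{H}{X}(v) = \sum_{u \in D} \incs{4}{H}{X}(u)$ over~$\mathbb{F}_2$. Since~$H - D \in \Pi_{OH}$, it has some odd hole~$C$. If~$v \notin V(C)$, then already~$H - v - (D \setminus \{w\})$ contains~$C$ for \emph{any} fixed~$w \in D$ (using~$D \neq \emptyset$, which follows from the vector equation being nontrivial: if~$D = \emptyset$ the right side is the zero vector but~$\incs{4}{H}{X}(v)$ has a~$1$ in coordinate~$(\emptyset,\emptyset)$), so we are done. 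The interesting case is~$v \in V(C)$; write~$C = (v, p, v_1, \ldots, v_{n-2}, q)$ so that~$p, q$ are the two neighbors of~$v$ on~$C$, and note that since~$C$ is an odd cycle, the path~$P = (p, v_1, \ldots, v_{n-2}, q)$ is an induced path on an \emph{even} number of vertices~$n - 1 \geq 4$.

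Next I would extract, from the linear-algebraic hypothesis, a single vertex~$w \in D$ that can stand in for~$v$. The key tool is Lemma~\ref{prop:adjacencyshare} applied with the set~$S = \{p, q\}$, which has size~$2 \leq 4$: it yields a nonempty odd-size subset~$D' \subseteq D$, all of whose members share adjacencies with~$v$ to~$\{p, q\}$ (hence are adjacent to both~$p$ and~$q$, since~$v$ is), and such that~$\incf{4}{H}{X}{\emptyset}{\{p,q\}}(v) = \sum_{u \in D'} \incf{4}{H}{X}{\emptyset}{\{p,q\}}(u)$ over~$\mathbb{F}_2$. The plan now is to argue that some~$w \in D'$ sees an even number of edges of the induced path~$P$; once we have such a~$w$, Lemma~\ref{lemma:odd(anti)hole_(odd)evenseen} applied to~$P$ and~$w$ gives an odd hole inside~$H[V(P) \cup \{w\}]$, and since~$V(P) \cup \{w\} \subseteq V(H) \setminus (v \cup (D \setminus \{w\}))$ (the~$v_i$ lie in~$H - D$ so are not in~$D$, and~$p, q \notin D$ are in~$X$ or in~$H-D$ — actually~$p,q$ could be anywhere in~$H - D$, but crucially they are not in~$D$ and not equal to~$v$), we conclude~$H - v - (D \setminus \{w\}) \in \Pi_{OH}$, i.e., a singleton replacement.

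The heart of the argument, and the step I expect to be the main obstacle, is showing that the vector equation over~$D'$ forces some~$w \in D'$ to see an even number of edges of~$P$. For a vertex~$u$ adjacent to both endpoints of~$P$, whether~$u$ sees an odd or even number of edges of~$P$ is an~$\mathbb{F}_2$-linear functional of the adjacency pattern of~$u$ along~$P$: roughly, counting edges seen mod~$2$ is~$\sum_{i} (\text{indicator that } u \sim v_i)(\text{indicator that } u \sim v_{i+1})$, which is not itself linear, but can be re-expressed. Here is where I would use the fact that the incidence vector has entries for \emph{all} pairs~$(Q, R)$ with~$|Q| + |R| \leq 4$, in particular pairs like~$(\emptyset, \{v_i, v_{i+1}\})$ (does~$u$ see edge~$\{v_i, v_{i+1}\}$?) and~$(\{v_i\}, \{v_{i+1}\})$, and that for each~$u \in D'$ these are constrained by the projected equation relative to~$(\emptyset, \{p, q\})$. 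The clean way: parity of edges seen on~$P$ equals, mod~$2$, a fixed~$\mathbb{F}_2$-linear combination~$\phi$ of the coordinates~$\incf{4}{H}{X}{\emptyset}{\{p,q\}}(\cdot)[\{v_i\}, \{v_{i+1}\} \cup \{p,q\}]$-type entries over consecutive pairs — each such "sees the edge~$\{v_i,v_{i+1}\}$" is literally a single coordinate~$[Q,R]$ with~$R = \{v_i, v_{i+1}\}$. Applying~$\phi$ to both sides of~$\incf{4}{H}{X}{\emptyset}{\{p,q\}}(v) = \sum_{u \in D'} \incf{4}{H}{X}{\emptyset}{\{p,q\}}(u)$ gives~$(\text{edges of } P \text{ seen by } v) \equiv \sum_{u \in D'} (\text{edges of } P \text{ seen by } u) \pmod 2$. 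But~$v$ sees \emph{zero} edges of~$P$ (its only neighbors on~$C$ are~$p, q$, which are the endpoints of~$P$, not interior, and~$P$'s edges are~$\{p,v_1\}, \{v_i,v_{i+1}\}, \{v_{n-2},q\}$ — of these~$v$ could in principle see~$\{p,v_1\}$ or~$\{v_{n-2},q\}$ only if~$v \sim v_1$ or~$v \sim v_{n-2}$, which would chord the odd hole~$C$, impossible since~$C$ is induced; similarly no interior edge). Hence~$\sum_{u \in D'} (\text{number of edges of } P \text{ seen by } u) \equiv 0 \pmod 2$. If \emph{every}~$u \in D'$ saw an odd number of edges, this sum would have the parity of~$|D'|$, which is odd — contradiction. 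Therefore some~$w \in D'$ sees an even number of edges of~$P$, completing the proof. I would double-check the corner cases~$n - 1 = 4$ (base case of Lemma~\ref{lemma:odd(anti)hole_(odd)evenseen} applies) and the possibility~$p = q$ is impossible since~$C$ is a cycle on~$\geq 5$ vertices.
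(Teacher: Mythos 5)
Your proposal is correct and takes essentially the same route as the paper: reduce to the case $v \in V(C)$, apply Lemma~\ref{prop:adjacencyshare} with $S=\{p,q\}$ to get an odd-size set $D'$ of common neighbours of $p,q$, use the coordinates indexed by $\{p,q\}\cup\{v_i,v_{i+1}\}$ (rank $4$) to force some $w\in D'$ seeing an even number of edges of $P$, and finish with Lemma~\ref{lemma:odd(anti)hole_(odd)evenseen}. The only differences are presentational: you phrase the parity step as applying an $\mathbb{F}_2$-linear functional to the projected equation where the paper double-counts $\sum_{e\in E(P)}|D'_e|=\sum_{u\in D'}|E_u|$, and you additionally record the (correct) observation that $D\neq\emptyset$; aside from a harmless slip in naming one coordinate and an off-by-one in labelling the path vertices, the argument matches.
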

\begin{proof}
Consider some graph $H$ with vertex cover $X$ and let $D \subseteq V(H) \setminus X$ such that $H-D \in \Pi_{OH}$. Let $v$ be an arbitrary vertex in $V(H) \setminus (D \cup X)$ such that $\incs{4}{H}{X}(v) = \sum_{u \in D} \incs{4}{H}{X}(u)$. We show that $H-v-(D\setminus \{v'\}) \in \Pi_{OH}$ for some $v' \in D$.

Let $C$ be an odd hole in $H-D$. If $v \notin V(C)$, then for every $v' \in D$ we have $H-v-(D\setminus \{v'\}) \in \Pi_{OH}$. So suppose that $v \in C$. Let $C = (v,p,v_1,...,v_{n-3},q)$, where $|V(C)|=n$. Consider the induced path $P = (p,v_1,...,v_{n-3},q)$. We have that $v$ is adjacent to $p$ and $q$. Let $D' \subseteq D$ be a set that shares adjacencies with $v$ to $
\{p,q\}$ such that $|D'| \geq 1$ is odd and $\incf{4}{H}{X}{\emptyset}{\{p,q\}}(v) = \sum_{u \in D'} \incf{4}{H}{X}{\emptyset}{\{p,q\}}(u)$. Such set exists by \cref{prop:adjacencyshare}.
Since $C$ is an odd hole, $|V(P)|$ is even. Hence by Lemma~\ref{lemma:odd(anti)hole_(odd)evenseen}, $G[V(P) \cup \{u\}]$ contains an odd hole if there exists some $u \in D'$ that sees an even number of edges of $P$.
Suppose for the sake of contradiction that every vertex in $D'$ sees an odd number of edges of $P$. Let $E_u$ be the set of edges in $P$ that are seen by $u \in D'$. Then $\sum_{u \in D'} |E_u|$ is odd as it is a sum of an odd number of odd numbers. Let $D'_{\{u,w\}} \subseteq D'$ be the set of vertices that see edge $\{u,w\} \in E(P)$. In order to satisfy $\sum_{u \in D'}\incp{\emptyset}{\{p,q\}}(u)[\emptyset,\{p,q,u,w\}] = \incp{\emptyset}{\{p,q\}}(v)[\emptyset,\{p,q,u,w\}] = 0$ over $\mathbb{F}_2$, for $\{u,w\} \in E(P)$, we require $|D'_{\{u,w\}}|$ to be even.
But then $\sum_{e \in E(P)} |D'_e| = \sum_{u \in D'} |E_u|$ would also need to be an even number. This contradicts the fact that $\sum_{u \in D'} |E_u|$ is odd. Therefore there must exist some $u \in D'$ that sees an even number of edges in $P$.
\end{proof}

Let $\Pi_{OAH}$ be the set of graphs that contain an odd anti-hole. Then $\Pi_P = \Pi_{OH} \cup \Pi_{OAH}$. From applications of \cref{lemma:complementgraphprop} and \cref{prop:graphpropertymerge} we get the following.

\begin{corollary}\label{cor:perfect}
Graph properties $\Pi_{OH}$, $\Pi_{OAH}$, and $\Pi_P$ are characterized by rank-4 adjacencies with singleton replacements.
\end{corollary}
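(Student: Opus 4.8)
The plan is to derive \cref{cor:perfect} as a direct consequence of the three ingredients already assembled in the excerpt, combining them mechanically. First I would recall that \cref{thm:oddholerankc} establishes that $\Pi_{OH}$ is characterized by rank-$4$ adjacencies with singleton replacements; this is the one nontrivial input, and it is already proven. The remaining work is to propagate this to $\Pi_{OAH}$ and then to $\Pi_P$.

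For $\Pi_{OAH}$, the key observation is that a graph contains an odd anti-hole if and only if its edge-complement contains an odd hole; equivalently, $G \in \Pi_{OH} \iff \overline{G} \in \Pi_{OAH}$, so $\Pi_{OAH} = \overline{\Pi_{OH}}$ in the notation of \cref{lemma:complementgraphprop}. Applying \cref{lemma:complementgraphprop} with $\Pi = \Pi_{OH}$ immediately yields that $\Pi_{OAH}$ is characterized by rank-$4$ adjacencies with singleton replacements. (One should double-check the edge-complement convention: the anti-hole $\overline{C_n}$ on $n \geq 5$ vertices is precisely the complement of the cycle $C_n$, and "odd anti-hole" refers to $\overline{C_{2k+1}}$ for $k \geq 2$, matching the forbidden-subgraph list for perfect graphs; since $C_{2k+1}$ being an odd hole for $k \geq 2$ is exactly $\Pi_{OH}$, the correspondence is exact, with the small caveat that $C_5 = \overline{C_5}$, which is harmless.)

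For $\Pi_P$, I would use that $\Pi_P = \Pi_{OH} \cup \Pi_{OAH}$, as stated in the text preceding the corollary. Applying \cref{prop:graphpropertymerge} with $\Pi = \Pi_{OH}$, $\Pi' = \Pi_{OAH}$, and $c_\Pi = c_{\Pi'} = 4$, and using the "with singleton replacements" version of that lemma, gives that $\Pi_P$ is characterized by rank-$\max(4,4) = 4$ adjacencies with singleton replacements. This completes all three claims.

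The only conceptual obstacle I anticipate is verifying that $\Pi_{OAH}$ is genuinely of the form $\overline{\Pi}$ for a property $\Pi$ characterized by low-rank adjacencies with singleton replacements — i.e., confirming $\Pi_{OAH} = \overline{\Pi_{OH}}$ precisely rather than up to some exceptional small cases — since \cref{lemma:complementgraphprop} requires the singleton-replacement hypothesis, which is exactly why \cref{thm:oddholerankc} was stated with that strengthening. Everything else is a routine invocation of the two closure lemmas, so the proof is essentially a one-paragraph chaining of \cref{thm:oddholerankc}, \cref{lemma:complementgraphprop}, and \cref{prop:graphpropertymerge}.
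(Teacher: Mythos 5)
Your proposal is correct and matches the paper's own argument exactly: the paper derives \cref{cor:perfect} by applying \cref{lemma:complementgraphprop} to \cref{thm:oddholerankc} (using $\Pi_{OAH} = \overline{\Pi_{OH}}$) and then \cref{prop:graphpropertymerge} to the union $\Pi_P = \Pi_{OH} \cup \Pi_{OAH}$. Your extra check that induced odd holes in $G$ correspond precisely to induced odd anti-holes in $\overline{G}$ is a sound (if routine) verification of the hypothesis of \cref{lemma:complementgraphprop}.
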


\begin{theorem}
\textsc{Perfect Deletion} parameterized by the size of a vertex cover admits a polynomial kernel on $\mathcal{O}(|X|^5)$ vertices.
\end{theorem}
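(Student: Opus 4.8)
The plan is to apply \cref{thm:framework} to the graph property $\Pi_P$. We already have ingredient (i) for free: \cref{cor:perfect} states that $\Pi_P = \Pi_{OH} \cup \Pi_{OAH}$ is characterized by rank-$4$ adjacencies (with singleton replacements, though we only need the plain version here). So it remains to verify the two remaining conditions of the theorem and then read off the resulting kernel size.

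For condition \eqref{enum:framework2}, I would observe that every graph in $\Pi_P$ contains an induced odd hole or odd anti-hole; the smallest such graphs are $C_5$ and $\overline{C_5} \cong C_5$, each of which has edges, and more generally every graph in $\Pi_P$ contains at least one edge (indeed at least five vertices forming such a cycle or its complement). For condition \eqref{enum:framework3}, I need a non-decreasing polynomial $p$ bounding the number of vertices of any graph that is vertex-minimal with respect to $\Pi_P$ in terms of its vertex cover number. A graph $G$ that is vertex-minimal with respect to $\Pi_P$ is, by definition, an induced odd hole $C_{2k+1}$ or an induced odd anti-hole $\overline{C_{2k+1}}$ with $k \geq 2$ (any proper induced subgraph of such a graph is perfect, and no smaller induced subgraph lies in $\Pi_P$). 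For $C_{2k+1}$ with $2k+1 = n$ vertices, the vertex cover number is $\lceil n/2 \rceil$, so $|V(G)| = n \leq 2\vc(G)$. For the odd anti-hole $\overline{C_n}$, note that $\overline{C_n}$ contains $C_n$'s non-edges as edges, and for $n \geq 5$ the complement of a cycle has a vertex cover of size... here I would just bound crudely: an independent set in $\overline{C_n}$ is a clique in $C_n$, which has size at most $2$, so $\overline{C_n}$ has independence number at most $2$, hence vertex cover number at least $n-2$; thus $|V(G)| = n \leq \vc(G) + 2$. In either case $|V(G)| \leq 2\vc(G) + 2$, so we may take $p(x) := 2x + 2$, a non-decreasing polynomial.

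With all three conditions verified and $c = 4$, $p(x) = 2x+2$, \cref{thm:framework} yields a polynomial kernel for \textsc{Perfect Deletion} parameterized by the vertex cover size $x$ with $\mathcal{O}((x + p(x)) \cdot x^c) = \mathcal{O}((x + 2x + 2) \cdot x^4) = \mathcal{O}(x^5)$ vertices, which is the claimed bound. Writing $X$ for the given vertex cover, this is $\mathcal{O}(|X|^5)$ vertices.

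I do not expect any serious obstacle here: this proof is essentially a bookkeeping exercise that plugs \cref{cor:perfect} into the generic framework. The only point requiring a moment's care is the polynomial bound for vertex-minimal graphs, where one must correctly identify that vertex-minimal members of $\Pi_P$ are exactly the odd holes and odd anti-holes (minimality forces the whole graph to be a single such obstruction, since any proper induced subgraph of an odd hole or odd anti-hole is perfect), and then bound the number of vertices of such a graph linearly in its vertex cover number — easy for holes, and for anti-holes it follows from the independence number being at most $2$.
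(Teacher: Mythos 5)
Your proof is correct and follows essentially the same route as the paper: cite \cref{cor:perfect} for the rank-$4$ characterization, note that every graph in $\Pi_P$ has an edge, bound vertex-minimal members (odd holes and odd anti-holes) linearly in their vertex cover number, and plug into \cref{thm:framework} to get $\mathcal{O}(|X|^5)$ vertices. The only cosmetic difference is your bound $p(x)=2x+2$ via the independence-number argument for anti-holes, where the paper simply uses $|V(H)| \leq 2\vc(H)$ for both holes and anti-holes; the asymptotics are unchanged.
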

\begin{proof}
By \cref{cor:perfect} we have that $\Pi_P$ is characterized by rank-$4$ adjacencies with singleton replacements. Each graph in $\Pi_P$ contains at least one edge. For each odd hole or odd anti-hole $H$, we have $|V(H)| \leq 2\cdot \vc(H)$. Therefore by \cref{thm:framework} it follows that \textsc{$\Pi_P$-free Deletion} and hence \textsc{Perfect Deletion} parameterized by vertex cover admits a polynomial kernel on $\mathcal{O}(|X|^5)$ vertices.
\end{proof}

A variation of \cref{thm:oddholerankc} presented in \cref{app:evenhole} shows that the set $\Pi_{EH}$ of graphs containing an even hole are characterized by rank-3 adjacencies, which leads to a kernel for \textsc{Even-hole-free Deletion} parameterized by the size of a vertex cover of $\mathcal{O}(|X|^4)$ vertices.

%atfree
\subsection{AT-free Deletion}\label{sec:atfree}
In his dissertation, K\"ohler~\cite{KOHLER_ATFREE} gives a forbidden subgraph characterization of graphs without asteroidal triples. This forbidden subgraph characterization consists of 15 small graphs on 6 or 7 vertices each, chordless cycles of length at least 6, and three infinite families often called asteroidal witnesses. Let $\Pi_{AT}$ be the set of graphs that contain an asteroidal triple. A technical case analysis leads to the following results.

\begin{theorem}[$\bigstar$]\label{thm:atfreerankc}
$\Pi_{AT}$ is characterized by rank-$8$ adjacencies with singleton replacements.
\end{theorem}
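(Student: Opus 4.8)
\textbf{Proof proposal for \cref{thm:atfreerankc}.}

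The plan is to follow the template established by the proof of \cref{thm:oddholerankc}, but to split the argument according to which part of K\"ohler's forbidden subgraph characterization the offending induced subgraph in $H-D$ belongs to. So fix a graph $H$ with vertex cover $X$, a set $D \subseteq V(H)\setminus X$ with $H-D \in \Pi_{AT}$, and a vertex $v \in V(H)\setminus(D\cup X)$ whose $8$-incidence vector equals the $\mathbb{F}_2$-sum of the $8$-incidence vectors of $D$. Let $F$ be an induced subgraph of $H-D$ that is an \emph{obstruction}, i.e.\ minimal with respect to $\Pi_{AT}$; by K\"ohler's theorem $F$ is either (a) one of the $15$ sporadic graphs on $6$ or $7$ vertices, or (b) a chordless cycle $C_m$ with $m\geq 6$, or (c) a member of one of the three infinite asteroidal-witness families. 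If $v\notin V(F)$ we are done with any singleton replacement, so assume $v\in V(F)$. The goal is, in each case, to produce $v'\in D$ such that $F' := (F-v)+v'$ (the graph obtained by deleting $v$ from $F$ and re-introducing $v'$ with its true adjacencies in $H$) still contains an asteroidal triple; since $F'$ is an induced subgraph of $H-v-(D\setminus\{v'\})$, this gives $H-v-(D\setminus\{v'\})\in\Pi_{AT}$.

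For case (a), $|V(F)|\le 7$, so $F-v$ has at most $6$ vertices, and we need $v'$ to reproduce the adjacency of $v$ to this bounded set $S := V(F)\setminus\{v\}$, $|S|\le 6 \le 8$. By \cref{prop:adjacencyshare} there is a nonempty $D' \subseteq D$ each of whose members shares adjacencies with $v$ to $S$; picking any $v'\in D'$ makes $F'$ isomorphic to $F$, hence still an obstruction. Note this step only needs rank $6$, and is the reason the sporadic graphs impose no real constraint. Case (b) is essentially \cref{thm:oddholerankc} recycled for the even-hole-type argument: if $v$ lies on a chordless cycle $C_m$ with $m\ge 6$, let $p,q$ be its neighbours on the cycle and $P$ the induced path $C_m-v$; apply \cref{prop:adjacencyshare} with $S=\{p,q\}$ to obtain an odd-sized $D'\subseteq D$ whose members are all adjacent to $p$ and $q$ and whose projected vectors $\incf{c}{H}{X}{\emptyset}{\{p,q\}}$ sum to that of $v$, and then run a parity count over the entries indexed by $[\emptyset,\{p,q,a,b\}]$ for edges $\{a,b\}$ of $P$ to find $u\in D'$ seeing an even number of edges of $P$; together with the length parity this forces $(P,u)$ to close up into a chordless cycle of length $\ge 6$ (or one long enough to re-derive an asteroidal witness), giving the obstruction back. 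This is the same linear-algebra-implies-parity mechanism as before; rank $4$ suffices here.

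The real work — and the reason the bound is $8$ rather than $4$ — is case (c): the three infinite asteroidal-witness families. Each such witness consists of a ``core'' of boundedly many special vertices (the three terminals of the AT and a few connectors) together with one or more long induced paths joining them. If $v$ is one of the finitely many core vertices, a bounded set $S$ of its neighbours determines its role, and \cref{prop:adjacencyshare} with $|S|\le 8$ plus a replacement $v'\in D'$ restores the witness verbatim; here one must check that $8$ coordinates actually suffice to pin down every core vertex's neighbourhood within the witness, which is where the constant $8$ comes from and requires going through K\"ohler's three families explicitly. If instead $v$ is an \emph{internal} vertex of one of the long paths, then as in case (b) we take $p,q$ to be its path-neighbours, extract an odd $D'\subseteq D$ sharing $\{p,q\}$-adjacencies via \cref{prop:adjacencyshare}, and argue by a parity/shortest-path analysis that some $u\in D'$ can be substituted so that $p,\dots,u,\dots,q$ remains a (possibly shorter, still valid) induced path of the witness, or else a chord appears that produces a long chordless cycle handled by case (b); the delicate point is that replacing a path vertex may shorten or chord the path, so one needs the witness families to be ``robust'' under such local surgery, or to fall back on the fact that a sufficiently long portion of the path still carries an obstruction. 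I expect this case analysis over the asteroidal-witness families — verifying that $8$ coordinates control the core vertices and that path-vertex replacements never destroy all obstructions — to be the main obstacle and the bulk of the (appendix) proof; the cycle and sporadic cases are routine given \cref{prop:adjacencyshare} and the technique of \cref{thm:oddholerankc}.
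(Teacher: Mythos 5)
Your skeleton (case analysis over K\"ohler's obstruction set, with the sporadic graphs handled by bounded adjacency sharing) matches the paper's, but your treatment of the asteroidal witnesses misplaces the difficulty and leaves a genuine gap exactly where the real work is. You claim that if $v$ is a \emph{core} vertex of a witness, then \cref{prop:adjacencyshare} with a set $S$ of at most $8$ vertices ``restores the witness verbatim.'' For the top terminal $t$ this fails: what makes $t$ a top terminal is its \emph{non}-adjacency to all $z$ path vertices $b_1,\ldots,b_z$, and $z$ is unbounded, so no set of $8$ coordinates can force a replacement $v'$ to avoid the path; a single edge from $v'$ to some $b_i$ already destroys the witness, and no finite-adjacency argument can rule it out (this is precisely why $\Pi_{AW}$ has no finite adjacency characterization, as the paper notes). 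The paper's proof devotes its main technical result (\cref{thm:AWtopreplace}) to this vertex: using the full vector-sum hypothesis one shows that either some $y\in D'$ sees no path vertex (witness restored), or some $y$ adjacent to the extreme path vertices $b_{q},b_{q'}$ of its range misses an intermediate $b_i$, in which case a \emph{different} asteroidal triple $\{t_l,t_r,b_i\}$ appears, or else two vertices of $D'$ would have identical projected incidence vectors, a contradiction after deduplication. Your proposal contains no argument of this kind, and the hedge ``one must check that $8$ coordinates suffice to pin down every core vertex's neighbourhood'' is a check that would come out negative. Conversely, the path-vertex case you flag as delicate is the easy one: sharing adjacencies to a set of at most $6$ witness vertices suffices because the original triple $\{t,t_l,t_r\}$ remains asteroidal no matter what extra neighbours the replacement has outside that set, so no parity or shortest-path surgery is needed.

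A secondary problem is your case (b). Transplanting the parity mechanism of \cref{thm:oddholerankc} produces (via a vertex seeing an even number of path edges) an odd hole, but that hole may be a $C_5$, which contains no asteroidal triple, so ``closing up into a chordless cycle of length $\ge 6$'' does not follow. The paper avoids this entirely: $\Pi_{C_{\geq 6}}$ is characterized by $5$ adjacencies (Fomin et al.), and $\Pi_S$ by $6$, so both cases follow immediately from \cref{lemma:c_to_rankc}; rank arguments are reserved for the witnesses, and within those, for the top vertex alone.
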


\begin{theorem}\label{thm:atfreekernel}
\textsc{AT-free Deletion} parameterized by the size of a vertex cover admits a polynomial kernel on $\mathcal{O}(|X|^9)$ vertices.
\end{theorem}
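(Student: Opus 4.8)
The plan is to derive the kernel directly from the generic framework of \cref{thm:framework} applied to $\Pi_{AT}$, so it suffices to check its three hypotheses. Hypothesis (i) is exactly \cref{thm:atfreerankc}: $\Pi_{AT}$ is characterized by rank-$8$ adjacencies (the singleton-replacement strengthening is available but not needed here). Hypothesis (ii) is immediate, since in an asteroidal triple each of the three pairs of vertices is joined by a path, and any path between two distinct vertices uses at least one edge; hence every graph in $\Pi_{AT}$ contains an edge (and in particular has $\vc(G) \geq 1$).

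The substantive step is hypothesis (iii): producing a non-decreasing polynomial $p$ with $|V(G)| \leq p(\vc(G))$ for every $G$ that is vertex-minimal with respect to $\Pi_{AT}$. Such a $G$ is precisely a minimal obstruction to being AT-free, so by K\"ohler's forbidden induced subgraph characterization~\cite{KOHLER_ATFREE} it is either one of the $15$ fixed graphs on at most $7$ vertices, a chordless cycle $C_n$ with $n \geq 6$, or a member of one of the three infinite families of asteroidal witnesses. The fixed graphs have $|V(G)| = \mathcal{O}(1)$, which is $\mathcal{O}(\vc(G))$ since $\vc(G) \geq 1$; for $C_n$ we have $\vc(C_n) = \lceil n/2 \rceil$, so $|V(C_n)| = n \leq 2\,\vc(C_n)$; and for each family of asteroidal witnesses the graph consists of three induced paths joined to a core of bounded size, so selecting every other vertex along each path yields an independent set of size $\tfrac{1}{2} |V(G)| - \mathcal{O}(1)$ and hence $|V(G)| = \mathcal{O}(\vc(G))$. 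Taking $p$ to be the maximum of these bounds, $p(x) = \mathcal{O}(x)$ suffices. (Alternatively, one can bypass the classification: if $\{a,b,c\}$ is an asteroidal triple in a vertex-minimal $G$, fix shortest paths $P_{ab}, P_{bc}, P_{ac}$ avoiding $N(c), N(a), N(b)$ respectively; each is induced in $G$ and remains a witnessing path in every induced subgraph containing it, so by minimality $G$ is covered by these three induced paths, and one argues as above that such a graph has independence number a constant fraction of its order.)

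With (i)--(iii) in hand for $c = 8$ and $p(x) = \mathcal{O}(x)$, \cref{thm:framework} outputs a kernel for \textsc{AT-free Deletion} on $\mathcal{O}((x + p(x)) \cdot x^{c}) = \mathcal{O}(x \cdot x^{8}) = \mathcal{O}(x^{9})$ vertices, as stated. I expect the only real obstacle to be the bookkeeping in hypothesis (iii): pinning down why the asteroidal witnesses (equivalently, any union of three induced paths meeting only at a bounded set of ``corner'' vertices, possibly with shared internal vertices) have vertex cover number at least roughly half their order, so that $p$ is genuinely polynomial --- everything else is a direct instantiation of the framework.
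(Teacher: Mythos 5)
Your proposal is correct and takes essentially the same route as the paper: it verifies the three hypotheses of \cref{thm:framework} by invoking \cref{thm:atfreerankc} for the rank-$8$ characterization, noting every graph in $\Pi_{AT}$ has an edge, and bounding vertex-minimal obstructions (the $15$ small graphs, chordless cycles, and asteroidal witnesses, which are an induced path plus $2$--$3$ extra vertices) by $\mathcal{O}(\vc(G))$ vertices, then instantiating the framework with $c=8$ and $p(x)=\mathcal{O}(x)$ to get $\mathcal{O}(|X|^9)$ vertices.
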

\begin{proof}
Every graph in $\Pi_{AT}$ contains at least one edge. By \cref{thm:atfreerankc} it follows that $\Pi_{AT}$ is characterized by rank-8 adjacencies with singleton replacements. Each small graph has at most 7 vertices. For each cycle $C$, we have $|V(C)| \leq 2\cdot \vc(C)$. Finally each asteroidal witness consists of an induced path with 2 or 3 additional vertices, hence there exists $c \in \mathbb{N}$ such that for $G \in \Pi_{AT}$, $|V(G)| \leq c\cdot \vc(G)$. Hence by \cref{thm:framework}, \textsc{AT-free Deletion} parameterized by the size of a vertex cover admits a polynomial kernel on $\mathcal{O}(|X|^9)$ vertices.
\end{proof}

%interval
\subsection{Interval Deletion}\label{sec:ivd}
\textsc{Interval Deletion} does not fit in the framework of Fomin et al.~\cite{FOMIN2014468}, since one can show that its forbidden subgraph characterization is not characterized by a finite number of adjacencies. It was shown to admit a polynomial kernel by Agrawal et al.~\cite{AGRAWAL2019}. We show that our framework captures this result. Consider the graph property $\Pi_{IV} = \Pi_{AT} \cup \Pi_{C_{\geq 4}}$, where $\Pi_{AT}$ is the set of graphs that contain an asteroidal triple as in \cref{sec:atfree} and $\Pi_{C_{\geq 4}}$ is the set of graphs that contain an induced cycle of length at least 4. Making a graph $\Pi_{IV}$-free makes it chordal and AT-free, therefore \textsc{$\Pi_{IV}$-free Deletion} corresponds to \textsc{Interval Deletion}.

\begin{theorem}\label{thm:intervalkernel}
\textsc{Interval Deletion} parameterized by the size of a vertex cover admits a polynomial kernel on $\mathcal{O}(|X|^9)$ vertices.
\end{theorem}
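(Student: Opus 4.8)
The plan is to combine the two ingredients already developed for the component properties of $\Pi_{IV} = \Pi_{AT} \cup \Pi_{C_{\geq 4}}$ and feed the result into the generic kernelization theorem \cref{thm:framework}. First I would establish that $\Pi_{IV}$ is characterized by rank-$c$ adjacencies for a small constant $c$. We already know from \cref{thm:atfreerankc} that $\Pi_{AT}$ is characterized by rank-$8$ adjacencies (with singleton replacements), and the property $\Pi_{C_{\geq 4}}$ of containing an induced cycle of length at least $4$ is characterized by $3$ adjacencies in the sense of Fomin et al.~(this is exactly the running example after \cref{def:characterization}), hence by \cref{lemma:c_to_rankc} it is characterized by rank-$3$ adjacencies with singleton replacements. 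Applying \cref{prop:graphpropertymerge} to the union then yields that $\Pi_{IV}$ is characterized by rank-$\max(8,3) = 8$ adjacencies (with singleton replacements).

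Next I would verify the remaining two hypotheses of \cref{thm:framework}. Condition \eqref{enum:framework2} is immediate: every graph in $\Pi_{AT}$ contains an edge (an asteroidal triple requires connecting paths, hence edges), and every induced cycle of length at least $4$ contains edges, so every graph in $\Pi_{IV}$ has at least one edge. For condition \eqref{enum:framework3} I need a non-decreasing polynomial $p$ bounding the number of vertices of any graph that is vertex-minimal with respect to $\Pi_{IV}$ in terms of its vertex cover number. A vertex-minimal member of $\Pi_{IV}$ is either vertex-minimal for $\Pi_{AT}$ or vertex-minimal for $\Pi_{C_{\geq 4}}$ (a minimal graph containing an asteroidal triple or a minimal graph containing a long induced cycle, i.e.\ an induced cycle itself). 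For the former, the analysis in the proof of \cref{thm:atfreekernel} already shows there is a constant $c'$ with $|V(G)| \leq c' \cdot \vc(G)$ for every $G \in \Pi_{AT}$ (the small obstructions have at most $7$ vertices, chordless cycles $C$ satisfy $|V(C)| \le 2\vc(C)$, and each asteroidal witness is an induced path plus two or three extra vertices). For the latter, an induced cycle $C_m$ with $m \ge 4$ satisfies $|V(C_m)| = m \le 2\vc(C_m)$. Taking $p(x)$ to be a linear function dominating both bounds gives the required polynomial.

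With all three hypotheses verified and $c = 8$, \cref{thm:framework} produces a polynomial kernel for \textsc{$\Pi_{IV}$-free Deletion}, which by the identification of $\Pi_{IV}$-free graphs with chordal AT-free graphs is exactly \textsc{Interval Deletion}, on $\mathcal{O}((x + p(x)) \cdot x^{8}) = \mathcal{O}(x \cdot x^{8}) = \mathcal{O}(x^{9})$ vertices, where $x = |X|$. I do not expect any genuine obstacle here; this theorem is a corollary that simply assembles \cref{thm:atfreerankc}, \cref{lemma:c_to_rankc}, \cref{prop:graphpropertymerge}, and \cref{thm:framework}. The only point requiring a modicum of care is citing the linear bound on $|V(G)|$ in terms of $\vc(G)$ for vertex-minimal members of $\Pi_{AT}$, but that bound is exactly the one already argued in the proof of \cref{thm:atfreekernel}, so it can be invoked verbatim.
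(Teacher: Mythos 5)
Your proposal is correct and follows essentially the same route as the paper's proof: combine \cref{thm:atfreerankc} with the $3$-adjacency (hence, via \cref{lemma:c_to_rankc}, rank-$3$) characterization of $\Pi_{C_{\geq 4}}$ through \cref{prop:graphpropertymerge}, check the edge condition and the linear bound on vertex-minimal obstructions, and invoke \cref{thm:framework} with $c=8$ to get $\mathcal{O}(|X|^9)$ vertices. The only cosmetic difference is that the paper spells out the decomposition $\Pi_{AT} = \Pi_S \cup \Pi_{C_{\geq 6}} \cup \Pi_{AW}$ when bounding vertex-minimal members, whereas you cite the same bound from the proof of \cref{thm:atfreekernel}; this is not a substantive difference.
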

\begin{proof}
Every graph in $\Pi_{IV}$ contains at least one edge. By \cref{thm:atfreerankc}, $\Pi_{AT}$ is characterized by rank-8 adjacencies. Furthermore, $\Pi_{C_{\geq 4}}$ is characterized by 3 adjacencies as shown by Fomin et al.~\cite[Proposition 3]{FOMIN2014468}, and therefore by \cref{lemma:c_to_rankc} also by rank-3 adjacencies. Therefore by \cref{prop:graphpropertymerge}, it follows that $\Pi_{IV}$ is also characterized by rank-8 adjacencies.  Each vertex minimal graph in $\Pi_{C_{\geq 4}}$ is a cycle $C$, for which we have $|V(C)| \leq 2\cdot \vc(C)$. Recall that $\Pi_{AT} =  \Pi_S \cup \Pi_{C_{\geq 6}} \cup \Pi_{AW}$. Each vertex minimal graph in $\Pi_S$ contains at most 7 vertices. Finally each asteroidal witness consists of an induced path with 2 or 3 additional vertices, hence there exists $c \in \mathbb{N}$ such that for $G \in \Pi_{IV}$, $|V(G)| \leq c\cdot \vc(G)$. Therefore by \cref{thm:framework}, \textsc{Interval Deletion} parameterized by the size of a vertex cover admits a polynomial kernel on $\mathcal{O}(|X|^9)$ vertices.
\end{proof}

%wheelfree
\subsection{(Almost) Wheel-free Deletion}\label{sec:wheelfree}
Let $\Pi_{W_{\geq 3}}$ be the set of graphs that contain a wheel of size at least 3 as induced subgraph. Then \textsc{Wheel-free Deletion} corresponds to \textsc{$\Pi_{W_{\geq 3}}$-free Deletion}. Using a similar argument as in \cref{thm:evenholefreerankc}, we obtain a characterization by rank-4 adjacencies.

\begin{theorem}[$\bigstar$]\label{thm:wheelfreerankc}
$\Pi_{W_{\geq 3}}$ is characterized by rank-4 adjacencies. 
\end{theorem}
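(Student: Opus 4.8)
The plan is to show that $\Pi_{W_{\geq 3}}$ is characterized by rank-$4$ adjacencies, following the same template as the proof of \cref{thm:oddholerankc}. Take a graph $H$ with vertex cover $X$, a set $D \subseteq V(H) \setminus X$ with $H - D \in \Pi_{W_{\geq 3}}$, and a vertex $v \in V(H) \setminus (D \cup X)$ with $\incs{4}{H}{X}(v) = \sum_{u \in D} \incs{4}{H}{X}(u)$. Fix a wheel $W$ in $H - D$ with hub $c$ and rim cycle $(v_1,\ldots,v_n)$, $n \geq 3$. If $v \notin V(W)$ then any $D' = D$ works (or $D' = \{v'\}$ for any $v' \in D$ if we must produce singleton replacements), so assume $v \in V(W)$. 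There are two cases depending on whether $v$ is the hub or a rim vertex.

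**If $v$ is a rim vertex,** say $v = v_i$, then $v$ has exactly three "essential" neighbors inside $W$: the hub $c$ and the two rim-neighbors $v_{i-1}, v_{i+1}$ (indices mod $n$). Apply \cref{prop:adjacencyshare} with $S = \{c, v_{i-1}, v_{i+1}\}$ (size $3 \leq 4$) to obtain a nonempty odd-size $D' \subseteq D$ of vertices that share adjacencies with $v$ on $S$ — so every $u \in D'$ is adjacent to $c$, $v_{i-1}$, $v_{i+1}$ — together with the refined summation identity for the projected vectors $\incf{4}{H}{X}{Q'}{R'}$ where $Q' = \emptyset$ and $R' = \{c, v_{i-1}, v_{i+1}\}$. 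Now I want to argue that some $u \in D'$ can replace $v$ in the rim. The candidate replacement graph is $H[\,(V(W) \setminus \{v\}) \cup \{u\}\,]$; for this to contain a wheel we need to control the adjacency of $u$ to the other rim vertices $v_j$, $j \neq i-1,i,i+1$. The key observation is that each entry $\incf{4}{H}{X}{\emptyset}{\{c,v_{i-1},v_{i+1}\}}(u)[\emptyset, \{c,v_{i-1},v_{i+1},v_j\}]$ is governed by whether $u$ is adjacent to $v_j$, and summing these over $u \in D'$ must match the value for $v$ (which is $0$, since $v = v_i$ is not adjacent to non-consecutive rim vertices in an induced rim cycle — for $n \geq 4$; the $n = 3$ case is degenerate and handled directly). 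This parity argument, exactly as in \cref{thm:oddholerankc}, forces the existence of some $u \in D'$ nonadjacent to every $v_j$ with $j \neq i-1,i,i+1$, so that $(v_1,\ldots,v_{i-1},u,v_{i+1},\ldots,v_n)$ is again an induced cycle and $c$ is adjacent to all of its vertices — a wheel.

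**If $v$ is the hub** $c$, then $v$ is adjacent to all of $v_1,\ldots,v_n$, but now there is no a priori bound of $4$ on the number of rim vertices, so I cannot pick $S$ to be "all rim neighbors." Instead I select $S$ to consist of a constant number of rim vertices that pin down a shorter wheel: concretely, since the rim is a cycle, pick $S = \{v_1, v_2, v_3\}$ (or a suitable triple of consecutive rim vertices) and use \cref{prop:adjacencyshare} to get odd $D' \subseteq D$ whose members are all adjacent to $v_1, v_2, v_3$. The goal is then to find $u \in D'$ that together with a short induced segment of the rim forms a wheel $W_3 = K_4$: if $u$ is adjacent to three consecutive rim vertices $v_1, v_2, v_3$ then $H[\{u, v_1, v_2, v_3\}]$ is $K_4 = W_3$ provided $v_1 v_2, v_2 v_3, v_1 v_3$ have the right edges — but $v_1 v_3$ need not be an edge. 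So instead I argue the hub case by reduction to a minimal wheel: among all wheels in $H - D$ choose one with the smallest rim, note its rim has length $3$ or $4$ (a wheel with a chord in its rim contains a smaller wheel, so a rim-minimal wheel has an induced or near-induced short rim), and then $S$ can be taken as the full (constant-size) rim, reducing to a bounded-adjacency argument handled just like the rim case. I expect this hub subcase to be the main obstacle: one must carefully verify that a rim-minimal wheel indeed has bounded rim length (so that all relevant adjacencies fit into $4$ coordinates), and that the parity argument of \cref{prop:adjacencyshare} still delivers a $u \in D'$ with exactly the right adjacency pattern on the short rim. The rim-vertex case and the bookkeeping for $n = 3$ are routine adaptations of \cref{thm:oddholerankc}; the delicate point is the structural reduction to a constant-size wheel in the hub case, which is why the theorem gives rank $4$ (matching the $4$ coordinates $\{c, v_{i-1}, v_{i+1}, v_j\}$ needed to test one extra adjacency against three fixed ones) but does not claim singleton replacements in general.
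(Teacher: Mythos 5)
There is a genuine gap, and it is concentrated exactly where you suspected: the hub case. Your reduction to a ``rim-minimal wheel of length $3$ or $4$'' is false. The rim of an induced wheel $W_n$ is already an induced (chordless) cycle, and a graph can contain $W_n$ for large $n$ as its only wheel (e.g.\ $H-D$ isomorphic to $W_{10}$), so rim-minimality gives no bound on the rim length and you cannot fit the hub's relevant adjacencies into $4$ coordinates. The rim case also has a flaw as written: from $\incf{4}{H}{X}{\emptyset}{\{c,v_{i-1},v_{i+1}\}}(v)=\sum_{u\in D'}\incf{4}{H}{X}{\emptyset}{\{c,v_{i-1},v_{i+1}\}}(u)$ you only learn that, for each other rim vertex $v_j$, an \emph{even} number of vertices of $D'$ are adjacent to $v_j$; unlike in \cref{thm:oddholerankc} this does not force some $u\in D'$ to be nonadjacent to \emph{all} such $v_j$ (three vertices whose chord-neighborhoods pairwise overlap, e.g.\ $u_1\sim v_a,v_b$, $u_2\sim v_a$, $u_3\sim v_b$, satisfy all column parities with no ``clean'' row). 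That particular conclusion is not needed, though: any $u\in D'$ is adjacent to $c$ and to both endpoints of the induced rim path, so a sub-path of the rim together with $u$ yields an induced cycle all of whose vertices see $c$, i.e.\ a (possibly shorter) wheel; so the rim case is repairable, but not by the parity argument you invoke.

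The idea you are missing, and which the paper uses, is that the rank-$c$ definition lets you put back a subset $D'\subseteq D$ of size \emph{two} and build an entirely new small wheel, rather than reconstructing the original one. Whether $v$ is the hub or a rim vertex, $v$ sees three wheel vertices $p,q,r$ inducing a triangle (if $n=3$) or a $P_3$ (if $n>3$). Apply \cref{prop:adjacencyshare} with $S=\{p,q,r\}$ to get a nonempty $D'$ of vertices adjacent to all of $p,q,r$. If $n=3$, any single $u\in D'$ gives $K_4=W_3$. If $n>3$ and $|D'|=1$, the unique $u$ has the same projected vector as $v$, hence the same adjacencies to every vertex of $X$ (use the coordinates $[\emptyset,\{p,q,r,w\}]$ and $[\{w\},\{p,q,r\}]$), and since all wheel vertices outside $X$ are nonadjacent to both $u$ and $v$, it is a drop-in replacement. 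If $|D'|\geq 2$, any two vertices of $D'$ are nonadjacent (they lie outside the vertex cover) and together with the $P_3$ on $p,q,r$ they induce a $W_4$. This uniformly disposes of the hub case with no bound on the rim length, and it is precisely why the theorem claims rank-$4$ adjacencies \emph{without} singleton replacements.
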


Every graph that contains a wheel contains at least one edge. For every wheel $W_n$, we have $|V(W_n)| \leq 2\cdot \vc(W_n)$. Therefore by \cref{thm:framework} we obtain:

\begin{theorem}
\textsc{Wheel-free Deletion} parameterized by the size of a vertex cover admits a polynomial kernel on $\mathcal{O}(|X|^5)$ vertices.
\end{theorem}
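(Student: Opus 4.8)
The plan is to obtain the kernel as a direct instantiation of the generic kernelization \cref{thm:framework} with $c = 4$, so essentially all of the work lies in verifying its three hypotheses for the property $\Pi_{W_{\geq 3}}$.

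First I would invoke \cref{thm:wheelfreerankc}, which asserts that $\Pi_{W_{\geq 3}}$ is characterized by rank-$4$ adjacencies (with singleton replacements); this settles condition (i). Condition (ii) is immediate: any graph containing an induced wheel $W_n$ with $n \geq 3$ in particular contains the rim cycle and the spokes, hence at least one edge. For condition (iii) I would first identify the graphs that are vertex-minimal with respect to $\Pi_{W_{\geq 3}}$. If $G$ is vertex-minimal then $G$ contains an induced wheel $W_n$ for some $n \geq 3$; minimality forbids any proper induced subgraph of $G$ from lying in $\Pi_{W_{\geq 3}}$, and since $W_n$ itself belongs to $\Pi_{W_{\geq 3}}$, this forces $G = W_n$. (Conversely every $W_n$ is indeed vertex-minimal — deleting the center leaves $C_n$ and deleting a rim vertex leaves a fan, and neither contains an induced wheel since in either graph the only vertices of degree $\geq 3$ fail to be the center of any induced wheel — but for the kernel size only the direction ``every vertex-minimal graph is a wheel'' is needed.) It then suffices to bound $|V(W_n)|$ by a polynomial in $\vc(W_n)$: writing $W_n$ with rim cycle $(v_1,\ldots,v_n)$ and center $c$, the set consisting of $c$ together with every other rim vertex covers every rim edge and every spoke, so $\vc(W_n) \leq \lceil n/2 \rceil + 1$, while $|V(W_n)| = n+1$; hence $|V(W_n)| \leq 2\,\vc(W_n)$, and we may take the non-decreasing polynomial $p(x) = 2x$.

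With conditions (i)--(iii) verified for $c = 4$ and $p(x) = 2x$, \cref{thm:framework} yields a polynomial kernel for \textsc{$\Pi_{W_{\geq 3}}$-free Deletion}, i.e.\ for \textsc{Wheel-free Deletion}, parameterized by the vertex cover size $x = |X|$, on $\mathcal{O}\bigl((x + p(x))\cdot x^{c}\bigr) = \mathcal{O}(3x\cdot x^{4}) = \mathcal{O}(x^{5})$ vertices, which is exactly the claimed bound. I do not anticipate a genuine obstacle in this proof: the one place needing a short argument is the identification of the vertex-minimal members of $\Pi_{W_{\geq 3}}$ as precisely the wheels together with the estimate $|V(W_n)| \le 2\,\vc(W_n)$, and the only substantial content of the theorem — the rank-$4$ adjacency characterization — has already been discharged in \cref{thm:wheelfreerankc}.
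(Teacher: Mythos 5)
Your proposal is correct and follows essentially the same route as the paper, which likewise verifies the three hypotheses of \cref{thm:framework} with $c=4$ via \cref{thm:wheelfreerankc}, the trivial edge condition, and the bound $|V(W_n)| \leq 2\vc(W_n)$, yielding $\mathcal{O}(|X|^5)$ vertices. One small slip: to conclude $|V(W_n)| \leq 2\vc(W_n)$ you need a \emph{lower} bound on $\vc(W_n)$, not the upper bound $\vc(W_n) \leq \lceil n/2\rceil + 1$ you derive — the correct justification is that any vertex cover of $W_n$ must cover the rim cycle (at least $\lceil n/2\rceil$ vertices) and must additionally contain the center or all rim vertices, so $\vc(W_n) \geq 1 + \lceil n/2\rceil$ and hence $n+1 \leq 2\vc(W_n)$.
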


It turns out that this good algorithmic behavior is very fragile. Let~$\Pi_{W_{\neq 4}}$ be the set of graphs that contain a wheel of size~$3$, or at least~$5$. Then~\textsc{$\Pi_{W_{\neq 4}}$-free Deletion} corresponds to \textsc{Almost Wheel-free Deletion}. While~$\Pi_{W_{\geq 3}}$ can be characterized by rank-4 adjacencies, the following shows that $\Pi_{W_{\neq 4}}$ is not characterized by adjacencies of any finite rank, and therefore does not fall within the scope of our kernelization framework. 

\begin{theorem}[$\bigstar$]\label{thm:awfnorankc}
$\Pi_{W_{\neq 4}}$ is not characterized by rank-$c$ adjacencies for any $c \in \mathbb{N}$.
\end{theorem}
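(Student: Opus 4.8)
The plan is to construct, for each $c \in \mathbb{N}$, a witness graph $H$ with a vertex cover $X$, a set $D \subseteq V(H) \setminus X$, and a vertex $v \in V(H) \setminus (D \cup X)$ such that $H - D \in \Pi_{W_{\neq 4}}$ and the $c$-incidence vectors satisfy $\incs{c}{H}{X}(v) = \sum_{u \in D} \incs{c}{H}{X}(u)$ over $\mathbb{F}_2$, yet for \emph{every} subset $D' \subseteq D$ the graph $H - v - (D \setminus D') \notin \Pi_{W_{\neq 4}}$. The source of the difficulty is exactly the ``hole'' at size~$4$: $W_3$ and $W_5$ are forbidden but $W_4$ is not, so a replacement argument that freely swaps vertices in and out of a wheel rim can be sabotaged by making every reachable configuration land on a $W_4$ (which is allowed) instead of a forbidden wheel. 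First I would fix a large hub structure: take a center vertex $c^*$ adjacent to everything relevant, and a rim consisting of an induced path $P$ plus the vertex $v$ together with two ``anchor'' rim-neighbors $p, q$ of $v$, so that $H - D$ contains a wheel $W_n$ with $n \in \{3\} \cup \{5, 6, 7, \ldots\}$ on hub $c^*$ and rim $(v, p, \ldots, q)$.

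Next I would populate $D$ with many vertices $y_1, \ldots, y_m$, each adjacent to $c^*$ and to both $p$ and $q$, but with carefully chosen adjacencies to the interior path vertices so that (a) each $y_i$ individually, when substituted for $v$, turns the rim into a cycle of the ``wrong'' parity/length that creates only a $W_4$ (hence an allowed graph, not in $\Pi_{W_{\neq 4}}$), and (b) the $\mathbb{F}_2$-sum of their $c$-incidence vectors equals that of $v$. Achieving (b) is a linear-algebra condition over $\mathbb{F}_2$ on a space of dimension $\mathcal{O}(|X|^c)$: I would choose $X$ large enough and $D$ large enough, and set the adjacency patterns of the $y_i$ so that, coordinate by coordinate over all pairs $(Q,R)$ with $|Q| + |R| \le c$, an odd number of $y_i$ realize exactly the pattern that $v$ does. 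A clean way to force this is to include in $D$ a ``correction gadget'' of vertices that are true twins of $v$ with respect to $X$ (so their vector equals that of $v$), in such a number that the total parity works out; then the only freedom left is to distinguish $v$'s forbidden-wheel role from the allowed-$W_4$ role of each individual $y_i$. The key point, which replaces \cref{lemma:odd(anti)hole_(odd)evenseen}, is that no subset $D'$ can ``reassemble'' a forbidden wheel: adding back a subset of the $y_i$ to $H - v - D$ either restores fewer than all rim vertices (breaking the cycle, so no wheel at all through $c^*$), or creates a chord/short-cut among the $y_i$'s that collapses the rim length down to exactly $4$. I would therefore design the interior adjacencies of the $y_i$ pairwise so that any two of them are adjacent and share neighbors on the path in a way that forces any induced wheel on hub $c^*$ through two or more of them to have rim length exactly $4$; combined with the fact that the original path~$P$ contributes an even number of interior vertices chosen so a single-$y_i$ substitution also gives rim length $4$, this kills every case.

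The main obstacle I expect is the simultaneous satisfaction of the two competing requirements: the $\mathbb{F}_2$-linear-dependence condition (which wants the $y_i$ to ``look like'' $v$ on all small $(Q,R)$-patterns) and the anti-replacement condition (which wants every $y_i$, and every subset thereof, to differ from $v$ precisely in how it closes up the rim). The resolution is that the incidence vector only records adjacency to sets of size $\le c$ inside the vertex cover $X$, whereas the ``bad'' distinction lives in the cyclic order along the rim, which involves $\omega(c)$ vertices of $X$ once $n$ is large; so by taking the rim long enough (say with $n \gg c$ interior vertices all in $X$), the global length/parity information that distinguishes a forbidden wheel from $W_4$ is invisible to $c$-incidence vectors, while the local patterns can be matched exactly. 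I would make this quantitative: with $|X| = \Theta(c)$ path-interior vertices the vector space has dimension $d = \mathcal{O}(|X|^c) = 2^{\mathcal{O}(c \log c)}$, so taking $|D| = d + O(1)$ twin-type vertices suffices to hit $v$'s vector as an odd sum, and then arguing that no $D' \subseteq D$ recreates a member of $\Pi_{W_{\neq 4}}$ is a finite case check on how subsets of mutually-adjacent rim-candidates can form an induced cycle around $c^*$. Once the construction works for all $c$, the statement follows by definition of characterization by rank-$c$ adjacencies.
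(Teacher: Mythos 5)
Your high-level intuition (length/parity information of a long wheel is invisible to $c$-incidence vectors while local patterns can be matched) is the right one, but the construction you sketch has two concrete defects that break it. First, putting ``true twins of $v$'' into $D$ is self-defeating: since $D \subseteq V(H)\setminus X$ and $X$ is a vertex cover, every $u \in D\cup\{v\}$ has $N_H(u)\subseteq X$, so a true twin $y$ of $v$ has exactly the same neighborhood as $v$; then the singleton choice $D'=\{y\}$ makes $H-v-(D\setminus D')$ contain the original forbidden wheel with $y$ in place of $v$. Recall that to refute characterization by rank-$c$ adjacencies you must exhibit $H,X,D,v$ for which \emph{no} subset $D'\subseteq D$ yields a graph in $\Pi_{W_{\neq 4}}$; the parity bookkeeping of the sum does not help here, because a single twin in $D$ already provides a good $D'$. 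Second, your mechanism for killing multi-vertex replacements relies on making the $y_i$ ``pairwise adjacent'' so that rims through two of them collapse to length $4$ --- but this is impossible: $D$ lies outside the vertex cover, hence $D\cup\{v\}$ is an independent set, so no chords among the $y_i$ exist. Beyond these two errors, the remaining anti-replacement analysis (wheels with hub at some $y_i$, or with hub at a rim vertex $b_j$, which in your rim-substitution setup easily produce forbidden $W_3$'s once a $y_i$ is adjacent to the hub $c^*$ and two consecutive-at-distance-two rim vertices) is not carried out, and it is exactly where such a construction tends to fail.

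For comparison, the paper's counterexample avoids these issues by making $v$ the \emph{hub} rather than a rim vertex: $X$ is an induced cycle of length $n=2^i-1$ (the rim), $v$ is adjacent to all of $X$, and $D$ consists of one vertex for every $q$-subset of the rim with $q=2^{i-1}-1>c$. Lucas' theorem gives that for every pattern $(Q,R)$ with $|Q|+|R|\le c$ the number of $D$-vertices realizing it is odd exactly when $Q=\emptyset$, which is precisely $v$'s vector, so the $\mathbb{F}_2$-sum condition holds. The anti-replacement argument is then clean: no $D$-vertex can ever serve as a hub because its neighborhood is a proper (hence acyclic) subset of the rim cycle, and each rim vertex can only be the hub of a $W_4$; so no subset $D'$ restores a forbidden wheel. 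If you want to rescue your rim-vertex variant you would have to drop the twins, keep $D$ independent, and verify the hub-at-$y_i$ and hub-at-$b_j$ cases for \emph{all} subsets $D'$, which is substantially harder than the hub-replacement route.
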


This is not a deficiency of our framework; we prove that the problem does not have any polynomial compression, and therefore no polynomial kernel, unless \containment. 

\begin{theorem}[$\bigstar$]\label{thm:almostwheelfreenopolykernel}
\textsc{Almost Wheel-free Deletion} parameterized by vertex cover does not admit a polynomial compression unless $\mathrm{coNP} \subseteq \mathrm{NP} / \mathrm{poly}$.
\end{theorem}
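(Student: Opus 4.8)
The plan is to establish the lower bound through an OR-cross-composition in the sense of Bodlaender, Jansen and Kratsch: I would exhibit an algorithm that, given $t$ instances $x_1,\dots,x_t$ of a fixed NP-hard problem $L$ (belonging to one equivalence class of a suitable polynomial equivalence relation, so that we may assume they all have the same size $n$, and after padding $t$ to a power of two $2^s$ with $s=O(\log t)$), produces in time polynomial in $\sum_i |x_i|$ a single instance $(G,X,k)$ of \textsc{Almost Wheel-free Deletion} such that $(G,X,k)$ is a \textsc{yes}-instance if and only if at least one $x_i$ is, while the parameter satisfies $|X| = \mathrm{poly}(n+s) = \mathrm{poly}(n+\log t)$. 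Since \textsc{Almost Wheel-free Deletion} is in NP and $L$ is NP-hard, such a cross-composition implies that a polynomial compression would place $\mathrm{coNP}$ in $\mathrm{NP}/\mathrm{poly}$, which is the desired conclusion.

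The heart of the construction is a gadget that exploits the asymmetry underlying $\Pi_{W_{\neq 4}}$: a wheel $W_m$ is forbidden exactly when $m=3$ or $m\ge 5$, but $W_4$ is \emph{allowed}. I would build a bounded ``interface'' set $A$ of vertices, which together with $O(\log t)$ dedicated selector bit-vertices forms essentially all of the vertex cover $X$, and which serves as the hub-and-backbone of a family of wheel obstructions. For each instance $x_i$ we attach a set $S_i$ of vertices lying \emph{outside} $X$ (hence pairwise non-adjacent, keeping the vertex cover small) whose neighbourhoods into $A$ encode $x_i$, arranged so that: if the selector is set to $i$ and $x_i$ is feasible, then deleting a budget of $k=\mathrm{poly}(n)$ vertices from $A\cup S_i$ (a number independent of $t$) leaves the graph almost wheel-free, typically by collapsing each surviving long rim down to length exactly $4$ so that only harmless copies of $W_4$ remain; whereas if $x_i$ is infeasible, or the selector fails to commit to a single instance, a forbidden wheel ($W_3$ or $W_{\ge 5}$) always persists. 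The selector bit-vertices and the obstruction backbone are made effectively undeletable by the budget, e.g.\ by placing each in sufficiently many private forbidden wheels or by taking enough parallel copies, so that any solution of size $\le k$ is forced to respect one instance.

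The main technical obstacle — and precisely the point where \textsc{Almost Wheel-free Deletion} diverges from ordinary \textsc{Wheel-free Deletion}, as already foreshadowed by \cref{thm:awfnorankc} — is the soundness argument. Because $W_4$ is permitted, a deletion set has the extra freedom to merely \emph{shrink} a large wheel to a $W_4$ rather than destroy it outright, and I must rule out that this freedom lets a \textsc{no}-instance masquerade as a \textsc{yes}-instance. I expect this to require a careful case analysis of which induced wheels can occur in $G-S$ for a hypothetical small solution $S$: one shows that simultaneously eliminating every triangle-with-hub ($W_3=K_4$) and every long-rim-with-hub configuration within budget is possible only by honouring the feasibility of a single selected instance. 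The completeness direction (a feasible $x_i$ yields an explicit valid deletion set) and the parameter bound $|X|=\mathrm{poly}(n+\log t)$ should then follow directly from the design, since only the $O(\log t)$ selector vertices and the fixed-size backbone lie in the vertex cover while the instance-specific sets $S_i$ are independent. Combining the cross-composition with the standard framework then yields the stated conditional lower bound.
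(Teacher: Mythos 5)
Your proposal is a strategy outline rather than a proof, and the gap is exactly where the difficulty lies. You correctly identify the right machinery (an OR-style hardness argument transferring coNP~$\subseteq$~NP/poly via known composition/transformation theorems, with the parameter bounded polynomially in $n+\log t$) and the right intuition (the construction must exploit that $W_4$ is allowed while $W_3$ and $W_{\geq 5}$ are forbidden). But the entire content of the theorem is the gadget, and you never give it: the ``interface'' set $A$, the encoding of an instance $x_i$ into the neighbourhoods of $S_i$, the mechanism by which the selector commits to one instance, and above all the soundness direction are all left as things you ``expect'' to work after ``a careful case analysis.'' You even flag soundness as the main obstacle without resolving it; since soundness is precisely the point where the $W_4$ loophole could break a naive construction (as your own reference to the failure of a rank-$c$ characterization suggests), nothing checkable has been established. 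Your sketched completeness mechanism (``collapsing each surviving long rim down to length exactly $4$'') is also not obviously consistent with keeping the budget and the vertex cover under control, and no argument is offered.

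For comparison, the paper avoids instance selection altogether: it gives a polynomial parameter transformation from \textsc{CNF-SAT} parameterized by the number of variables $n$ (which has no polynomial compression unless coNP~$\subseteq$~NP/poly). For each variable it creates an adjacent pair $v_{x_i}, v_{\neg x_i}$; many private $K_4$'s (i.e.\ $W_3$'s) built with degree-two vertices $u_i^j, w_i^j$ force any budget-$n$ solution to delete exactly one vertex per pair, so the survivors encode a truth assignment and induce a cycle of length $n\geq 5$. Each clause vertex lies \emph{outside} the vertex cover and is wired so that it is adjacent to \emph{all} survivors exactly when its clause is falsified, in which case it is the hub of a forbidden $W_{n}$ with $n\geq 5$; if the assignment satisfies $\phi$, a short case analysis shows no vertex can be the hub of a $W_3$ or $W_{\geq 5}$, only harmless $W_4$'s remain. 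The vertex cover has size $2n^2+4n$, independent of the number of clauses. If you want to salvage your route, you would either need to supply a concrete cross-composition gadget with a full soundness proof, or switch to a transformation from \textsc{CNF-SAT}/$n$ of this kind; as it stands, the proposal does not prove the theorem.
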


This suggests that the condition of being characterized by low-rank adjacencies is the right way to capture kernelization complexity.

\section{Conclusion}\label{sec:conclusion}
We have presented a framework that can be used to obtain polynomial kernels for the \textsc{$\Pi$-free Deletion} problem parameterized by the size of a vertex cover, based on the novel concept of characterizations by low-rank adjacencies. Our framework significantly extends the scope of the earlier framework of Fomin et al.~\cite{FOMIN2014468}. In addition to the examples given in \cref{table:results}, the framework can be applied to obtain kernels for a wide range of vertex-deletion problems. Using the fact that graph properties characterized by low-rank adjacencies are closed under taking a union (\cref{prop:graphpropertymerge}), together with the characterizations by low-rank adjacencies developed here, and characterizations by few adjacencies by Fomin et al.~\cite[Table 1]{FOMIN2014468}, we obtain the following.

\begin{corollary}
Let~$\mathcal{F}$ be a hereditary graph class defined by an arbitrary combination of the following properties: 
    being wheel-free,
    being odd-hole-free,
    being odd-anti-hole-free,
    being even-hole-free,
    being AT-free,
    being bipartite, 
    being $C_{\geq c}$-free for some fixed~$c \in \mathbb{N}$,
    being $H$-minor-free for some fixed graph~$H$,
    being $H$-free for some fixed graph~$H$ containing at least one edge, and
    having a Hamiltonian cycle (respectively, path).
Then the problem of testing whether an input graph~$G$ can be turned into a member of~$\mathcal{F}$ by removing at most~$k$ vertices, has a polynomial kernel parameterized by vertex cover.
\end{corollary}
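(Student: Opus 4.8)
The plan is to assemble the corollary from the building blocks already developed. First I would observe that the target class~$\mathcal{F}$ is hereditary and defined by a combination of the listed properties, so its complement---the set of graphs \emph{not} in~$\mathcal{F}$---is exactly the union, over each listed property~$P$ that~$\mathcal{F}$ is required to satisfy, of the graph property ``violating~$P$'' (i.e., containing a forbidden induced subgraph witnessing the violation of~$P$). Concretely, set~$\Pi_{\mathcal{F}} := \bigcup_P \Pi_P$, where~$\Pi_P$ ranges over: $\Pi_{W_{\geq 3}}$ for wheel-freeness, $\Pi_{OH}$, $\Pi_{OAH}$, $\Pi_{EH}$ for the (anti-)hole conditions, $\Pi_{AT}$ for AT-freeness, ``containing an odd cycle'' for bipartiteness, $\Pi_{C_{\geq c}}$ for $C_{\geq c}$-freeness, ``containing~$H$ as a minor'' for $H$-minor-freeness, ``containing~$H$ as an induced subgraph'' for $H$-freeness, and ``not containing a Hamiltonian cycle/path'' for the Hamiltonicity condition. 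Then $\mathcal{F}$-\textsc{Deletion} (remove~$\le k$ vertices so the graph lies in~$\mathcal{F}$) is precisely \textsc{$\Pi_{\mathcal{F}}$-free Deletion}.

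Next I would verify the three hypotheses of \cref{thm:framework} for~$\Pi_{\mathcal{F}}$. For condition (i), each individual~$\Pi_P$ is characterized by rank-$c$ adjacencies for some constant~$c$: the (anti-)hole and AT cases come from \cref{cor:perfect}, \cref{thm:atfreerankc}, and the even-hole variant in \cref{app:evenhole}; wheel-freeness from \cref{thm:wheelfreerankc}; and the remaining properties (bipartiteness via odd cycles, $C_{\geq c}$-freeness, $H$-minor-freeness, $H$-freeness, Hamiltonicity) are all characterized by a \emph{finite} number of adjacencies by Fomin et al.~\cite[Table 1]{FOMIN2014468}, hence by rank-$c$ adjacencies via \cref{lemma:c_to_rankc}. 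Applying \cref{prop:graphpropertymerge} repeatedly (the union of finitely many properties, each characterized by rank-$c_i$ adjacencies, is characterized by rank-$\max_i c_i$ adjacencies), we conclude~$\Pi_{\mathcal{F}}$ is characterized by rank-$c$ adjacencies for some constant~$c$ depending only on the combination chosen. For condition (ii), I should remark that every graph in each of these~$\Pi_P$ contains at least one edge---this is immediate for all the cyclic/wheel/forbidden-subgraph-with-an-edge cases, and holds for the Hamiltonicity and minor cases as long as we discard trivially-small graphs (a graph on~$\le 2$ vertices is handled in constant time, or one restricts attention to~$H$ with an edge, exactly as the statement does for the $H$-free case). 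For condition (iii), each vertex-minimal member of~$\Pi_{\mathcal{F}}$ is a vertex-minimal member of some~$\Pi_P$, and for each~$P$ the vertex-minimal obstructions have size at most linear (cycles, wheels, (anti-)holes, asteroidal witnesses: $|V(G)| \le \mathcal{O}(\vc(G))$), or bounded by a constant ($H$-minor/$H$-subgraph obstructions), or can be bounded by a polynomial in the vertex-cover number; taking the maximum over the finitely many~$P$ gives a single non-decreasing polynomial~$p$.

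The delicate points to handle with care are the Hamiltonicity and $H$-minor conditions, where it is not entirely obvious a priori that vertex-minimal obstructions are polynomially bounded in terms of~$\vc$. For $H$-minor-freeness, a vertex-minimal graph containing~$H$ as a minor is a subdivision of a subgraph of~$H$ (after suppressing degree-two branch contractions), so its vertex-cover number and its total size are both bounded in terms of~$|V(H)|$ and the number of subdivided paths, and one checks the relevant polynomial relationship between~$|V(G)|$ and~$\vc(G)$ holds for such graphs---this is the standard observation underlying~\cite[Table 1]{FOMIN2014468}, which I would cite rather than reprove. For Hamiltonicity (``being non-Hamiltonian'' as the forbidden property): here~$\Pi_P$ is the set of graphs with \emph{no} Hamiltonian cycle, which is vacuously closed downward only in a weak sense, but Fomin et al.\ already handle this in their framework and list it in~\cite[Table 1]{FOMIN2014468}, so I would simply invoke their characterization by finitely many adjacencies together with their polynomial bound on vertex-minimal obstructions. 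Once all three conditions are checked, \cref{thm:framework} yields a polynomial kernel with $\mathcal{O}((x+p(x))\cdot x^c)$ vertices, completing the proof.

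The main obstacle I anticipate is purely bookkeeping: ensuring that \emph{every} listed property genuinely satisfies conditions (ii) and (iii) with the right uniformity (a single constant~$c$ and a single polynomial~$p$ over all combinations), and correctly citing which of these come from~\cite[Table 1]{FOMIN2014468} versus from the new results in \cref{sec:using}. There is no new technical idea required---the corollary is a direct consequence of closure under union (\cref{prop:graphpropertymerge}), the embedding of few-adjacency characterizations into low-rank ones (\cref{lemma:c_to_rankc}), and the kernelization meta-theorem (\cref{thm:framework})---but the proof write-up must be careful to enumerate the cases and flag the mild genericity caveats (graphs on~$\le 2$ vertices, $H$ with an edge) already present in the statement.
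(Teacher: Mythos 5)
Your overall route is exactly the paper's intended derivation: translate membership in~$\mathcal{F}$ into \textsc{$\Pi$-free Deletion} for~$\Pi$ the union of the forbidden properties, get rank-$c$ characterizations for the new properties from \cref{cor:perfect}, \cref{thm:atfreerankc}, \cref{thm:evenholefreerankc}, and \cref{thm:wheelfreerankc}, lift the few-adjacency characterizations of Fomin et al.~\cite{FOMIN2014468} via \cref{lemma:c_to_rankc}, combine with \cref{prop:graphpropertymerge}, and invoke \cref{thm:framework}; the bookkeeping for conditions (ii) and (iii) is as you describe.

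There is, however, one concrete misstep: your treatment of the Hamiltonicity item. You take the forbidden property to be ``not containing a Hamiltonian cycle (path)''. With that choice the framework's hypotheses fail outright: condition (ii) is violated because edgeless graphs (indeed~$K_1$) have no Hamiltonian cycle, the property is not among those characterized by few adjacencies in~\cite[Table 1]{FOMIN2014468}, and the resulting class ``every induced subgraph is Hamiltonian'' is degenerate rather than the intended one. The intended reading --- the only one compatible with~$\mathcal{F}$ being hereditary and with the citation of Table~1 --- is that the forbidden property is ``\emph{having} a Hamiltonian cycle (respectively, path)'', i.e.~$\mathcal{F}$ consists of graphs with no induced subgraph possessing a Hamiltonian cycle/path. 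That property does satisfy all three hypotheses: it is characterized by~$2$ adjacencies (keep the two neighbours of a vertex on the Hamiltonian cycle/path), every member has an edge, and its vertex-minimal members are induced cycles (respectively, induced paths), whose size is linear in their vertex cover number. Replacing your inverted choice of~$\Pi$ by this one repairs the argument; the remaining cases are handled correctly and in the same way as the paper.
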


%To apply our kernelization framework for \textsc{$\Pi$-Free Deletion}, one has to establish that $\Pi$ is characterized by rank-$c$ adjacencies for some $c \in \mathbb{N}$. We have presented several examples of~$\Pi$ for which this holds. 
It would be interesting to see whether the exponents given by \cref{table:results} are tight~(cf.~\cite{GiannopoulouForbiddenMinors}).% \textsc{(co-)Comparability Deletion} and \textsc{Permutation Deletion}, our proofs do not directly translate to these problems. One could investigate the existence of a graph property $\Pi$ such that the \textsc{$\Pi$-free Deletion} problem parameterized by the size of a vertex cover admits a polynomial kernel, where $\Pi$ does not fit in our framework. This could once again lead to an extension of the framework.

% \textsc{Interval Deletion} was shown to admit a polynomial kernel parameterized by solution size by Agrawal et al.~\cite{AGRAWAL2019}. A similar result for \textsc{Perfect Deletion} parameterized by solution size is unlikely, it was shown to be W[2]-hard by Heggernes et al.~\cite{Heggernes:2013:PCV:2560981.2561653}. One could investigate the existence of a parameter smaller than the vertex cover size for which the \textsc{Perfect Deletion} problem still admits a polynomial kernel. 

%%
%% Bibliography
%%

%% Please use bibtex, 

\bibliography{references-utf8}

\begin{thebibliography}{10}

\bibitem{AGRAWAL2019}
Akanksha Agrawal, Pranabendu Misra, Saket Saurabh, and Meirav Zehavi.
\newblock Interval vertex deletion admits a polynomial kernel.
\newblock In {\em Proceedings of the 30th Annual ACM-SIAM Symposium on Discrete
  Algorithms}, SODA '19, pages 1711--1730, 2019.
\newblock URL: \url{http://dl.acm.org/citation.cfm?id=3310435.3310538}.

\bibitem{BodlaenderFLPST16}
Hans~L. Bodlaender, Fedor~V. Fomin, Daniel Lokshtanov, Eelko Penninkx, Saket
  Saurabh, and Dimitrios~M. Thilikos.
\newblock ({M}eta) {K}ernelization.
\newblock {\em J. {ACM}}, 63(5):44:1--44:69, 2016.
\newblock \href {http://dx.doi.org/10.1145/2973749}
  {\path{doi:10.1145/2973749}}.

\bibitem{BodlaenderCrossComposition}
H.L. Bodlaender, B.M.P. Jansen, and S.~Kratsch.
\newblock Kernelization lower bounds by cross-composition.
\newblock {\em SIAM Journal on Discrete Mathematics}, 28(1):277--305, 2014.
\newblock \href {http://dx.doi.org/10.1137/120880240}
  {\path{doi:10.1137/120880240}}.

\bibitem{BougeretS19}
Marin Bougeret and Ignasi Sau.
\newblock How much does a treedepth modulator help to obtain polynomial kernels
  beyond sparse graphs?
\newblock {\em Algorithmica}, 81(10):4043--4068, 2019.
\newblock \href {http://dx.doi.org/10.1007/s00453-018-0468-8}
  {\path{doi:10.1007/s00453-018-0468-8}}.

\bibitem{GraphClassesSurvey}
A.~Brandst{\"a}dt, V.~Le, and J.~Spinrad.
\newblock {\em Graph Classes: A Survey}.
\newblock Society for Industrial and Applied Mathematics, 1999.
\newblock \href {http://dx.doi.org/10.1137/1.9780898719796}
  {\path{doi:10.1137/1.9780898719796}}.

\bibitem{BurrageEFLMR06}
Kevin Burrage, Vladimir Estivill{-}Castro, Michael~R. Fellows, Michael~A.
  Langston, Shev Mac, and Frances~A. Rosamond.
\newblock The undirected feedback vertex set problem has a poly(\emph{k})
  kernel.
\newblock In Hans~L. Bodlaender and Michael~A. Langston, editors, {\em
  Proceedings of the 2nd Second International Workshop on Parameterized and
  Exact Computation, {IWPEC} 2006}, volume 4169 of {\em Lecture Notes in
  Computer Science}, pages 192--202. Springer, 2006.
\newblock \href {http://dx.doi.org/10.1007/11847250_18}
  {\path{doi:10.1007/11847250_18}}.

\bibitem{CaoRSY18}
Yixin Cao, Ashutosh Rai, R.~B. Sandeep, and Junjie Ye.
\newblock A polynomial kernel for diamond-free editing.
\newblock In Yossi Azar, Hannah Bast, and Grzegorz Herman, editors, {\em
  Proceedings of the 26th Annual European Symposium on Algorithms, {ESA} 2018},
  volume 112 of {\em LIPIcs}, pages 10:1--10:13. Schloss Dagstuhl -
  Leibniz-Zentrum fuer Informatik, 2018.
\newblock \href {http://dx.doi.org/10.4230/LIPIcs.ESA.2018.10}
  {\path{doi:10.4230/LIPIcs.ESA.2018.10}}.

\bibitem{zbMATH05081753}
Maria {Chudnovsky}, Neil {Robertson}, Paul {Seymour}, and Robin {Thomas}.
\newblock {The strong perfect graph theorem.}
\newblock {\em {Ann. Math. (2)}}, 164(1):51--229, 2006.
\newblock \href {http://dx.doi.org/10.4007/annals.2006.164.51}
  {\path{doi:10.4007/annals.2006.164.51}}.

\bibitem{CYGANPARALG2015}
Marek Cygan, Fedor~V. Fomin, Łukasz Kowalik, Daniel Lokshtanov, D{\'a}niel
  Marx, Marcin Pilipczuk, Michał Pilipczuk, and Saket Saurabh.
\newblock {\em Parameterized Algorithms}.
\newblock Springer, 2015.

\bibitem{CyganPPLW17}
Marek Cygan, Marcin Pilipczuk, Michal Pilipczuk, Erik~Jan van Leeuwen, and
  Marcin Wrochna.
\newblock Polynomial kernelization for removing induced claws and diamonds.
\newblock {\em Theory Comput. Syst.}, 60(4):615--636, 2017.
\newblock \href {http://dx.doi.org/10.1007/s00224-016-9689-x}
  {\path{doi:10.1007/s00224-016-9689-x}}.

\bibitem{DowneyF13}
Rodney~G. Downey and Michael~R. Fellows.
\newblock {\em Fundamentals of Parameterized Complexity}.
\newblock Texts in Computer Science. Springer, 2013.
\newblock \href {http://dx.doi.org/10.1007/978-1-4471-5559-1}
  {\path{doi:10.1007/978-1-4471-5559-1}}.

\bibitem{FellowsJR13}
Michael~R. Fellows, Bart M.~P. Jansen, and Frances~A. Rosamond.
\newblock Towards fully multivariate algorithmics: Parameter ecology and the
  deconstruction of computational complexity.
\newblock {\em Eur. J. Comb.}, 34(3):541--566, 2013.
\newblock \href {http://dx.doi.org/10.1016/j.ejc.2012.04.008}
  {\path{doi:10.1016/j.ejc.2012.04.008}}.

\bibitem{Fine1947}
N.~J. Fine.
\newblock Binomial coefficients modulo a prime.
\newblock {\em The American Mathematical Monthly}, 54(10):589--592, 1947.
\newblock URL: \url{http://www.jstor.org/stable/2304500}.

\bibitem{FOMIN2014468}
Fedor~V. Fomin, Bart~M.P. Jansen, and Michał Pilipczuk.
\newblock Preprocessing subgraph and minor problems: When does a small vertex
  cover help?
\newblock {\em Journal of Computer and System Sciences}, 80(2):468--495, 2014.
\newblock \href {http://dx.doi.org/10.1016/j.jcss.2013.09.004}
  {\path{doi:10.1016/j.jcss.2013.09.004}}.

\bibitem{FominLMS12}
Fedor~V. Fomin, Daniel Lokshtanov, Neeldhara Misra, and Saket Saurabh.
\newblock Planar {$\mathcal{F}$}-deletion: Approximation, kernelization and
  optimal {FPT} algorithms.
\newblock In {\em Proceedings of the 53rd Annual {IEEE} Symposium on
  Foundations of Computer Science, {FOCS} 2012}, pages 470--479. {IEEE}
  Computer Society, 2012.
\newblock \href {http://dx.doi.org/10.1109/FOCS.2012.62}
  {\path{doi:10.1109/FOCS.2012.62}}.

\bibitem{GiannopoulouForbiddenMinors}
Archontia~C. Giannopoulou, Bart M.~P. Jansen, Daniel Lokshtanov, and Saket
  Saurabh.
\newblock Uniform kernelization complexity of hitting forbidden minors.
\newblock {\em {ACM} Trans. Algorithms}, 13(3):35:1--35:35, 2017.
\newblock \href {http://dx.doi.org/10.1145/3029051}
  {\path{doi:10.1145/3029051}}.

\bibitem{Heggernes:2013:PCV:2560981.2561653}
Pinar Heggernes, Pim Van~'t Hof, Bart M.~P. Jansen, Stefan Kratsch, and Yngve
  Villanger.
\newblock Parameterized complexity of vertex deletion into perfect graph
  classes.
\newblock {\em Theor. Comput. Sci.}, 511:172--180, 2013.
\newblock \href {http://dx.doi.org/10.1016/j.tcs.2012.03.013}
  {\path{doi:10.1016/j.tcs.2012.03.013}}.

\bibitem{Iwata17}
Yoichi Iwata.
\newblock Linear-time kernelization for feedback vertex set.
\newblock In Ioannis Chatzigiannakis, Piotr Indyk, Fabian Kuhn, and Anca
  Muscholl, editors, {\em Proceedings of the 44th International Colloquium on
  Automata, Languages, and Programming, {ICALP} 2017}, volume~80 of {\em
  LIPIcs}, pages 68:1--68:14. Schloss Dagstuhl - Leibniz-Zentrum fuer
  Informatik, 2017.
\newblock \href {http://dx.doi.org/10.4230/LIPIcs.ICALP.2017.68}
  {\path{doi:10.4230/LIPIcs.ICALP.2017.68}}.

\bibitem{JansenK13}
Bart M.~P. Jansen and Stefan Kratsch.
\newblock Data reduction for graph coloring problems.
\newblock {\em Inf. Comput.}, 231:70--88, 2013.
\newblock \href {http://dx.doi.org/10.1016/j.ic.2013.08.005}
  {\path{doi:10.1016/j.ic.2013.08.005}}.

\bibitem{JansenP19}
Bart M.~P. Jansen and Astrid Pieterse.
\newblock Optimal data reduction for graph coloring using low-degree
  polynomials.
\newblock {\em Algorithmica}, 81(10):3865--3889, 2019.
\newblock \href {http://dx.doi.org/10.1007/s00453-019-00578-5}
  {\path{doi:10.1007/s00453-019-00578-5}}.

\bibitem{JansenP18}
Bart M.~P. Jansen and Marcin Pilipczuk.
\newblock Approximation and kernelization for chordal vertex deletion.
\newblock {\em {SIAM} J. Discrete Math.}, 32(3):2258--2301, 2018.
\newblock \href {http://dx.doi.org/10.1137/17M112035X}
  {\path{doi:10.1137/17M112035X}}.

\bibitem{KOHLER_ATFREE}
Ekkehard K\"ohler.
\newblock {\em Graphs without asteroidal triples}.
\newblock PhD thesis, TU Berlin, 1999.

\bibitem{KratschW14}
Stefan Kratsch and Magnus Wahlstr{\"{o}}m.
\newblock Compression via matroids: {A} randomized polynomial kernel for odd
  cycle transversal.
\newblock {\em {ACM} Trans. Algorithms}, 10(4):20:1--20:15, 2014.
\newblock \href {http://dx.doi.org/10.1145/2635810}
  {\path{doi:10.1145/2635810}}.

\bibitem{LEWIS1980219}
John~M. Lewis and Mihalis Yannakakis.
\newblock The node-deletion problem for hereditary properties is {NP}-complete.
\newblock {\em Journal of Computer and System Sciences}, 20(2):219--230, 1980.
\newblock \href {http://dx.doi.org/10.1016/0022-0000(80)90060-4}
  {\path{doi:10.1016/0022-0000(80)90060-4}}.

\bibitem{Lokshtanov08}
Daniel Lokshtanov.
\newblock Wheel-free deletion is {W[2]}-hard.
\newblock In Martin Grohe and Rolf Niedermeier, editors, {\em Proceedings of
  the Third International Workshop on Parameterized and Exact Computation,
  {IWPEC} 2008}, volume 5018 of {\em Lecture Notes in Computer Science}, pages
  141--147. Springer, 2008.
\newblock \href {http://dx.doi.org/10.1007/978-3-540-79723-4_14}
  {\path{doi:10.1007/978-3-540-79723-4_14}}.

\bibitem{UhlmannW13}
Johannes Uhlmann and Mathias Weller.
\newblock Two-layer planarization parameterized by feedback edge set.
\newblock {\em Theor. Comput. Sci.}, 494:99--111, 2013.
\newblock \href {http://dx.doi.org/10.1016/j.tcs.2013.01.029}
  {\path{doi:10.1016/j.tcs.2013.01.029}}.

\bibitem{KernelizationBook2018}
Fedor V.~Fomin, Daniel Lokshtanov, Saket Saurabh, and Meirav Zehavi.
\newblock {\em Kernelization: Theory of Parameterized Preprocessing}.
\newblock Cambridge University Press, 2019.
\newblock \href {http://dx.doi.org/10.1017/9781107415157}
  {\path{doi:10.1017/9781107415157}}.

\end{thebibliography}

%%%%%%%%%%%%%%
%% APPENDIX %%
%%%%%%%%%%%%%%

\clearpage

\appendix
\section{Appendix}

%proposition adjacencyshare

%evenholefree
\subsection{Even-hole-free Deletion} \label{app:evenhole}
An even hole is a cycle $C_n$, where $n \geq 4$ is an even number. Let $\Pi_{EH}$ be the set of graphs that contain an even hole as induced subgraph. Then the \textsc{Even-hole-free Deletion} problem corresponds to the \textsc{$\Pi_{EH}$-free Deletion} problem.

\begin{lemma}\label{lemma:evenhole_evenseen}
Let $G$ be a graph, $P = (v_1,...,v_n)$ where $n \geq 3$ is odd be an induced path in $G$, and let $y$ be a vertex not on $P$ that is adjacent to both endpoints of $P$ and sees an even number of vertices of $P$. Then $G[V(P) \cup \{y\}]$ contains an even hole as induced subgraph.
\end{lemma}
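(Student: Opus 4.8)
The plan is to prove this directly by a short parity count on the neighbours of $y$ along $P$, which seems cleaner than an induction mirroring the proof of \cref{lemma:odd(anti)hole_(odd)evenseen}. First I would introduce notation for the trace of $N_G(y)$ on the path: write $N_G(y) \cap V(P) = \{v_{i_1}, \dots, v_{i_m}\}$ with $1 = i_1 < i_2 < \dots < i_m = n$. The two extreme indices are exactly $1$ and $n$ because $y$ is adjacent to both endpoints of $P$, and $m = |N_G(y) \cap V(P)|$ is even by hypothesis.

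Next I would look at the $m-1$ consecutive gaps $g_j := i_{j+1} - i_j$ for $j \in [m-1]$. Each $g_j$ is a positive integer, and these gaps telescope to $\sum_{j=1}^{m-1} g_j = i_m - i_1 = n - 1$, which is even since $n$ is odd. Because $m-1$ is odd, the $g_j$ cannot all be odd (an odd number of odd summands would be odd), so some gap $g_j$ is even; being positive, $g_j \geq 2$.

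For such an index $j$, set $W := \{v_{i_j}, v_{i_j+1}, \dots, v_{i_{j+1}}\} \cup \{y\}$. Since $P$ is an induced path, $G[W \setminus \{y\}]$ is an induced subpath on $g_j + 1$ vertices, and since $i_j$ and $i_{j+1}$ are consecutive elements of $N_G(y) \cap V(P)$, the only neighbours of $y$ inside $W$ are $v_{i_j}$ and $v_{i_{j+1}}$. Hence $G[W]$ is an induced cycle of length $g_j + 2$, which is even and at least $4$; in other words, it is an even hole contained in $G[V(P) \cup \{y\}]$, as required.

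There is no genuine obstacle here. The only points requiring care are the parity bookkeeping — keeping straight that $\sum_j g_j = n-1$ is even while $m-1$ is odd — and the two routine sanity checks that $G[W]$ is truly induced and that its length is at least $4$ (so it is a hole and not a triangle), both of which follow immediately from $P$ being induced and from $g_j$ being even and positive. One could instead attempt an induction on $n$ in the style of \cref{lemma:odd(anti)hole_(odd)evenseen}, peeling off endpoints according to whether $y$ is adjacent to $v_2$ or $v_{n-1}$, but the direct count above avoids the more delicate parity case-split that such a recursion seems to require.
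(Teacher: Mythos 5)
Your proof is correct, and it takes a genuinely different route from the paper's. The paper proves \cref{lemma:evenhole_evenseen} by induction on $n$, mirroring the structure of its proof of \cref{lemma:odd(anti)hole_(odd)evenseen}: it peels off cases according to whether $y$ has internal neighbours on $P$, picks the smallest and largest internal indices $j, j'$ adjacent to $y$, closes a hole immediately if either is odd, and otherwise recurses on the subpath $(v_j,\dots,v_{j'})$. Your argument instead lists the neighbours $v_{i_1},\dots,v_{i_m}$ of $y$ on $P$ and counts parities of the gaps $g_j = i_{j+1}-i_j$: since the $m-1$ gaps (an odd number of them, as $m$ is even) sum to $n-1$ (even), some gap is even and positive, and that gap together with $y$ explicitly yields an induced cycle of even length $g_j+2 \geq 4$. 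This is shorter, non-recursive, and exhibits the even hole in one step; what the paper's induction buys is uniformity with the odd-hole lemma, where the hypothesis is about seen \emph{edges} rather than seen vertices and a direct gap count is less immediate. All the small checks in your write-up (that $i_1=1$, $i_m=n$, that the cycle is induced because $P$ is induced and $i_j, i_{j+1}$ are consecutive neighbours, and that the length is at least $4$) go through as stated.
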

\begin{proof}
We prove the claim by induction on $n$. Consider the case that $n=3$. Since $y$ is adjacent to both $v_1$ and $v_3$, the only way for it to see an even number of vertices of $P$ is for $y$ not to be adjacent to $v_2$. Therefore $G[V(P) \cup \{y\}]$ induces an even hole and the statement holds.

In the remainder, assume that the claim holds for $n' < n$, where $n \geq 5$ and both $n'$ and $n$ are odd. If $y$ is not adjacent to $v_i$ for any $1 < i < n$, then $G[V(P) \cup \{y\}]$ induces an even hole and the statement holds. Otherwise, since $y$ sees an even number of path vertices it sees at least two more. Let $j > 1$ be the smallest index such that $y$ is adjacent to $v_j$ and let $j' < n$ be the largest index such that $y$ is adjacent to $v_{j'}$. If $j$ is odd, then $G[\{v_1,...,v_j,y\}]$ induces an even hole and the statement holds. If $j'$ is odd, then $G[\{v_{j'},...,v_n,y\}]$ induces an odd hole and the statement holds. Otherwise, both $j$ and $j'$ are even. But then $P' = (v_j,...,v_{j'})$ is a path of odd length at least 3, where $y$ is adjacent to both endpoints and sees an even number of vertices. So by the induction hypothesis, the statement holds. 
\end{proof}

\begin{theorem}\label{thm:evenholefreerankc}
$\Pi_{EH}$ is characterized by rank-3 adjacencies with singleton replacements. 
\end{theorem}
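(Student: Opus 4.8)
The plan is to mirror the proof of \cref{thm:oddholerankc} for odd holes, but using \cref{lemma:evenhole_evenseen} in place of \cref{lemma:odd(anti)hole_(odd)evenseen} and taking into account that the parity bookkeeping now concerns \emph{vertices} seen on a path of \emph{odd} length rather than \emph{edges} seen on a path of \emph{even} length. Concretely, let $H$ be a graph with vertex cover $X$, let $D \subseteq V(H)\setminus X$ with $H-D \in \Pi_{EH}$, and let $v \in V(H)\setminus(D\cup X)$ with $\incs{3}{H}{X}(v) = \sum_{u\in D}\incs{3}{H}{X}(u)$. Fix an even hole $C$ in $H-D$. If $v \notin V(C)$ we are done with any singleton replacement, so assume $v \in V(C)$; write $C = (v, p, v_1, \ldots, v_{n-3}, q)$ and let $P = (p, v_1, \ldots, v_{n-3}, q)$ be the induced path between the two neighbours of $v$ on the hole. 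Since $C$ is an even hole, $P$ has an \emph{odd} number of vertices, and $v$ is adjacent to both $p$ and $q$, which are themselves \emph{nonadjacent} (as $C$ is an induced cycle of length $\geq 5$).

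Next I would apply \cref{prop:adjacencyshare} with $S = \{p,q\}$ (note $|S| = 2 \leq 3$) to obtain an odd-size nonempty subset $D' \subseteq D$, every vertex of which shares adjacencies with $v$ to $\{p,q\}$ — hence is adjacent to both $p$ and $q$ — and such that $\incf{3}{H}{X}{\emptyset}{\{p,q\}}(v) = \sum_{u\in D'}\incf{3}{H}{X}{\emptyset}{\{p,q\}}(u)$ over $\mathbb{F}_2$, since $p,q \in N_H(v)$ so here $Q' = \emptyset$ and $R' = \{p,q\}$. By \cref{lemma:evenhole_evenseen}, if some $u \in D'$ is adjacent to an \emph{even} number of vertices of $P$, then $H[V(P)\cup\{u\}]$ contains an even hole, and since $V(P)\cup\{u\} \subseteq V(H-v-(D\setminus\{u\}))$, this yields the desired singleton replacement $v' := u$.

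So suppose for contradiction that every $u \in D'$ sees an odd number of vertices of $P$. For a vertex $w$ of $P$, let $D'_w \subseteq D'$ be the set of $u \in D'$ adjacent to $w$. The double-count $\sum_{w \in V(P)} |D'_w| = \sum_{u \in D'} (\text{number of vertices of } P \text{ adjacent to } u)$ is a sum of $|D'|$ odd numbers with $|D'|$ odd, hence odd. On the other hand, for each internal vertex $w = v_i$ of $P$ (i.e.\ $w \notin \{p,q\}$), the coordinate of the superscripted vectors indexed by $(\emptyset, \{p,q,w\})$ — which is a legal index since $\{p,q,w\}$ is an independent set of size $3$ contained in $X$ that contains $R' = \{p,q\}$ — forces, via the $\mathbb{F}_2$ sum and the fact that $v$ is \emph{not} adjacent to any internal vertex of the induced path $P$ (so $\incf{3}{H}{X}{\emptyset}{\{p,q\}}(v)[\emptyset,\{p,q,w\}] = 0$), that $|D'_w|$ is even for every internal $w$; and for $w \in \{p,q\}$ we have $|D'_w| = |D'|$, which is odd. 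Thus $\sum_{w\in V(P)}|D'_w| = |D'_p| + |D'_q| + \sum_{w \text{ internal}}|D'_w|$ is odd $+$ odd $+$ even $=$ even, contradicting the odd count above. Hence some $u \in D'$ sees an even number of vertices of $P$, completing the proof.

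The main obstacle I anticipate is the parity bookkeeping in the final step: one must be careful about whether the endpoints $p,q$ of $P$ are counted among the ``vertices seen'' (they are, which is why $|D'_p| = |D'_q| = |D'|$ contributes two odd terms), and one must confirm that the indices $(\emptyset, \{p,q,w\})$ used are genuinely present in the superscripted vectors — i.e.\ that $|\{p,q,w\}| \leq c = 3$, that these three vertices are pairwise nonadjacent (so they form a valid $(Q,R)$ pair with $Q = \emptyset$), and that they all lie in $X$. The first two points are exactly where the rank bound $c=3$ (versus $c=4$ for odd holes, where one tracks seen \emph{edges} and needs index sets of the form $\{p,q,v_i,v_{i+1}\}$) comes from; the third holds because $p, q$ and all $v_i$ lie on $C \subseteq H - D$ and are internal to the argument only through their membership in the vertex cover, which must be verified or arranged exactly as in the odd-hole case. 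The singleton-replacements claim follows for free since the chosen replacement is always a single vertex $u \in D'$.
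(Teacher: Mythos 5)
Your proof is correct and follows essentially the same route as the paper's: apply \cref{prop:adjacencyshare} with $S=\{p,q\}$, invoke \cref{lemma:evenhole_evenseen}, and derive a parity contradiction from the coordinates $(\emptyset,\{p,q,w\})$ for internal path vertices $w$. The only differences are cosmetic bookkeeping (you keep the endpoints $p,q$ in the double count instead of excluding them first) and a harmless superfluous remark --- the index $(\emptyset,\{p,q,w\})$ does not require $\{p,q,w\}$ to be independent (and indeed $w$ may be adjacent to $p$ or $q$), only that the sets are disjoint, of total size at most $3$, and contained in $X$.
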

\begin{proof}
Consider some graph $H$ with vertex cover $X$ and let $D \subseteq V(H) \setminus X$ such that $H-D \in \Pi_{EH}$. Let $v$ be an arbitrary vertex in $V(H) \setminus (D \cup X)$ such that $\incs{3}{H}{X}(v) = \sum_{u \in D} \incs{3}{H}{X}(u)$. We show that $H-v-(D\setminus \{v'\}) \in \Pi_{EH}$ for some $v' \in D$.

Let $C$ be an even hole in $H-D$. If $v \notin V(C)$, then for every $v' \in D$ we have $H-v-(D\setminus \{v'\}) \in \Pi_{EH}$. So suppose that $v \in C$. Let $C = (v,p,v_1,...,v_{n-3},q)$, where $|V(C)|=n$. Consider the induced path $P = (p,v_1,...,v_{n-3},q)$. We have that $v$ is adjacent to $p$ and $q$. Let $D' \subseteq D$ be a set that shares adjacencies with $v$ to $
\{p,q\}$ such that $|D'| \geq 1$ is odd and $\incf{3}{H}{X}{\emptyset}{\{p,q\}}(v) = \sum_{u \in D'} \incf{3}{H}{X}{\emptyset}{\{p,q\}}(u)$. Such set exists by \cref{prop:adjacencyshare}.
Since $C$ is an even hole, $|V(P)|$ is odd. Hence by Lemma~\ref{lemma:evenhole_evenseen}, $G[V(P) \cup \{u\}]$ contains an odd hole if there exists some $u \in D'$ that sees an even number of vertices of $P$.
Suppose for the sake of contradiction that every vertex in $D'$ sees an odd number of vertices of $V(P)$. Then also every vertex in $D'$ sees an odd number of vertices of $V(P) \setminus \{p,q\}$. Let $V_u$ be the set of vertices in $V(P)\setminus \{p,q\}$ that are seen by $u \in D'$. Then $\sum_{u \in D'} |V_u|$ is odd as it is a sum of an odd number of odd numbers. 
Let $D'_u \subseteq D'$ be the set of vertices that see vertex $u \in V(P) \setminus \{p,q\}$. 
In order to satisfy $\sum_{u \in D'}
\incp{\emptyset}{\{p,q\}}(u)[\emptyset,\{p,q,u\}] = \incp{\emptyset}{\{p,q\}}(v)[\emptyset,\{p,q,u\}] = 0$ over
$\mathbb{F}_2$
, for $u \in V(P) \setminus \{p,q\}$, 
we require $|D'_u|$ to be even. But then $\sum_{u \in V(P) \setminus \{p,q\}} |D'_u| = \sum_{u \in D'} |V_u|$ would also need to be an even number. This contradicts the fact that $\sum_{u \in D'} |V_u|$ is odd. Therefore there must exist some $u \in D'$ that sees an even number of vertices of $P$.
\end{proof}

Since every graph that contains an even hole has at least one edge and for every hole $H$, we have $|V(H)| \leq 2\cdot \vc(H)$.  By \cref{thm:framework} we obtain:

\begin{theorem}
\textsc{Even-hole-free Deletion} parameterized by the size of a vertex cover admits a polynomial kernel on $\mathcal{O}(|X|^4)$ vertices.
\end{theorem}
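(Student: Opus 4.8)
The plan is to obtain the kernel as a direct instantiation of the generic kernelization theorem (\cref{thm:framework}) for the graph property $\Pi_{EH}$ of graphs containing an even hole, for which \textsc{$\Pi_{EH}$-free Deletion} is exactly \textsc{Even-hole-free Deletion}. Concretely, I would verify the three hypotheses of \cref{thm:framework} in turn. Hypothesis (i) is supplied by \cref{thm:evenholefreerankc}, which says that $\Pi_{EH}$ is characterized by rank-$3$ adjacencies (with singleton replacements, although plain rank-$3$ is all that \cref{thm:framework} requires). Hypothesis (ii) is immediate: every graph in $\Pi_{EH}$ contains an induced cycle on at least four vertices, hence contains at least one edge.

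For hypothesis (iii), I would first pin down the graphs that are vertex-minimal with respect to $\Pi_{EH}$, namely the even holes $C_n$ with $n \ge 4$ even. Indeed, if $G \in \Pi_{EH}$ is vertex-minimal then $G$ cannot properly contain an even hole, and if $G$ had a vertex outside some even hole it contains, deleting that vertex would keep it in $\Pi_{EH}$, contradicting minimality; so $G$ itself is an even hole. For such a cycle, $\vc(C_n) = n/2$, whence $|V(C_n)| = n = 2 \cdot \vc(C_n)$, so the non-decreasing polynomial $p(x) := 2x$ witnesses hypothesis (iii). This is precisely the bound $|V(H)| \le 2\cdot\vc(H)$ noted just before the statement.

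With all three hypotheses in hand, I would apply \cref{thm:framework} with $c = 3$ and $p(x) = 2x$, which yields a polynomial kernel on $\mathcal{O}((x + p(x)) \cdot x^{c}) = \mathcal{O}((x + 2x)\cdot x^{3}) = \mathcal{O}(x^{4})$ vertices, where $x = |X|$. I do not anticipate any real obstacle here: all the difficulty is already absorbed into \cref{thm:evenholefreerankc} (itself resting on the parity argument of \cref{lemma:evenhole_evenseen}), and what remains is the routine bookkeeping of checking the edge condition, identifying the vertex-minimal instances and their linear vertex-cover bound, and reading off the exponent $c+1 = 4$ from the statement of \cref{thm:framework}.
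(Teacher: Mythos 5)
Your proposal is correct and follows essentially the same route as the paper: it instantiates \cref{thm:framework} with the rank-$3$ characterization of $\Pi_{EH}$ from \cref{thm:evenholefreerankc}, the trivial edge condition, and the bound $|V(H)| \le 2\cdot\vc(H)$ for even holes, yielding $\mathcal{O}(|X|^4)$ vertices. The only difference is that you spell out the identification of the vertex-minimal graphs, which the paper leaves implicit.
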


\subsection{Omitted material of Section \ref{sec:atfree}}
Let $\Pi_S$ be the set of graphs that contain at least one of these small graphs as induced subgraph, let $\Pi_{C_{\geq 6}}$ be the set of graphs that contain a chordless cycle of length at least 6, and let $\Pi_{AW}$ be the set of graphs that contain a so called asteroidal witness. Then $\Pi_{AT} = \Pi_S \cup \Pi_{C_{\geq 6}} \cup \Pi_{AW}$.  

Fomin et al.~\cite[Proposition 3]{FOMIN2014468} show that $\Pi_{C_{\geq 6}}$ is characterized by 5 adjacencies. By \cref{prop:infiniteset_finitegraphs} we have that $\Pi_S$ is characterized by 6 adjacencies. Therefore we focus our attention on the asteroidal witnesses, which are shown in \cref{fig:aws}.
\begin{definition}\label{def:aws}
The different types of \emph{asteroidal witnesses} (AWs) are constructed as follows.
\begin{itemize}
    \item $\dagger_z$-AW: A graph $G$ such that $V(G) = \{t_l,t_r,t,c\} \cup \{b_1,...,b_z\}$, where $t_l = b_0$ and $t_r = b_{z+1}$, $E(G) = \{\{t,c\}\} \cup \{\{c,b_i\} \mid i \in [z]\} \cup \{\{b_{i-1},b_i\} \mid i \in [z+1]\}$, and $z \geq 2$.
    \item $\ddagger_z$-AW: A graph $G$ such that $V(G) = \{t_l,t_r,t,c_1,c_2\} \cup \{b_1,...,b_z\}$, where $t_l = b_0$ and $t_r = b_{z+1}$, $E(G) = \{\{t,c_1\},\{t,c_2\},\{c_1,c_2\}\} \cup \{\{c_j,b_i\} \mid i \in [z], j\in \{1,2\}\} \cup  \{\{b_{i-1},b_i\} \mid i \in [z+1]\}$, and $z \geq 1$.
    \item $\diamond_z$-AW: A graph $G$ such that $V(G) = \{t_l,t_r,t,c_1,c_2\} \cup \{b_1,...,b_z\}$, where $t_l = b_0$ and $t_r = b_{z+1}$, $E(G) = \{\{t,c_1\},\{t,c_2\}\} \cup \{\{c_j,b_i\} \mid i \in [z], j\in \{1,2\}\} \cup  \{\{b_{i-1},b_i\} \mid i \in [z+1]\}$, and $z \geq 1$.
\end{itemize}
Here $t_l$, $t_r$, and $t$ are called \emph{terminal} vertices, with \emph{top} vertex $t$. These three vertices form an asteroidal triple. Vertices $c$, $c_1$, and $c_2$ are called \emph{center} vertices. Finally $b_i$ for $i \in [z]$ are called \emph{path} vertices.
\end{definition}

\begin{figure}
    \centering
    \includegraphics[scale=0.8]{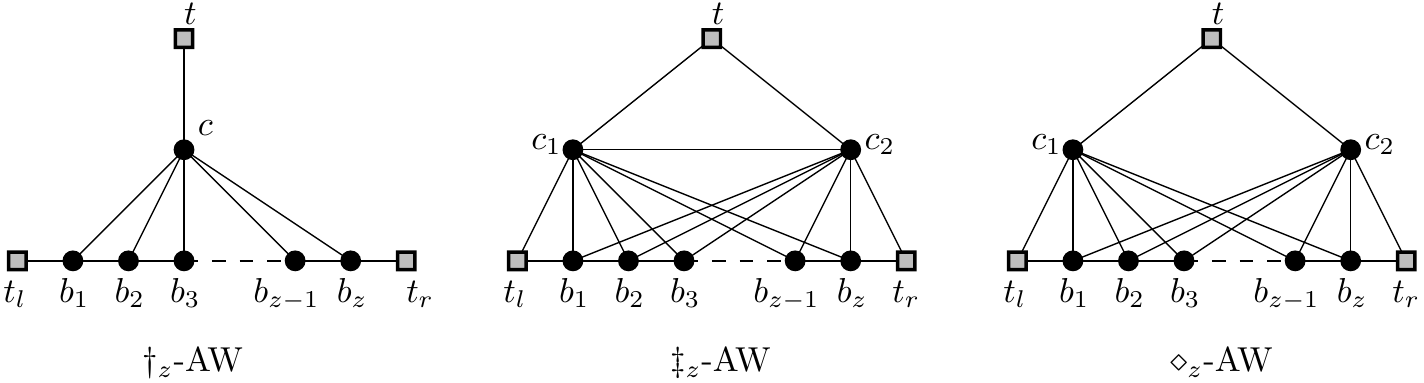}
    \caption{Asteroidal witnesses.}
    \label{fig:aws}
\end{figure}

The difficulty with these asteroidal witnesses lies with the top vertex $t$. If $t$ would be adjacent to a single path vertex $b_i$ for some $i \in [z]$, then the resulting graph no longer contains an asteroidal triple. It is possible to show that, because of this difficulty, no finite adjacency characterization exists for asteroidal witnesses. We show that if we have a set of vertices whose vectors sum to the vector of $t$, then we can find a valid replacement in this set.

\begin{theorem}\label{thm:AWtopreplace}
Let $G$ be a graph with vertex cover $X$, containing $O = x_z$-AW for some $x \in \{\dagger,\ddagger,\diamond\}$ according to Definition~\ref{def:aws} with $t \notin X$. Let $D \subseteq V(G)$ disjoint from $O$ and $X$ such that $\incs{8}{G}{X}(t) = \sum_{y \in D} \incs{8}{G}{X}(y)$. Then $G[(V(O) \cup \{y\}) \setminus \{t\}]$ contains an asteroidal triple for some $y \in D$. 
\end{theorem}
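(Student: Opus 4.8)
The plan is to show that for some $y \in D$, replacing the top vertex $t$ by $y$ in the asteroidal witness $O$ yields a graph still containing an asteroidal triple, namely the triple $\{t_l, t_r, y\}$. The key observation is that the terminal vertices $t_l, t_r$ and their connecting paths through the center vertices remain untouched, so $G[(V(O)\cup\{y\})\setminus\{t\}]$ automatically contains the path $t_l$--$b_1$--$\cdots$--$b_z$--$t_r$ along the bottom, and the paths $t_l$--$c_j$ and $t_r$--$c_j$ through the center(s). What we must guarantee is that $y$ can play the role of the top vertex: it must be (i) adjacent to the center vertex/vertices, so that it is connected to $t_l$ and to $t_r$ by a path avoiding the neighborhood of the opposite terminal, and (ii) \emph{not} adjacent to any path vertex $b_i$ with $i \in [z]$, since a single such adjacency would destroy the asteroidal triple (as noted in the paragraph preceding the statement). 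In particular we want $y$ to have exactly the same adjacencies to $V(O)\setminus\{t\}$ as $t$ does, which for the $\dagger,\ddagger,\diamond$ witnesses means: adjacent to $c$ (resp. $c_1,c_2$) and nonadjacent to everything else in $V(O)\setminus\{t\}$.

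\textbf{Main step via adjacency sharing.} First I would apply \cref{prop:adjacencyshare} with $S := V(O)\setminus\{t\}$. This set has size at most $|V(O)| - 1 \le 8$ (each asteroidal witness minus the top vertex: $\dagger_z$ contributes $t_l, t_r, c, b_1,\ldots,b_z$ but the relevant neighbors of $t$ are only $t$'s actual neighbors together with the handful of vertices we need to control — one must check the count stays $\le 8$, which is why rank $8$ is used; the center and its immediate structure are what matter). The proposition gives a nonempty odd-size subset $D' \subseteq D$ such that every $y \in D'$ shares adjacencies with $t$ to $S$, i.e. $N_G(y) \cap S = N_G(t) \cap S$. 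Pick any such $y$. Then $y$ has precisely the neighborhood of $t$ within $V(O)\setminus\{t\}$: it is adjacent to the center vertex/vertices and nonadjacent to $t_l, t_r$, and all path vertices $b_1, \ldots, b_z$.

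\textbf{Finishing.} With such a $y$ in hand, I would verify directly that $G[(V(O)\cup\{y\})\setminus\{t\}]$ contains an asteroidal triple on $\{t_l, t_r, y\}$: the pair $t_l, t_r$ is connected by the bottom path $b_0 b_1 \cdots b_{z+1}$, which avoids $N(y)$ since $y$'s only neighbors among these vertices would be path vertices, of which it has none, and $y \ne$ any $b_i$; the pair $t_l, y$ is connected via $t_l$--$c_j$--$y$ (using a center vertex adjacent to both $t_l$ and $y$), a path avoiding $N(t_r)$ because $c_j \notin N(t_r)$ is false — wait, $c_j$ \emph{is} adjacent to $t_r = b_{z+1}$? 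No: in the definitions the $c_j$ are adjacent to $b_i$ only for $i \in [z]$, not to $b_0 = t_l$ or $b_{z+1} = t_r$, so $c_j \notin N(t_r)$ and $c_j \notin N(t_l)$; thus $t_l$--$c_j$--$y$ avoids $N(t_r)$, and symmetrically $t_r$--$c_j$--$y$ avoids $N(t_l)$. This is exactly the original asteroidal-triple structure of the witness with $y$ substituted for $t$, so $\{t_l, t_r, y\}$ is an asteroidal triple. Hence $G[(V(O)\cup\{y\})\setminus\{t\}]$ contains an asteroidal triple for this $y \in D' \subseteq D$.

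\textbf{Main obstacle.} The delicate point is bookkeeping the size of $S = V(O)\setminus\{t\}$ to ensure $|S| \le 8$ so that \cref{prop:adjacencyshare} applies — this forces a bound on the parameter $z$ of the witness, or else a more careful choice of $S$ (e.g. taking $S$ to be $t$'s neighbors plus only those few path vertices whose adjacency to the replacement must be forbidden, rather than all of $V(O)\setminus\{t\}$). In fact the reason the statement fixes rank $8$ is presumably that the \emph{vertex-minimal} asteroidal witnesses relevant here have at most $9$ vertices (so $|V(O)\setminus\{t\}| \le 8$), or that one can always restrict attention to a sub-witness of bounded size; resolving precisely which set $S$ of size $\le 8$ to feed into \cref{prop:adjacencyshare} while still controlling all the adjacencies that could break the asteroidal triple is where the real work lies, and it likely requires splitting into the three witness types and, within each, into cases on $z$ small versus $z$ large (using that long paths can be shortcut to a shorter witness of the same type).
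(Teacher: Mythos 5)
Your plan hinges on applying \cref{prop:adjacencyshare} with $S = V(O)\setminus\{t\}$ so as to obtain a $y \in D$ whose neighborhood on the whole witness equals that of $t$, but this step fails: the asteroidal witnesses form \emph{infinite} families, with $|V(O)\setminus\{t\}| = z+3$ (for $\dagger_z$) or $z+4$ (for $\ddagger_z,\diamond_z$) and $z$ unbounded, so this $S$ does not have size at most $8$ and the lemma does not apply. Your two suggested repairs do not work either: the witnesses are vertex-minimal obstructions for AT-freeness (deleting any vertex, in particular any path vertex, destroys the triple), so there is no bounded-size sub-witness to ``shortcut'' to, and the theorem must be proved for the given $O$ with arbitrary $z$. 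More fundamentally, a rank-$8$ sum over $\mathbb{F}_2$ only constrains adjacencies to sets of at most $8$ vertices at a time, so it simply cannot force the existence of a $y \in D$ that is nonadjacent to \emph{all} of $b_1,\dots,b_z$; this is exactly the difficulty the paper points out just before the statement (and the reason no finite-adjacency characterization exists for AWs), so any correct proof must handle the case where every candidate replacement hits some path vertex.

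The paper's proof differs precisely where yours stops. It applies \cref{prop:adjacencyshare} only with the bounded set $S = \{t_l,t_r\} \cup C$ (at most $4$ vertices), discards vertices of $D'$ with duplicate projected vectors (pairs cancel over $\mathbb{F}_2$), and then argues combinatorially about adjacencies to the path: if some $y \in D'$ avoids all $b_i$, the witness is reproduced as in your finishing step; otherwise one takes $y^*$ maximizing the span $q'_{y^*}-q_{y^*}$ of its path-vertex adjacencies, and either finds some $y$ adjacent to $b_{q_{y^*}}$ and $b_{q'_{y^*}}$ but missing an intermediate $b_i$ --- in which case the asteroidal triple obtained is $\{t_l,t_r,b_i\}$ (not $\{t_l,t_r,y\}$), with $y$ serving as a bridge avoiding $N(b_i)$ --- or derives a contradiction with vector uniqueness by evaluating a single coordinate $[Q,R]$ with $|Q|+|R|\le 8$ built from $C$, $b_{q_{y^*}}$, $b_{q'_{y^*}}$, $b_{q_{y^*}+1}$, $b_{q'_{y^*}-1}$, $t_l$, $t_r$ (which is where the rank bound $8$ actually comes from). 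This case analysis is the essential content of the theorem and is missing from your proposal.
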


\begin{proof}
Let $G' := G[O \cup D]$ and $X' := X \cap V(G')$. By \cref{note:incsubgraph} we have that $\incs{8}{G'}{X'}(t) = \sum_{y \in D} \incs{8}{G'}{X'}(y)$. Let $C$ be the center vertices of $O$; then $t$ is adjacent to the center vertices in $C$ and not to the other terminals $t_l$ and $t_r$. Let $D' \subseteq D$ be a set of vertices that share adjacencies with $t$ to $\{t_l,t_r\} \cup C$ such that $\incf{8}{G'}{X'}{T}{C}(t) = \sum_{u \in D'} \incf{8}{G'}{X'}{T}{C}(u)$ where $T = \{t_l,t_r\}\cap X'$. Such $D'$ exists by \cref{prop:adjacencyshare}. If there are distinct $w,w' \in D'$ whose vectors are identical,
then $\sum_{u \in D'} \incf{8}{G'}{X'}{T}{C}(u) = \sum_{u \in D' \setminus \{w,w'\}} \incf{8}{G'}{X'}{T}{C}(u)$. Therefore we can remove all duplicate vectors, and hence assume that the vectors for vertices in $D'$ are unique. If there exists $y \in D'$ that is not adjacent to any $b_i$ for $i \in [z]$, then $G[(V(O) \cup \{y\}) \setminus \{t\}]$ induces an $x_z$-AW and we are done.
Otherwise, let $q_y$ be the smallest index such that $y$ is adjacent to $b_{q_y}$ and let $q_y'$ be the largest index such that $y$ is adjacent to $b_{q_y'}$. 
Let $y^* \in D'$ be a vertex that maximizes $q'_{y^*} - q_{y^*}$. If $q'_{y^*} = q_{y^*}$, then $y^*$ is adjacent to a single vertex $b_{q_{y^*}}$. However, since $t$ is not adjacent to this vertex, there must exist $y \in D'$ that is adjacent to the same vertex $b_{q_{y^*}}$ in order to satisfy the vector sum. But then $y$ and $y^*$ have the same projected incidence vector, a contradiction. It remains to deal with the case that~$q'_{y^*} > q_{y^*}$.

If there exists $y \in D'$ (possibly $y^*$ itself) adjacent to $b_{q_{y^*}}$ and $b_{q_{y^*}'}$, but nonadjacent to some $b_i$ with $q_{y^*} + 1 < i < q_{y^*}'-1$, then $t_l$, $t_r$, and $b_i$ form an asteroidal triple in the graph~$G[(V(O) \cup \{y\}) \setminus \{t\}]$, since~$y$ provides a path between~$t_l$ and~$t_r$ avoiding~$N(b_i)$. Therefore suppose there is no such $y$. Then any vertex $y \in D'$ adjacent to $b_{q_{y^*}}$ and $b_{q_{y^*}'}$ is also adjacent to all vertices in between, except possibly $b_{q_{y^*}+1}$ and $b_{q_{y^*}'-1}$. Let $Q = (\{t_l,t_r,b_{q_{y^*}+1},b_{q_{y^*}'-1}\} \setminus N_G(y^*)) \cap X$ and $R = C \cup \{b_{q_{y^*}}, b_{q_{y^*}'}\} \cup (\{b_{q_{y^*}+1},b_{q_{y^*}'-1}\} \cap N_G(y^*))$. We have $|Q|+|R| \leq 8$. The entry $\incf{8}{G'}{X'}{T}{C}(y^*)[Q,R] = 1$ since we defined~$Q$ and~$R$ based on the neighborhood of~$y^*$, but $\incf{8}{G'}{X'}{T}{C}(t)[Q,R] = 0$ since~$t$ is not adjacent to the path vertices~$b_{q_{y^*}}, b_{q'_{y^*}}$.  So there must exist another $y'\in D'$ with $\incf{8}{G'}{X'}{T}{C}(y')[Q,R] = 1$. By the maximality of the range seen by $y^*$, we have $q_{y^*} = q_{y'}$ and $q_{y^*}' = q_{y'}'$. But then $y^*$ and $y'$ have the same projected incidence vector, since they agree on the adjacency to~$b_{q_{y^*}}, b_{q_{y^*}+1}, b_{q'_{y^*}-1}, b_{q'_{y^*}}$ and are adjacent to all vertices in between. This contradicts the assumption that all vectors for vertices in~$D'$ are unique.
\end{proof}

\subsubsection{Proof of Theorem \ref{thm:atfreerankc}}
\begin{theorem*}
The graph property $\Pi_{AT}$ is characterized by rank-$8$ adjacencies with singleton replacements.
\end{theorem*}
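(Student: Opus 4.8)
The goal is to establish that $\Pi_{AT}$ is characterized by rank-$8$ adjacencies with singleton replacements. Recall $\Pi_{AT} = \Pi_S \cup \Pi_{C_{\geq 6}} \cup \Pi_{AW}$, so the natural strategy is to characterize each of the three parts separately and then invoke \cref{prop:graphpropertymerge} (closure under union, taking the maximum of the ranks) together with \cref{lemma:c_to_rankc} (characterization by $c$ adjacencies implies characterization by rank-$c$ adjacencies with singleton replacements). For $\Pi_S$, the set of graphs containing one of K\"ohler's finitely many small graphs on at most $7$ vertices as an induced subgraph, \cref{prop:infiniteset_finitegraphs} gives characterization by $6$ adjacencies, hence by rank-$6$ adjacencies with singleton replacements. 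For $\Pi_{C_{\geq 6}}$, Fomin et al.~\cite[Proposition 3]{FOMIN2014468} give characterization by $5$ adjacencies, hence by rank-$5$ adjacencies with singleton replacements. So the entire content of the proof is the asteroidal-witness component $\Pi_{AW}$.

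For $\Pi_{AW}$ I would prove it is characterized by rank-$8$ adjacencies with singleton replacements directly from the definition. Fix a graph $H$ with vertex cover $X$, a set $D \subseteq V(H)\setminus X$ with $H - D \in \Pi_{AW}$, and a vertex $v \in V(H)\setminus(D\cup X)$ with $\incs{8}{H}{X}(v) = \sum_{u\in D}\incs{8}{H}{X}(u)$. Let $O$ be an asteroidal witness induced in $H - D$, say of type $x_z$ for $x\in\{\dagger,\ddagger,\diamond\}$. The easy case is $v\notin V(O)$: then $H - D$ already contains $O$ as an induced subgraph even after adding back all of $D$, so any singleton $D' = \{v'\}$ with $v'\in D$ works, because removing $v$ and all of $D\setminus\{v'\}$ leaves a graph still containing $O$. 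So assume $v \in V(O)$, and split on which structural role $v$ plays in $O$.

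The main work is a case analysis on the role of $v$ inside the witness $O$. When $v$ is a path vertex $b_i$, a terminal $t_l$ or $t_r$, or a center vertex, $v$ has only $O(1)$ neighbors within $O$, so the relevant neighborhood pattern is determined by a bounded set of vertices; here I expect a straightforward application of \cref{prop:adjacencyshare} (with $S$ the $O(1)$-size set of $O$-neighbors and non-neighbors of $v$ inside $O$, which has size at most $8$) to produce a vertex $v'\in D$ sharing all the needed adjacencies, so that $G[(V(O)\setminus\{v\})\cup\{v'\}]$ is again a witness of the same type — possibly after a short argument that the substituted vertex does not introduce chords that would destroy the witness, which again reduces to sharing adjacencies with $v$ to a bounded set. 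The genuinely hard case is $v = t$, the top vertex: here $t$'s behavior toward the $z$ unboundedly many path vertices $b_1,\dots,b_z$ matters, so a naive bounded-set argument fails. This is exactly the content of \cref{thm:AWtopreplace}, which I would invoke: applying \cref{note:incsubgraph} to pass to $H[O \cup D]$ preserves the vector equation, and then \cref{thm:AWtopreplace} yields some $y\in D$ with $G[(V(O)\cup\{y\})\setminus\{t\}]$ containing an asteroidal triple, hence in $\Pi_{AT}$. Since $\Pi_{AT}$ rather than $\Pi_{AW}$ is the target, it suffices that the replacement graph lies in $\Pi_{AT}$, which is what \cref{thm:AWtopreplace} delivers. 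Combining all three parts via \cref{prop:graphpropertymerge} gives characterization of $\Pi_{AT}$ by rank-$\max(6,5,8) = 8$ adjacencies with singleton replacements. The main obstacle is bookkeeping: verifying in each subcase that $|Q|+|R|\le 8$ for the projected incidence vectors used, and that the substituted vertex yields a genuine (chordless where required) witness or at least a genuine asteroidal triple; the top-vertex case is already handled by \cref{thm:AWtopreplace}, so the remaining cases should each be short.
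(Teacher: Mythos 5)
There is a genuine gap in the top-level packaging of your argument. You propose to prove that each of $\Pi_S$, $\Pi_{C_{\geq 6}}$ and $\Pi_{AW}$ is separately characterized by low-rank adjacencies with singleton replacements and then combine them via \cref{prop:graphpropertymerge}. For that route, the $\Pi_{AW}$ component must satisfy the definition \emph{for $\Pi_{AW}$ itself}: the replacement graph $H-v-(D\setminus\{v'\})$ has to contain an asteroidal witness. But the arguments you invoke do not deliver this. In the top-vertex case, \cref{thm:AWtopreplace} only guarantees that $G[(V(O)\cup\{y\})\setminus\{t\}]$ contains an asteroidal \emph{triple}, i.e.\ membership in $\Pi_{AT}$; the triple may be witnessed by a long chordless cycle or one of the small graphs rather than an AW. The same problem occurs in the other cases: sharing adjacencies with $v$ to a set of at most $8$ vertices does not control the replacement vertex's adjacencies to the remaining (unboundedly many) path vertices $b_j$, so your claim that $G[(V(O)\setminus\{v\})\cup\{v'\}]$ ``is again a witness of the same type'' is not justified; what the shared adjacencies do give is that $\{t,t_l,t_r\}$ remains an asteroidal triple (AT paths need not be induced), i.e.\ again only membership in $\Pi_{AT}$. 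You half-notice this when you write that landing in $\Pi_{AT}$ ``suffices'', but that observation is incompatible with invoking the union lemma for $\Pi_{AW}$: either you prove the $\Pi_{AW}$ characterization (which your arguments do not), or you cannot use \cref{prop:graphpropertymerge} in this way.

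The repair is exactly the paper's structure, and it costs no new mathematics: prove the rank-$8$ characterization of $\Pi_{AT}$ directly. Given $H-D \in \Pi_{AT}$, split on which forbidden structure $H-D$ contains. If it is in $\Pi_S$ or $\Pi_{C_{\geq 6}}$, the characterizations by $6$ and $5$ adjacencies (via the argument of \cref{lemma:c_to_rankc}) give a singleton replacement whose result stays in $\Pi_S$ resp.\ $\Pi_{C_{\geq 6}}$, hence in $\Pi_{AT}$. If it contains an asteroidal witness, do your case analysis on the role of $v$: for $v \notin V(O)$ any $v' \in D$ works; for $v = t$ use \cref{note:incsubgraph} plus \cref{thm:AWtopreplace}; for the remaining roles use \cref{prop:adjacencyshare} with the appropriate sets of at most $8$ vertices and argue that $\{t,t_l,t_r\}$ is still an asteroidal triple. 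In every branch the conclusion is membership in $\Pi_{AT}$, which is all the theorem asks for. With that reorganization your proposal matches the paper's proof; as written, the reliance on \cref{prop:graphpropertymerge} for the $\Pi_{AW}$ part does not go through.
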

\begin{proof}
Consider some graph $H$ with vertex cover $X$ and let $D \subseteq V(H) \setminus X$ such that $H-D \in \Pi_{AT}$. Let $v$ be an arbitrary vertex in $V(H) \setminus (D \cup X)$ such that $\incs{8}{H}{X}(v) = \sum_{u \in D} \incs{8}{H}{X}(u)$. We show that $H-v-(D\setminus \{v'\}) \in \Pi_{AT}$ for some $v' \in D$. The statement for the cases where $H-D \in \Pi_S$ and $H-D \in C_{\geq 6}$ follow from the fact that they are characterized by 6 and 5 adjacencies respectively. In the remainder we have $H-D \in \Pi_{AW}$.

Suppose that $H-D$ contains $O = \dagger_z$-AW for some $z \geq 2$. We do a case distinction on $v$. If $v \notin V(O)$, then for any $v' \in D$ we have $H-v-(D\setminus \{v'\}) \in \Pi_{AT}$, so suppose that $v \in V(O)$.
If $v$ is top terminal $t$, then by \cref{thm:AWtopreplace} we have $H-v-(D\setminus \{v'\}) \in \Pi_{AT}$ for some $v' \in D$.
If $v$ is a center vertex $c$, then $v$ is adjacent to $t$, $b_1$, and $b_2$ and not to $t_l$ and $t_r$. Let $D' \subseteq D$ be the set of vertices share adjacencies with $v$ to $\{t,b_1,b_2,t_l,t_r\}$. Such set $D'$ exists by \cref{prop:adjacencyshare}. Consider any $v' \in D'$, then $\{t,t_l,t_r\}$ still is an asteroidal triple in the graph $H-v-(D\setminus \{v'\})$ and the statement holds. This adjacency argument works for all cases besides the top vertex $t$. We give the set of adjacencies that needs to shared.
For $v = t_l$, the adjacencies to share are $\{b_z,c,t,t_r,b_1\}$. The case $v=t_r$ is symmetric. For $v = b_i$ for some $1 < i < z$, share adjacencies to $\{t,t_l,t_r,b_{i-1},b_{i+1},c\}$. For $v=b_1$, share adjacencies to $\{t,t_r,t_l,b_2,c\}$. Again the case for $v=b_z$ is symmetric.

Finally suppose that $H-D$ contains a $O = \ddagger_z$-AW or $O = \diamond_z$-AW for some $z \geq 1$. If $v \notin V(O)$, then for any $v' \in D$ we have $H-v-(D\setminus \{v'\}) \in \Pi_{AT}$, so suppose that $v \in V(O)$. 
If $v$ is top terminal $t$, then by \cref{thm:AWtopreplace} we have $H-v-(D\setminus \{v'\}) \in \Pi_{AT}$ for some $v' \in D$. For the remaining cases of $v$, again the adjacency argument is sufficient to prove the statement. If $v$ is a center vertex $c_1$, then share adjacencies to $\{t_r,t,t_l\}$. The case $v=c_2$ is symmetric.  For $v=t_l$, share adjacencies to $\{c_2,t_r,t,c_1,b_1\}$. Again the case $v=t_r$ is symmetric. For $v = b_i$ for some $1 < i < z$, share adjacencies to $\{t,b_{i-1},b_{i+1}\}$. For $v=b_1$, share adjacencies to $\{t,t_l,b_{i+1}\}$. The case $v=b_z$ is again symmetric. Note that for all of these case, the set of adjacencies $S$ that some vertex in $D$ needs to share with $v$ contains at most 8 elements. This completes the proof.
\end{proof}

\subsection{Omitted material of Section \ref{sec:wheelfree}}
\subsubsection{Proof of Theorem \ref{thm:wheelfreerankc}}
\begin{theorem*}
The graph property $\Pi_{W_{\geq 3}}$ is characterized by rank-4 adjacencies. 
\end{theorem*}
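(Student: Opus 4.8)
The plan is to mirror the structure of the proofs of \cref{thm:oddholerankc} and \cref{thm:evenholefreerankc}: consider a graph $H$ with vertex cover $X$, a set $D \subseteq V(H) \setminus X$ with $H-D \in \Pi_{W_{\geq 3}}$, and a vertex $v \in V(H) \setminus (D \cup X)$ with $\incs{4}{H}{X}(v) = \sum_{u \in D} \incs{4}{H}{X}(u)$; we must find $D' \subseteq D$ with $H-v-(D \setminus D') \in \Pi_{W_{\geq 3}}$. Fix a wheel $W$ in $H-D$ with center $c$ and rim cycle $(v_1,\dots,v_n)$, $n \geq 3$. If $v \notin V(W)$ we take any $D'$ and are done, so assume $v \in V(W)$. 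The two cases are $v$ being a rim vertex and $v$ being the center.

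If $v = v_j$ is a rim vertex, its only relevant adjacencies inside $W$ are to $c$, $v_{j-1}$, $v_{j+1}$ — a set $S$ of size $3$ — so by \cref{prop:adjacencyshare} there is a nonempty $D' \subseteq D$ each of whose members shares adjacencies with $v$ to $S$. Picking any $u \in D'$, the graph $H - v - (D \setminus \{u\})$ restricted to $(V(W) \setminus \{v_j\}) \cup \{u\}$ is again a wheel $W_n$ (with $u$ playing the role of $v_j$), so it lies in $\Pi_{W_{\geq 3}}$; this is the easy, adjacency-style case. The interesting case is $v = c$, the center, where no bounded set of rim vertices controls membership, and this is where the rank-$4$ hypothesis and a parity/counting argument are needed, in the same spirit as \cref{lemma:odd(anti)hole_(odd)evenseen} and \cref{lemma:evenhole_evenseen}. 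Here $v = c$ is adjacent to \emph{all} of $v_1,\dots,v_n$ and the rim $(v_1,\dots,v_n)$ is an induced cycle in $H-v-D$; we need some $u \in D$ adjacent to all rim vertices so that $(v_1,\dots,v_n)$ together with $u$ forms a wheel again.

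The heart of the argument: apply \cref{prop:adjacencyshare} with $S = \{v_i, v_{i+1}\}$ (size $2 \leq 4$, indices mod $n$) and also with $S = \{v_i\}$, to show that the set $D'$ of vertices sharing all rim-adjacencies with $v$ to any chosen pair behaves nicely; then count. Suppose for contradiction that no single vertex of the relevant shared-adjacency set $D'$ (those sharing adjacencies with $v$ to, say, $\{v_1, v_2\}$, equivalently adjacent to $v_1$ and $v_2$) is adjacent to every rim vertex. For each $u \in D'$ let $M_u \subseteq [n]$ be the set of rim indices to which $u$ is \emph{non}-adjacent; by assumption each $M_u$ is nonempty, while $v$'s corresponding set is empty. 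For each rim vertex $v_i$, the entry $\incf{4}{H}{X}{Q'}{R'}(\cdot)[Q,R]$ with $Q = \{v_i\}$, $R = \{v_1, v_2\}$ (adjusting if $i \in \{1,2\}$) forces, over $\mathbb{F}_2$, that the number of $u \in D'$ non-adjacent to $v_i$ is even, i.e.\ $\sum_{u \in D'} |M_u| = \sum_i |\{u : i \in M_u\}|$ is even; meanwhile using a size-$4$ probe on $\{v_1,v_2\} \cup \{v_i, v_{i+1}\}$ one argues as in the odd-hole proof that $|D'|$ is odd and each $|M_u|$ must satisfy a parity condition making $\sum_u |M_u|$ odd — the contradiction. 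I expect the main obstacle to be getting this parity bookkeeping exactly right: pinning down which size-$\leq 4$ coordinates $(Q,R)$ to use so that the sum-to-$v$ identity yields "each $M_u$ has a fixed parity" while $v$'s vector is all zero on the rim, and handling the small cases $n = 3$ and $n = 4$ (and the index wraparound) separately, just as the even/odd-hole lemmas peel off $n = 3$ and $n = 4$ base cases. Once a rim-universal $u \in D'$ is produced, $H - v - (D \setminus \{u\})$ contains the wheel on $(v_1,\dots,v_n)$ with center $u$, so it is in $\Pi_{W_{\geq 3}}$ and we are done.
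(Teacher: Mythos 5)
Your center case rests on a claim that is simply false: from the rank-$4$ sum condition you cannot conclude that some $u \in D$ (or in your set $D'$) is adjacent to \emph{every} rim vertex. The paper's own construction in the proof of \cref{thm:awfnorankc} is a concrete counterexample to exactly this: there $v$ is the hub of a long induced wheel on the vertex cover, every $d_Y \in D$ is adjacent to only $q > c$ of the $n$ rim vertices, and yet $\incs{c}{H}{X}(v) = \sum_{u \in D}\incs{c}{H}{X}(u)$; no member of $D$ is rim-universal. Your parity bookkeeping cannot rescue this, and in fact cannot close at all: the coordinates $[\{v_i\},\{v_1,v_2\}]$ only tell you that, for each rim vertex $v_i$, an \emph{even} number of members of $D'$ miss $v_i$, and \cref{prop:adjacencyshare} gives $|D'|$ odd; but nothing forces each individual $M_u$ to have odd size (in the odd/even-hole lemmas that parity comes from the parity of the hole itself, which has no analogue here), so $\sum_{u}|M_u|$ being even yields no contradiction. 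The reason $\Pi_{W_{\geq 3}}$ is nonetheless characterized by rank-$4$ adjacencies is that you are allowed to exhibit a \emph{different, smaller} wheel rather than re-center the original one: the paper takes $S=\{p,q,r\}$ to be three vertices of the wheel seen by $v$ (three consecutive rim vertices when $v$ is the hub), inducing a triangle if $n=3$ and a $P_3$ otherwise, and applies \cref{prop:adjacencyshare} once. If $|D'|=1$, the size-$4$ coordinates $[\{x\},\{p,q,r\}]$ for $x \in X$ force the unique member to have exactly $v$'s adjacencies to $X$, hence an exact replacement; if $|D'|>1$, any \emph{two} members of $D'$ together with the $P_3$ induce a $W_4$, which lies in $\Pi_{W_{\geq 3}}$. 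This is precisely why the theorem is not stated with singleton replacements, and precisely the point at which $\Pi_{W_{\neq 4}}$ fails.

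A smaller inaccuracy: in your rim-vertex case, $(V(W)\setminus\{v_j\})\cup\{u\}$ need not induce $W_n$, since $u$ may have additional rim adjacencies creating chords. The conclusion can be repaired (take two consecutive neighbours of $u$ along the remaining rim path; together with $u$ they bound an induced cycle, and the hub is adjacent to all of it, giving some induced wheel of size at least $3$, which suffices for $\Pi_{W_{\geq 3}}$), but the paper's argument avoids the issue entirely by treating hub and rim vertices uniformly through the $\{p,q,r\}$ device above.
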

\begin{proof}
Consider some graph $H$ with vertex cover $X$ and set $D \subseteq V(H) \setminus X$ such that $H-D \in \Pi_{W_{\geq 3}}$. Let $v$ be an arbitrary vertex in $V(H) \setminus (D \cup X)$ such that $\incs{4}{H}{X}(v) = \sum_{u \in D} \incs{4}{H}{X}(u)$. We show that $H-v-(D\setminus D') \in \Pi$ for some $D' \subseteq D$.

Since $H-D \in \Pi_{W_{\geq 3}}$, it contains a wheel $W_n$ for some $n \geq 3$ as induced subgraph. If $v \notin V(W_n)$, then for any $D' \subseteq D$, $H-v-(D\setminus D') \in \Pi_{W_{\geq 3}}$. So suppose $v \in V(W_n)$, then $v$ sees three vertices of $W_n$ that induce a complete graph if $n=3$, or a $P_3$ otherwise. Let $p$, $q$, and $r$ be these vertices.  
Let $D' \subseteq D$ be the set of vertices that share adjacencies with $v$ to $\{p,q,r\}$ such that $\incp{\emptyset}{\{p,q,r\}}(v) = \sum_{u \in D'} \incp{\emptyset}{\{p,q,r\}}(u)$. Such set $D'$ exists by \cref{prop:adjacencyshare}. If $n=3$, then any $d \in D'$ that is adjacent to $p$, $q$, and $r$ is a valid replacement. Suppose that $n > 3$. If $|D'| = 1$, then this vertex must have the same adjacencies as $v$ with respect to the wheel, and hence is a valid replacement. Finally if $|D'| > 1$, then any two vertices in $D'$ together with $p$, $q$, and $r$ induce a $W_4$, since $p$, $q$, and $r$ induce a $P_3$.
\end{proof}

\subsection{Almost Wheel-free Deletion}\label{app:almostwheelfree}
In this section we consider the graph class characterized by the set of forbidden induced subgraphs $\{W_n\mid n=3 \vee n \geq 5\}$. These are the graphs that are almost wheel-free, only $W_4$ might still be in the graph. We give a lower bound for \textsc{Almost Wheel-free Deletion} using a polynomial parameter transformation from \textsc{CNF-SAT} parameterized by the number of variables to the \textsc{Almost Wheel-free Deletion} problem. To this end we give some definitions first.

\begin{definition}
A \emph{polynomial compression} of a parameterized problem $Q \subseteq \Sigma^* \times \mathbb{N}$ into a language $R \subseteq \Sigma^*$ is an algorithm that takes as input an instance $(x,k) \in \Sigma^* \times \mathbb{N}$ and produces in time polynomial in $|x|+k$ a string $y$ such that $|y| \leq p(k)$ for some polynomial $p$ and $y \in R$ if and only if $(x,k) \in Q$.
\end{definition}

Ruling out the existence of a polynomial compression also rules out the existence of a polynomial kernel. We use this fact to construct lower bound proofs. For these proofs we need one more ingredient, namely polynomial parameter transformations.

\begin{definition}
Let $P,Q \subseteq \Sigma^* \times \mathbb{N}$ be two parameterized problems. An algorithm $\mathcal{A}$ is called a \emph{polynomial parameter transformation} (PPT) from $P$ to $Q$ if, given an instance $(x,k)$ of problem $P$, $\mathcal{A}$ produces in polynomial time an equivalent instance $(x',k')$ of problem $Q$. More formally, $(x,k) \in P$ if and only if $(x',k') \in Q$ such that $k' \leq p(k)$ for some polynomial function $p$.
\end{definition}
Note that from the definition above we only require the size of the parameter in the transformed instance to be bounded in some polynomial of the original parameter. The following theorem combines this definition with polynomial compressions.

\begin{theorem}[Theorem 19.2, \cite{KernelizationBook2018}]\label{theorem:ppt}
Let $P$ and $Q$ be two parameterized problems such that there exists a PPT from $P$ to $Q$. If $Q$ has a polynomial compression, then $P$ also has a polynomial compression.
\end{theorem}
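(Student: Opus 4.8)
The final statement to be proved is \cref{theorem:ppt} itself: if there is a PPT from $P$ to $Q$ and $Q$ has a polynomial compression, then $P$ has a polynomial compression. This is a composition-of-algorithms result, and the plan is to construct the compression for $P$ by chaining the PPT with the given compression for $Q$, then argue that the composite satisfies the size and time bounds of \cref{def:compression}.

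The plan is as follows. Let $\mathcal{A}$ be the PPT from $P$ to $Q$, so that on input $(x,k)$ it outputs in polynomial time an instance $(x',k')$ of $Q$ with $(x,k) \in P \iff (x',k') \in Q$ and $k' \leq p_1(k)$ for some polynomial $p_1$. Let $\mathcal{C}$ be the polynomial compression of $Q$ into a language $R$, so that on input $(x',k')$ it outputs in time polynomial in $|x'| + k'$ a string $y$ with $|y| \leq p_2(k')$ and $y \in R \iff (x',k') \in Q$. First I would define the claimed compression for $P$ to be the algorithm that, given $(x,k)$, runs $\mathcal{A}$ to obtain $(x',k')$ and then runs $\mathcal{C}$ on $(x',k')$ to produce $y$; I output $y$, targeting the same language $R$.

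The correctness chain is immediate: $(x,k) \in P \iff (x',k') \in Q \iff y \in R$, using the equivalences guaranteed by $\mathcal{A}$ and $\mathcal{C}$ respectively. The two technical points to verify are the running time and the output size. For the size bound, I would write $|y| \leq p_2(k') \leq p_2(p_1(k))$, using monotonicity (or, if $p_2$ is not monotone, the standard fact that any polynomial is bounded by a monotone polynomial of the same argument up to a constant), so that $p_2 \circ p_1$ is a polynomial in $k$ as required. For the running time, the subtle point is that $\mathcal{C}$ runs in time polynomial in $|x'| + k'$, not merely in $k$; but since $\mathcal{A}$ runs in time polynomial in $|x| + k$, its output $(x',k')$ has total encoding length polynomial in $|x| + k$, whence $|x'| + k'$ is polynomially bounded in $|x| + k$. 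Composing a polynomial-time algorithm with another polynomial-time algorithm whose input is of polynomially bounded size yields an overall running time that is polynomial in $|x| + k$, as required by the definition of a polynomial compression.

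The main obstacle, though it is a mild one, is precisely this bookkeeping about which quantity the running time and size of $\mathcal{C}$ are measured against. One must be careful that the size bound on $y$ depends only on the parameter (here $k'$, hence ultimately $k$) and not on $|x'|$, so that the composite genuinely has output size bounded by a polynomial in $k$; the definition of polynomial compression demands a parameter-only size bound, whereas the running time is allowed to depend on the full input. Keeping these two roles straight, and noting that the intermediate instance size $|x'|$ is controlled by the polynomial running time of $\mathcal{A}$ (so it cannot blow up the final running time), is the only place where care is needed; everything else is a direct substitution.
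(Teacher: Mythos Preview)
The paper does not give its own proof of this statement; it is quoted as Theorem~19.2 of the cited kernelization textbook and used as a black box. Your argument is the standard composition proof and is correct, including the care you take in distinguishing the parameter-only output-size bound from the input-size-dependent running-time bound.
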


By contraposition of the theorem above, if the starting problem $P$ does not have a polynomial compression, then $Q$ also does not have a polynomial compression. This fact can be used to rule out polynomial kernels for parameterized problems. The starting problem for our lower bounds is \textsc{CNF-SAT} parameterized by the number of variables.

\begin{theorem}[Theorem 18.3, \cite{KernelizationBook2018}]\label{thm:cnfsat_parn}
\textsc{CNF-SAT} parameterized by the number of variables $n$ does not admit a polynomial compression unless $\mathrm{coNP} \subseteq \mathrm{NP} / \mathrm{poly}$.
\end{theorem}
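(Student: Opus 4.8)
The plan is to derive this from the OR-cross-composition framework for kernelization lower bounds: I would show that the (NP-hard) classical problem \textsc{CNF-SAT} OR-cross-composes into \textsc{CNF-SAT} parameterized by the number of variables, and then invoke the theorem (Bodlaender--Jansen--Kratsch, building on Fortnow--Santhanam) that an NP-hard language OR-cross-composing into a parameterized problem $Q$ precludes a polynomial compression of $Q$ unless $\mathrm{coNP} \subseteq \mathrm{NP}/\mathrm{poly}$. Since \textsc{CNF-SAT}$/n$ is itself the canonical OR-compositional problem, this route is short.

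The cross-composition is the selector-variable construction. Given instances $\phi_1,\dots,\phi_t$, each a CNF formula of size at most $s$ (hence on at most $s$ variables), I would pad each to exactly $s$ variables, add $\lceil\log t\rceil$ fresh selector variables, identify each index $i\in[t]$ with a bit-pattern over the selectors, and replace every clause $C$ of $\phi_i$ by the single clause $C \vee \mathrm{neq}_i$, where $\mathrm{neq}_i$ is the disjunction of the $\lceil\log t\rceil$ selector-literals that are falsified exactly when the selectors encode $i$. The checks are routine: the output $\Phi$ is a genuine CNF formula on $s+\lceil\log t\rceil$ variables, so its parameter is bounded by the $\max_i|\phi_i| + \lceil\log t\rceil$ term allowed by cross-composition; $\Phi$ is computable in polynomial time (with the trivial polynomial equivalence relation that buckets formulas by size); and $\Phi$ is satisfiable iff some $\phi_i$ is, because fixing the selectors to encode any chosen $i$ makes every clause coming from $\phi_j$ with $j\ne i$ vacuously true and leaves precisely the clauses of $\phi_i$.

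Given these two ingredients the conclusion is immediate. For completeness I would recall, without reproving it, the idea behind the lower-bound theorem: a polynomial compression of \textsc{CNF-SAT}$/n$ composed with the cross-composition would map arbitrarily many size-$\le s$ satisfiability instances to one string of length $\mathrm{poly}(s+\log t)$ that records their OR; the Fortnow--Santhanam counting (``win--win'') argument then shows such ``bulk compression of an OR'' lets one place the coNP-complete language \textsc{UNSAT} into $\mathrm{NP}/\mathrm{poly}$, by hard-wiring as polynomial advice a short list of images that together certify unsatisfiability of all negative instances of a given length.

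The main obstacle --- and the part I would genuinely lean on the literature for --- is exactly that Fortnow--Santhanam counting argument: its delicate points are that the compressed output must stay short essentially uniformly in the number $t$ of composed instances, and that one must argue polynomially many advice strings suffice to ``cover'' all negative instances of a given size. Both are standard but not worth re-deriving here; by contrast the cross-composition and the accounting on the number of variables are entirely routine.
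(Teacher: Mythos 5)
The paper does not actually prove this statement: it is imported as a black box from the cited Kernelization book (Theorem 18.3 there), and your route---an OR-cross-composition of \textsc{CNF-SAT} into \textsc{CNF-SAT} parameterized by the number of variables via selector variables, followed by the Bodlaender--Jansen--Kratsch framework resting on the Fortnow--Santhanam counting argument---is precisely the standard proof behind that citation. So in substance you are reproving the imported result the same way its source does, and treating Fortnow--Santhanam itself as a cited ingredient is reasonable, since that argument is the real content of the theorem.

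One concrete slip in your composition step: with $\lceil \log t\rceil$ selector variables there are $2^{\lceil \log t\rceil}\geq t$ bit-patterns, and if $t$ is not a power of two a satisfying assignment may set the selectors to a pattern encoding no index $i\in[t]$; then for every $j$ the clause $C\vee \mathrm{neq}_j$ contains a true selector literal, so $\Phi$ is satisfied vacuously and the direction ``$\Phi$ satisfiable $\Rightarrow$ some $\varphi_i$ satisfiable'' of your equivalence fails. Your correctness argument only discusses selector assignments that encode some $i\in[t]$. The repair is one line---duplicate $\varphi_1$ until $t$ is a power of two (this preserves the OR and leaves $\lceil\log t\rceil$ unchanged), or let all surplus patterns select $\varphi_1$---but it should be said, since without it the constructed map is not an OR-cross-composition.
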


With the necessary definitions in place, we prove \cref{thm:almostwheelfreenopolykernel}.

\begin{theorem*}
\textsc{Almost Wheel-free Deletion} parameterized by vertex cover does not admit a polynomial compression unless $\mathrm{coNP} \subseteq \mathrm{NP} / \mathrm{poly}$.
\end{theorem*}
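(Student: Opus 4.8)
The plan is to establish this lower bound by a polynomial parameter transformation (PPT) from \textsc{CNF-SAT} parameterized by the number of variables, which has no polynomial compression unless $\mathrm{coNP}\subseteq\mathrm{NP}/\mathrm{poly}$ by \cref{thm:cnfsat_parn}; composing the PPT with \cref{theorem:ppt} then yields the theorem. Concretely, given a CNF formula $\varphi$ on variables $x_1,\dots,x_n$ and clauses $C_1,\dots,C_m$, I would construct in polynomial time a graph $G$, a vertex cover $X$ of $G$ whose size is bounded by a polynomial in $n$ (crucially, independent of $m$), and a budget $k$, such that $G$ can be made $\Pi_{W_{\neq 4}}$-free by deleting at most $k$ vertices if and only if $\varphi$ is satisfiable. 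Since the forbidden family $\{W_\ell:\ell=3\text{ or }\ell\ge 5\}$ excludes only $W_4$, every gadget may freely contain induced copies of the \emph{permitted} wheel $W_4$, but the whole construction must be arranged so that each \emph{forbidden} wheel it creates encodes a meaningful obstruction and none arises by accident.

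The construction uses two kinds of gadgets. For each variable $x_i$ there is a \emph{selection gadget}: two vertices $t_i,f_i\in X$ together with a bounded number of further private $X$-vertices, arranged so that $G$ contains several copies of a forbidden wheel (for instance $K_4=W_3$) that pairwise share only $\{t_i,f_i\}$. Setting $k:=n$, the selection gadgets for the $n$ variables are (outside $\{t_i,f_i\}$) vertex-disjoint and each requires at least one deletion, so any solution of size at most $k$ must delete exactly one vertex per gadget; since a single deletion can destroy all copies of a gadget only by removing one of its two shared vertices, the solution in fact deletes exactly one of $t_i,f_i$ for each $i$ and nothing else. Deleting $f_i$ is read as setting $x_i$ to true. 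For each clause $C_j$ there is a \emph{clause gadget}: a forbidden wheel $W_j$ whose rim contains, for every literal of $C_j$, the ``witness'' vertex of the corresponding selection gadget whose deletion certifies that this literal is satisfied (namely $f_i$ for the literal $x_i$ and $t_i$ for the literal $\lnot x_i$), while all remaining rim and hub vertices are private to $C_j$ and — to keep $|X|$ polynomial in $n$ — are placed \emph{outside} the vertex cover wherever the independence of $V(G)\setminus X$ permits, the rim being completed by alternating these private vertices with the witness vertices and attaching a hub inside $X$. Choosing the rim length according to $|C_j|$ so that it is odd and at least $5$, or even and at least $6$ (hence never $4$), makes $W_j$ a forbidden induced subgraph precisely when the solution deletes none of its rim witness vertices, i.e.\ precisely when $C_j$ is falsified by the encoded assignment.

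For correctness, one direction takes a satisfying assignment and lets $S$ delete $f_i$ for each true $x_i$ and $t_i$ for each false $x_i$; I would then verify that $G-S$ contains no forbidden wheel: every selection gadget loses one of its shared vertices, every clause gadget loses a rim witness vertex because its clause is satisfied, and a local case analysis shows that the remaining pieces (the surviving edges $t_if_i$, the witness vertices shared by many clause gadgets, and the private rim/hub vertices) never combine into a new $W_3,W_5,W_6,\dots$. In the other direction, a solution of size at most $k$ must — by the selection gadgets and the tight budget — consist of exactly one of $\{t_i,f_i\}$ per variable; the induced assignment satisfies $\varphi$, since a falsified clause $C_j$ would keep all rim witness vertices of $W_j$, and with no budget left to touch its private vertices the forbidden wheel $W_j$ would survive in $G-S$. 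Finally, since $|X|=\mathcal{O}(\mathrm{poly}(n))$, this is a valid PPT.

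The main obstacle I expect is the design and verification of the clause gadget under two competing constraints: (i) the vertex cover must remain polynomial in $n$ even though there may be $\Omega(m)$ clauses, which forces nearly all gadget vertices outside $X$ and hence pairwise non-adjacent, severely limiting how rims and hubs can be realized and forcing witness vertices to be reused across many clause gadgets; and (ii) $W_4$ is \emph{allowed}, so the gadget must realize a forbidden wheel of a carefully chosen length and must not, when combined with the selection gadgets or with other clause gadgets sharing witness vertices, accidentally induce any forbidden wheel. This ``no spurious wheel'' verification, together with the treatment of clauses of size $1$ and $2$ (which do not map directly onto a forbidden rim length and need either a small separate sub-gadget or a preprocessing step that may assume clauses have size at least $3$), is where the bulk of the technical work lies.
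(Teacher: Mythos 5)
Your high-level strategy coincides with the paper's: a PPT from \textsc{CNF-SAT} parameterized by the number of variables, variable-selection gadgets built from many copies of the allowed-to-be-destroyed-only-one-way forbidden wheel $W_3$ sharing just the two literal vertices (so a budget of $k=n$ forces exactly one of the two to be deleted per variable), and clause checks realized by forbidden wheels that survive exactly when a clause is falsified. However, your clause gadget, as described, does not work, and the point where it fails is precisely the obstacle you flag but do not resolve. A wheel's hub must be adjacent to every rim vertex; since your rim alternates witness vertices (in $X$) with vertices private to the clause that you place outside $X$, the hub cannot also lie outside $X$ (the hub--private-rim edges would have both endpoints outside the cover), so you put it inside $X$. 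But the hub is private to its clause, and there can be far more than $\mathrm{poly}(n)$ clauses, so $|X|$ is no longer bounded by a polynomial in $n$ and the reduction is not a PPT. There is no easy fix within your gadget shape: any per-clause vertex that must be adjacent to a per-clause vertex outside $X$ has to go into $X$ itself.

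The paper's construction avoids this by making the rim of the clause-checking wheel consist \emph{entirely} of vertex-cover vertices: the $2n$ literal vertices are wired so that consecutive variable pairs are completely joined, hence after deleting one literal per variable the surviving literal vertices form an induced cycle of length $n\geq 5$. Each clause then contributes a single vertex \emph{outside} the cover, adjacent to the literal vertices of its own literals and to both literal vertices of every non-occurring variable; this vertex becomes the hub of a forbidden $W_n$ exactly when the clause is falsified, all its neighbours lie in $X$, and the clause vertices form an independent set, so $|X|=\mathcal{O}(n^2)$ regardless of $m$. This design also sidesteps your worries about clauses of size $1$ or $2$ (the rim always has length $n\geq 5$, with $n\leq 4$ handled by brute force) and makes the ``no spurious forbidden wheel'' analysis tractable: in the satisfiable direction one only has to check that no surviving vertex can be the centre of a wheel of size $3$ or at least $5$, using that clause vertices are independent, the $u,w$-gadget vertices have degree $2$, and any wheel centred at a surviving literal vertex is forced to be the permitted $W_4$. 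That verification, which you defer, is the remaining bulk of the paper's proof, so as it stands your proposal is missing both the key structural idea for the clause gadget and the correctness analysis.
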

\begin{proof}
We give a polynomial parameter transformation (PPT) from the \textsc{CNF-SAT} problem parameterized by the number of variables. Consider an instance $\phi$ of the \textsc{CNF-SAT} problem with variables $\{x_1,...,x_n\}$ and clauses $\{C_1,...,C_m\}$. In the case that $n \leq 4$, try all possible truth assignments in constant time. If $\phi$ is satisfiable, return an empty graph with an empty vertex cover, which is obviously almost-wheel free. If $\phi$ is not satisfiable, return $G=W_3$ with vertex cover $X = V(G)$ and budget $k=0$. The vertex cover is bounded by a polynomial of $n$ and no vertex can be deleted to make the graph almost wheel-free. In the remainder we consider $n \geq 5$. We construct an instance $(G,X,k)$ of \textsc{Almost Wheel-free Deletion} with vertex cover $X$ and solution size $k$ as follows.
\begin{enumerate}
    \item Starting with an empty graph $G$, for $i \in [n]$, add vertices $v_{x_i}$ and $v_{\neg x_i}$ to $G$. Connect them with an edge.
    \item For $i \in [n]$ add edges $\{v_{x_i},v_{x_{i+1}}\}$, $\{v_{x_i},v_{\neg x_{i+1}}\}$, $\{v_{\neg x_i},v_{x_{i+1}}\}$ and $\{v_{\neg x_i},v_{\neg x_{i+1}}\}$. Here $i+1 = 1$ if $i = n$.
    \item For $i \in [n]$, $j \in [n+1]$, add vertices $u_i^j$ and $w_i^j$. Add edges $\{u_i^j,w_i^j\}$, $\{u_i^j,v_{x_i}\}$, $\{u_i^j,v_{\neg x_i}\}$, $\{w_i^j,v_{x_i}\}$, and $\{w_i^j,v_{\neg x_i}\}$.
    \item Finally consider clause $C_i$, $i \in [m]$. Add vertex $c_i$ to $G$. For every literal $\ell \in C_i$, connect $c_i$ to $v_{\ell}$. Here we assume no clause contains both $x_q$ and $\neg x_q$ for $q \in [n]$. Such clauses are trivially satisfied and can be removed in polynomial time without changing the problem. For every variable $x_q$, $q \in [n]$, that does not correspond to a literal in $C_i$, connect $c_i$ to both $v_{x_q}$ and $v_{\neg x_q}$.
\end{enumerate}
This concludes the construction of $G$. Figure~\ref{fig:cnf2almostwheelfree} shows a sketch of this construction for the clause $x_1 \vee \neg x_2 \vee x_4$. Assign the budget $k=n$.

\begin{figure}
    \centering
    \includegraphics{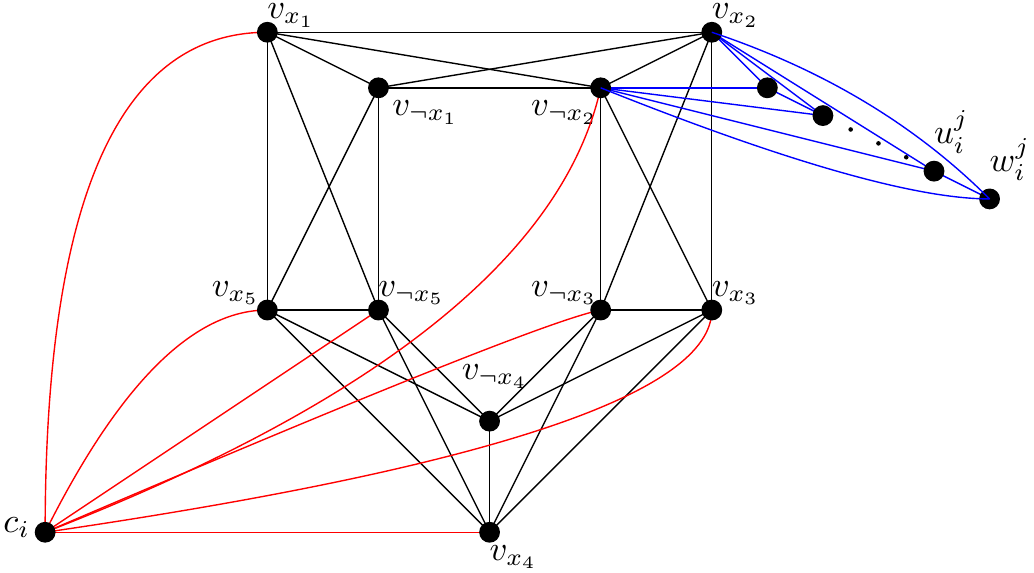}
    \caption{Sketch of transformation for $n=5$ variables. For simplicity the $u$ and $w$ vertices are only (partially) drawn for $x_2$. The clause vertex $c_i$ corresponds to the clause $x_1 \vee \neg x_2 \vee x_4$.}
    \label{fig:cnf2almostwheelfree}
\end{figure}
\begin{note*}
    Vertices added in steps 1-3 form a vertex cover of size $2n + 2n(n+1) = 2n^2 + 4n$.
\end{note*}

Let $X$ be the vertices added in steps 1-3 above. Since $|X| \leq p(n)$ for some polynomial $p(\cdot)$, all that is left to show is the equivalence between the original instance and the vertex-deletion problem $(G,X,k)$. Let $\mathcal{F}$ be the set of almost wheel-free graphs. We show that there is a set $S \subseteq V(G)$ of size at most $k$ such that $G-S$ belongs to $\mathcal{F}$ if and only if the \textsc{CNF-SAT} instance $\phi$ is satisfiable.
\par
($\Rightarrow)$ Suppose $(G,X,k)$ is a \textsc{yes}-instance with solution $S$ of size at most $k$ such that $G-S$ belongs to $\mathcal{F}$.
\begin{claim}
For $i \in [n]$, $S$ contains exactly one of $v_{x_i}$ and $v_{\neg x_i}$.
\end{claim}
\begin{claimproof}
Consider some $i \in [n]$. Suppose $S$ contains neither $v_{x_i}$ nor $v_{\neg x_i}$. For $j \in [n+1]$, the set of vertices $\{v_{x_i},v_{\neg x_i},u_i^j,w_i^j\}$ induce a $W_3$. Since $|S| \leq k$, it follows that $G - S$ would still contain a $W_3$ which contradicts the choice of $S$. It follows that $S$ contains at least one of $v_{x_i}$ and $v_{\neg x_i}$. Since the budget $k=n$, $S$ must contain exactly one of $v_{x_i}$ and $v_{\neg x_i}$.
\end{claimproof}
Consider the following truth assignment $\delta : \{x_1,...,x_n\} \rightarrow \{0,1\}$ such that $\delta(x_i) = 1$ if $v_{x_i} \in S$, and $\delta(x_i) = 0$ if $v_{\neg x_i} \in S$. This is well defined by the claim above.
\begin{claim}
$\delta$ satisfies $\phi$.
\end{claim}
\begin{claimproof}
Suppose $\delta$ does not satisfy $\phi$, then there exists some $i \in [m]$ such that $\delta$ does not satisfy clause $C_i$. Consider the set of vertices $S' = \{v_{\neg \ell_p} \mid v_{\ell_p} \in S, p \in [n], (\ell_p = x_p \vee \ell_p = \neg x_p) \}$. These are the vertices that correspond to the inverse of the truth assignment. The vertices of $S'$ are contained in $G - S$ and induce a cycle of length $n \geq 5$. Since $C_i$ is not satisfied, for each literal $\ell \in C_i$ we have $v_\ell \in S'$. This means vertex $c_i$ is adjacent to vertex $v_\ell$. Now for every variable $x_p$, $p \in [n]$ without a literal in $C_i$, $c_i$ is adjacent to both $v_{x_p}$ and $v_{\neg x_p}$. It follows that $c_i$ is adjacent to every vertex in $S'$, Hence $S' \cup \{c_i\}$ induces a wheel $W_n$. Since $n \geq 5$, this wheel is not of size 4 and hence forbidden, which contradicts the choice of $S$. It follows that $\delta$ satisfies $\phi$.
\end{claimproof}

($\Leftarrow$) In the other direction, assume that there exists a truth assignment $\delta : \{x_1,...,x_n\} \rightarrow \{0,1\}$ that satisfies $\phi$. Let $S = \{v_{\neg x_i} \mid \delta(x_i) = 0, i \in [n] \} \cup \{v_{ x_i} \mid \delta(x_i) = 1, i \in [n] \}$. Obviously $|S| = n$. We show that $G - S$ belongs to $\mathcal{F}$. Let $S' = \{v_{\neg x_i} \mid \delta(x_i) = 1, i \in [n] \} \cup \{v_{ x_i} \mid \delta(x_i) = 0, i \in [n] \}$, which are the vertices that correspond to the inverse of the truth assignment. Again $S'$ forms an induced cycle. $G-S$ is a graph consisting of $S'$ with an independent set of clause vertices $c_i$ for $i\in [m]$ adjacent to vertices in $S'$. Finally each vertex $v_i \in S'$ is part of $n+1$ edge disjoint triangles $\{v_i,u_i^j,w_i^j\}$ for $j \in [n+1]$. Now that we know the structure of $G-S$, we do a case distinction on vertex $v \in V(G-S)$. We show that $v$ cannot be a center vertex of a wheel of size 3 or at least 5.
\begin{itemize}
    \item Case $v \in S'$. Since $S'$ is an induced cycle, $v$ has two neighbors in $S'$, say $x$ and $y$, that are not adjacent. Since the $u$ and $w$ vertices have degree 2 in $G-S$, they cannot be used to create an induced cycle around $v$. Since clause vertices $c_i$, for $i \in [m]$, form an independent set, they cannot appear adjacent in an induced cycle around $v$. Hence the only wheel that could exist around $v$ is $\{v,x,y,c_i,c_j\}$ for some $i,j \in [m]$ such that $c_i$ and $c_j$ are adjacent to $v$, $x$, and $y$, but this is $W_4$ which is not forbidden.
    \item Case $v = c_i$, for some $i \in [m]$. Since $c_i$ only has neighbors in $S'$, and $S'$ is an induced cycle $C_n$, the only way $c_i$ can be a center vertex of an induced wheel is if it is adjacent to every vertex in $S'$. Suppose that $c_i$ is adjacent to every vertex in $S'$, then no literal in clause $C_i$ is satisfied by $\delta$, which contradicts the choice of $\delta$. Therefore $c_i$ cannot be a center vertex of a wheel.
    \item Case $v = u_i^j$ or $v = w_i^j$ for some $i\in [n]$, $j \in [n+1]$. Since $v$ only has two neighbors, it cannot be a center vertex of a wheel.
\end{itemize}
Since $G-S$ does not contain a vertex that could be a center of some wheel $W_p$ for $p = 3$ or $p \geq 5$, it follows that $G-S$ must belong to $\mathcal{F}$. Since we have shown correctness of the PPT, the result follows from Theorem~\ref{theorem:ppt}.
\end{proof}

Let $\Pi_{W_{\neq 4}}$ be the set of graphs that contain a wheel of size 3 or at least 5 as induced subgraph.

\subsubsection{Proof of Theorem \ref{thm:awfnorankc}}

\begin{theorem*}
$\Pi_{W_{\neq 4}}$ is not characterized by rank-$c$ adjacencies for any $c \in \mathbb{N}$. 
\end{theorem*}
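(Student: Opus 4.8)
The goal is to show that no finite rank $c$ suffices to characterize $\Pi_{W_{\neq 4}}$ — the property of containing a wheel $W_n$ with $n=3$ or $n \geq 5$. The natural strategy, mirroring the lower-bound gadget already used for Theorem~\ref{thm:almostwheelfreenopolykernel}, is to construct for each $c \in \mathbb{N}$ a graph $H$ with vertex cover $X$, a set $D \subseteq V(H) \setminus X$, and a vertex $v \in V(H) \setminus (D \cup X)$ witnessing the failure of \cref{def:rankc}: that is, $H - D \in \Pi_{W_{\neq 4}}$ and $\incs{c}{H}{X}(v) = \sum_{u \in D} \incs{c}{H}{X}(u)$ over $\mathbb{F}_2$, yet for \emph{every} subset $D' \subseteq D$ the graph $H - v - (D \setminus D') \notin \Pi_{W_{\neq 4}}$.

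\textbf{The construction.} First I would take $v$ to be the \emph{center} of a large odd (or $\geq 5$) wheel. Concretely, fix a large cycle $C = (b_1, \dots, b_m)$ with $m$ odd and $m \geq 5$, put the $b_i$ into the vertex cover $X$, and let $v$ be adjacent to all of $b_1,\dots,b_m$, so that $G[\{v\} \cup V(C)] = W_m \in \Pi_{W_{\neq 4}}$. The key point is that removing $v$ destroys this wheel, and I must choose the set $D$ of "candidate replacements" so that no matter which subset $D'$ I add back, the resulting graph contains no forbidden wheel at all — in particular every would-be center sees only a $C_4$-worth of rim (or fewer), giving at most $W_4$. The cleanest way is to make each $u \in D$ adjacent to exactly two \emph{adjacent} vertices of $C$ (so $u$ together with the cycle induces only short wheels, namely $W_3$ on a triangle — which must be \emph{avoided}, so actually I want $u$ adjacent to exactly two \emph{non-adjacent} rim vertices or to spread its adjacencies so that no induced wheel of the wrong size is created), and more importantly to arrange the adjacencies of $D$ to $X$ so that their $c$-incidence vectors XOR to that of $v$ while no single combination recreates a large wheel. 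The freedom needed is exactly that the number of $b_i$'s and the number of vertices in $D$ can be taken $\gg c$, so that the all-ones pattern "$v$ adjacent to everything in $C$" lies in the span of sparse patterns that individually (and in any subset-XOR) avoid inducing $W_m$ for $m \in \{3\} \cup \{5,6,\dots\}$.

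\textbf{Making the vector equation hold.} The decisive step is the linear algebra. Each entry of $\incs{c}{H}{X}(u)$ is indexed by a pair $(Q,R)$ with $|Q|+|R| \le c$; since any relevant $R$ that is nonempty and not contained in $N(u)$ gives a $0$ entry, the "interesting" coordinates only look at bounded-size subsets of $X$. I would design $D$ so that: (i) for the genuinely constraining coordinates (those where $v$'s entry is $1$, i.e. $v$ is adjacent to all of $R$ and to none of $Q$), an \emph{odd} number of $u \in D$ also have a $1$; and (ii) for coordinates where $v$'s entry is $0$, an even number do. Concretely, taking $D$ to consist of vertices each adjacent to a carefully chosen pair of rim vertices (and to all the auxiliary cover vertices that $v$ is adjacent to), with the pairs chosen from a combinatorial design / parity-covering family over $[m]$, one can force the XOR of the $D$-vectors to equal $v$'s vector: since $v$ is adjacent to all rim vertices, its characteristic condition "$R \subseteq N(v)$" holds for every $R$, whereas a single $u$ fails it for most $R$; balancing these with enough vertices in $D$ (size polynomial in $m$, which we take much larger than $c$) is a finite linear system that is solvable precisely because $m > c$ gives enough "room" — the rim-adjacency patterns of size-$2$ neighborhoods span, under $\mathbb{F}_2$-sums restricted to bounded coordinates, the all-ones pattern.

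\textbf{Verifying no valid replacement exists.} Finally, for every $D' \subseteq D$, I must check $H - v - (D \setminus D') \notin \Pi_{W_{\neq 4}}$, i.e. it contains no $W_3$ and no $W_p$ with $p \ge 5$. By construction the only vertices that could serve as a wheel center are those in $D'$ and the rim/cover vertices; since each $u \in D'$ has only two rim neighbors and these are chosen non-adjacent (so no $W_3$) and there are not enough of them to build a long induced cycle through $u$'s neighborhood, no forbidden wheel arises — at worst two vertices of $D'$ with two common rim neighbors give a $W_4$, which is explicitly allowed. This case analysis, analogous to the one in the proof of \cref{thm:almostwheelfreenopolykernel}, is routine but must be done carefully for the auxiliary vertices too.

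\textbf{Main obstacle.} The crux — and the step I expect to be most delicate — is engineering the set $D$ so that the $\mathbb{F}_2$ vector equation $\incs{c}{H}{X}(v) = \sum_{u\in D}\incs{c}{H}{X}(u)$ holds \emph{simultaneously} with the combinatorial non-replaceability. These pull in opposite directions: the vector equation wants the $D$-vertices to "look like" $v$ in aggregate (hence richly connected to $X$), while non-replaceability wants every subset of $D$ to be structurally impoverished (hence sparsely connected to the rim). Reconciling them requires exploiting that the incidence vectors only see bounded-size tuples $(Q,R)$, so "aggregate agreement on all $c$-bounded coordinates" is far weaker than "aggregate agreement as neighborhoods"; one must verify that the chosen sparse family of rim-pairs, when XORed over any sub-collection, still cannot fool a bounded coordinate into producing a large wheel, while the full collection's XOR does match $v$ on every bounded coordinate. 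Getting the parity bookkeeping right across all $(Q,R)$ with $|Q|+|R| \le c$, for $c$ arbitrary, is where the real work lies.
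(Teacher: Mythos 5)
There is a genuine gap, and it sits exactly where you flagged the difficulty. Your concrete proposal is to make every $u \in D$ adjacent to only \emph{two} rim vertices and to hope that a "parity-covering family" of such pairs makes the $\mathbb{F}_2$ identity $\incs{c}{H}{X}(v) = \sum_{u\in D}\incs{c}{H}{X}(u)$ hold. This cannot work for any $c \geq 3$: take the coordinate $(Q,R)$ with $Q=\emptyset$ and $R$ any set of three rim vertices. Since $v$ is adjacent to the whole rim, $\incs{c}{H}{X}(v)[\emptyset,R]=1$, so an odd number of vertices of $D$ must be adjacent to all three vertices of $R$ simultaneously; but if every $u\in D$ has only two rim neighbours, every such entry sums to $0$. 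So sparse replacement vertices are structurally incapable of matching $v$ on the higher-order coordinates, and your claim that size-$2$ rim-neighbourhoods "span the all-ones pattern" on bounded coordinates is false. The unspecified design cannot be filled in along these lines, so the proof as proposed does not go through.

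The paper resolves the tension you identified in the opposite way: the vertices of $D$ are made \emph{dense} on the rim, not sparse. With $n = 2^i-1$ rim vertices forming an induced cycle and $q = 2^{i-1}-1 > c$, one puts into $D$ a vertex $d_Y$ for \emph{every} $q$-subset $Y$ of the rim, adjacent exactly to $Y$. Then for each coordinate $(Q,R)$ with $|Q|+|R|\leq c$, the number of $D$-vertices with a $1$ there is $\binom{n-|Q|-|R|}{q-|R|}$, which by Lucas' theorem is odd precisely when $Q=\emptyset$ — matching $v$, which is adjacent to the entire rim. Non-replaceability then comes for free, without any sparsity: each $d_Y$ has its whole neighbourhood inside the induced cycle, hence an acyclic neighbourhood, so it can never be a wheel centre; and each rim vertex $v_j$ has (after deleting $v$) a neighbourhood with vertex cover $\{v_{j-1},v_{j+1}\}$ of size two and no triangle (as $D$ is independent and $v_{j-1}v_{j+1}$ is a non-edge), so it can centre at most a $W_4$, which is permitted. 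The lesson is that avoiding forbidden wheels does not require the replacement vertices to be "structurally impoverished"; it only requires their neighbourhoods to live inside the induced rim cycle, which is compatible with the rich adjacencies that the vector identity forces.
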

\begin{proof}
Suppose for the sake of contradiction that $\Pi_{W_{\neq 4}}$ is characterized by rank-$c$ adjacencies for some $c \in \mathbb{N}$. Construct graph $H$ as follows. Let $i$ be the smallest index such that $q = 2^{i-1}-1 > c$. Vertex cover $X = (v_1,...,v_{n})$ of $H$ is an induced cycle of length $n=2^i-1$. Furthermore $H$ contains a vertex $v \notin X$ that is adjacent to $v_j$ for all $j \in [n]$. Finally $H$ contains an independent set $D$ disjoint from $X \cup \{v\}$. 
The adjacencies of $D$ to $X$ are defined as follows.
Start with an empty set $D$. For $Y \in \binom{[n]}{q}$, add a vertex $d_Y$ to $D$ that is adjacent to $v_j$ if $j \in Y$. 

By Lucas' Theorem~\cite{Fine1947}, we have that $\binom{m}{k}$ is divisible by prime 2, if at least one of the base 2 digits of $k$ is larger than the respective base 2 digit of $m$. Since the binary representation of $n$ and $q$ consists only of ones, we have that $\binom{n-j}{q-j}$ is odd for all $0 \leq j \leq q$. Furthermore, for all $0 \leq j < k \leq q$, we have that $\binom{n-k}{q-j}$ is even.

For disjoint $Q,R \subseteq X$ such that $|Q| + |R| \leq c$, there are $m_{Q,R} = \binom{n-|R|-|Q|}{q-|R|}$ vertices in $D$ adjacent to each vertex in $R$ and to none in $Q$. By our choices of $n$ and $q$, we have that $m_{Q,R}$ is odd if and only if $|Q| = 0$. 

Clearly $H-D$ induces a wheel of size not equal 4 with center~$v$ and cycle $(v_1, \ldots, v_n)$. 

\begin{claim}
For disjoint $Q,R \subseteq X$ such that $|Q| + |R| \leq c$, we have $\incs{c}{H}{X}(v)[Q,R] = \sum_{u \in D} \incs{c}{H}{X}(u)[Q,R]$.
\end{claim}
\begin{claimproof}
We distinguish two cases, depending on whether~$Q$ is empty or not. Note that by construction of~$H$, there are exactly $m_{Q,R}$ vertices of~$D$ which are adjacent to all of~$R$ and none of~$Q$, so there are exactly~$m_{Q,R}$ vertices of~$u \in D$ for which~$\incs{c}{H}{X}(u)[Q,R] = 1$.

If~$Q \neq \emptyset$, then $\incs{c}{H}{X}(v)[Q,R] = 0$ since~$v$ is adjacent to all vertices of~$X$ and therefore to a member of~$Q$. Since~$|Q| > 0$, the value~$m_{Q,R}$ is even as observed above, and therefore zero over $\mathbb{F}_2$. The claim follows.

If~$Q = \emptyset$, then $\incs{c}{H}{X}(v)[Q,R] = 1$ since~$v$ is adjacent to all vertices of~$X$ and therefore to all members of~$R$ and none of~$Q$. As~$|Q| = 0$, the value~$m_{Q,R}$ is odd as observed above, so $\sum_{u \in D} \incs{c}{H}{X}(u)[Q,R] = 1$.
\end{claimproof}

The claim above shows that $\incs{c}{H}{X}(v) = \sum_{u \in D} \incs{c}{H}{X}(u)$. From our assumption that~$\Pi_{W \neq 4}$ is characterized by $c$ adjacencies, it follows that there exists $D' \subseteq D$ such that~$H - v - (D \setminus D') \in \Pi_{W \neq 4}$ and therefore contains a wheel of size unequal to 4. To derive the desired contradiction, we therefore argue that~$H - v$ contains no such wheel. Observe that no vertex~$v \in D$ can be used as the center of a wheel, since~$N_H(v)$ is a proper subset of the induced cycle~$(v_1, \ldots, v_n)$ and therefore acyclic; we use here that~$n > q > c$. Finally for each $j\in [n]$, $v_j$ can only be the center of a wheel of size 4: the graph~$N_H(v_j)$ has a vertex cover~$\{v_{j-1},v_{j+1}\}$ of size two and therefore cannot contain a cycle of length five or more. Furthermore,~$N_H(v_j)$ does not contain a triangle since~$v_{j-1},v_{j+1}$ are non-adjacent and~$D$ is an independent set. Hence there exists no~$D' \subseteq D$ for which~$H - v - (D \setminus D') \in \Pi_{W\neq 4}$, a contradiction.
%
%Since no vertex in $D$ is adjacent to every vertex in $X$, we have that for any $D' \subseteq D$, $H-v-(D\setminus D') \notin \Pi_{W_{\neq 4}}$.
%For an arbitrary $D' \subseteq D$, we show that $H-v-(D\setminus D') \notin \Pi_{AWF}$. Clearly no vertex in $D'$ can be a center vertex of a wheel, since it is not adjacent to all $v_i$ for $i\in [n]$ and not adjacent to any other vertex in $D'$. Finally for each $i\in [n]$, $v_i$ can only be the center of a wheel of size 4.  
\end{proof}

\end{document}